\def\UIG{\mathrm{UIG}}
\def\APUD{\mathrm{APUD}}
\def\UU{\mathcal{U}}
\def\XX{\mathcal{X}}
\def\YY{\mathcal{Y}}
\def\WW{\mathcal{W}}
\newtheorem{open}[theorem]{Open problem}
\newtheorem{corl}[theorem]{Corollary}
\newtheorem{rem}[theorem]{Remark}
\newtheorem{lem}[theorem]{Lemma}
\newtheorem{clm}[theorem]{Claim}
\newtheorem{thm}[theorem]{Theorem}
\newtheorem{conj}[theorem]{Conjecture}
\newcommand\footnoteref[1]{\protected@xdef\@thefnmark{\ref{#1}}\@footnotemark}
\def\centerarc[#1](#2)(#3:#4:#5)
\def\grd@save@target#1{%
	\def\grd@target{#1}}
\def\grd@save@start#1{%
	\def\grd@start{#1}}
\tikzset{
	grid with coordinates/.style={
		to path={%
			\pgfextra{%
				\edef\grd@@target{(\tikztotarget)}%
				\tikz@scan@one@point\grd@save@target\grd@@target\relax
				\edef\grd@@start{(\tikztostart)}%
				\tikz@scan@one@point\grd@save@start\grd@@start\relax
				\draw[minor help lines] (\tikztostart) grid (\tikztotarget);
				\draw[major help lines] (\tikztostart) grid (\tikztotarget);
				\grd@start
				\pgfmathsetmacro{\grd@xa}{\the\pgf@x/1cm}
				\pgfmathsetmacro{\grd@ya}{\the\pgf@y/1cm}
				\grd@target
				\pgfmathsetmacro{\grd@xb}{\the\pgf@x/1cm}
				\pgfmathsetmacro{\grd@yb}{\the\pgf@y/1cm}
				\pgfmathsetmacro{\grd@xc}{\grd@xa + \pgfkeysvalueof{/tikz/grid with coordinates/major step}}
				\pgfmathsetmacro{\grd@yc}{\grd@ya + \pgfkeysvalueof{/tikz/grid with coordinates/major step}}
				\foreach \x in {\grd@xa,...,\grd@xb}
				\node[anchor=north] at (\x,\grd@ya) {\tiny \pgfmathprintnumber{\x}};
				\foreach \y in {\grd@ya,...,\grd@yb}
				\node[anchor=east] at (\grd@xa,\y) {\tiny \pgfmathprintnumber{\y}};
			}
		}
	},
	minor help lines/.style={
		help lines,
		step=\pgfkeysvalueof{/tikz/grid with coordinates/minor step}
	},
	major help lines/.style={
		help lines,
		line width=\pgfkeysvalueof{/tikz/grid with coordinates/major line width},
		step=\pgfkeysvalueof{/tikz/grid with coordinates/major step}
	},
	grid with coordinates/.cd,
	minor step/.initial=1,
	major step/.initial=1,
	major line width/.initial=0.7pt,
}
\newif\ifistoreview
\newcommand{\setreviewsoff}{\istoreviewfalse}
\newcommand{\alertColor}{\textcolor{red}}
\newcommand{\addColor}{\textcolor{blue}}
\newcommand{\remove}[1]{\ifistoreview\alertColor{\sout{#1}}\else {}\fi}
\newcommand{\add}[1]{\ifistoreview\addColor{#1}\else #1\fi}
\newcommand{\replace}[2]{\ifistoreview\remove{#1}~\add{#2}\else #2\fi}
\title{APUD(1,1) Recognition in Polynomial Time}
\author{Deniz A\u{g}ao\u{g}lu \c{C}a\u{g}{\i}r{\i}c{\i}\inst{1}\orcidID{0000-0002-1691-0434}%
	\thanks{This author was supported by the Czech Science Foundation, project no.~20-04567S.}
	\and \\
	Onur \c{C}a\u{g}{\i}r{\i}c{\i}\inst{2}\orcidID{0000-0002-4785-7496}}
\authorrunning{D.~A\u{g}ao\u{g}lu~\c{C}a\u{g}{\i}r{\i}c{\i} and O.~\c{C}a\u{g}{\i}r{\i}c{\i}}
\institute{
	Masaryk University, Brno, Czech Republic
	\email{agaoglu@mail.muni.cz}\\
	\and
	Toronto Metropolitan University, Toronto, Canada
	\email{cagirici@ryerson.ca}}
\begin{document}
	\maketitle

		\begin{abstract}
			
			A \emph{unit disk graph} is the intersection graph of a set of disk of unit radius in the Euclidean plane. In 1998, Breu and Kirkpatrick showed that the recognition problem for unit disk graphs is NP-hard. Given $k$ horizontal and $m$ vertical lines, an \emph{APUD(k,m)} is a unit disk graph such that each unit disk is centered either on a given horizontal or vertical line. \c{C}a\u{g}{\i}r{\i}c{\i} showed in 2020 that APUD($k,m$) recognition is NP-hard. In this paper, we show that APUD($1,1$) recognition is polynomial time solvable.
			
			\keywords{Axes-parallel unit disk graphs \and
				unit disk graphs \and 
				graph recognition \and 
				embeddability \and
				polynomial time algorithm \and
				Helly clique.}
			
		\end{abstract}



	\section{Introduction}\label{sec:intro}
	
	\emph{Unit disk graphs} are the intersection graphs of a set of equal radius disks. Specifically, given a set $\mathcal{D} = \{ d_1, \dots, d_n\}$ of $n$ unit disks in the Euclidean plane, the corresponding unit disk graph $G=(V,E)$ has a vertex $v \in V(G)$ for each disk $d_v$, and there exists an edge $uv \in E(G)$ if and only if $d_u$ and $d_v$ intersect on the plane.
	In this paper, we study the recognition problem on the unit disk graph class \cite{udgRecognition}.
	The recognition problem for unit disk graph is a well-studied problem with various parameterizations and approximation algorithms \cite{udgOptimization,udgParameterized, udgApprox}.
	In general, the recognition problem is $\exists\mathbb{R}$-complete \cite{sphereAndDotProduct}.
	We study the unit disk graph recognition problem with restricted domain for the disk centers.
	Specifically, we limit the positions of disk centers onto pre-given straight lines in the Euclidean plane.
	The graphs those can be realized onto pre-given axes-parallel straight lines is called axes parallel unit disk graphs.
	This graph class is denoted by $\APUD(k,m)$ \cite{CSRapud}, where $k$ is the number of lines that are parallel to $x$-axis, and $m$ is the number of lines that are parallel to $y$-axis.
	Recently, \c{C}a\u{g}{\i}r{\i}c{\i} has shown that the problem becomes NP-complete when the solution domain for the disk centers are restricted to be on a set of pre-given parallel lines \cite{CSRapud}.
	They also left the following interesting problem open, which we consider in this paper.
	
	\begin{open}\label{op1}
		Can we decide whether an input graph $G$ is an $\APUD(1,1)$ in polynomial time?
	\end{open}
	
	This open problem essentially asks whether we can recognize a graph that can be realized as the intersection graph of unit disks such that the center of each disk is on one of two perpendicular lines in polynomial time.
	We answer this question positively and give a polynomial time algorithm to recognize an $\APUD(1,1)$.
	
	\section{Preliminaries}
	In this section, we give the necessary definitions and the notations that we use throughout the manuscript.
	An \emph{intersection graph} is a graph $G = (V,E)$ where each $u \in V(G)$ represents a geometric entity, and there exists an edge $uv \in E(G)$ iff the pair of geometric entities which correspond to $u, v \in V(G)$ intersect.
	
	The recognition problem on a geometric graph $G = (V,E)$ is to determine whether there exists a mapping $\Sigma: V(G) \to \mathfrak{U}$ such that all the intersection relations given in $E(G)$ are satisfied, where $\mathfrak{U}$ is the universe (number line, Euclidean plane, unit cube, etc.) in which the geometric entities lie.
		The mapping $\Sigma$ is called an \emph{embedding}, and an embedding of $G$ is denoted by $\Sigma(G)$ which is also referred as a \emph{representation}.
	
	One of the basic intersection graphs is \emph{interval graphs}, which represents the intersection of a set of intervals on the number line ($\mathfrak{U} = \mathbb{R}$).
	When all the intervals are of unit length, then the graph is called a \emph{unit interval graph}, and denoted by $\UIG$. It is known that unit interval graphs can be recognized in linear time~\cite{DBLP:journals/ipl/Keil85}.
	
	Unit interval graphs are a subclass of \emph{chordal graphs}. A \emph{chord} is an edge joining two nonconsecutive vertices of a cycle.
	A graph is called \emph{chordal} if it contains no chordless cycle of length more tan three. A chordal graph has linearly many maximal cliques which can be listed in linear time, and thus a chordal graph can be recognized in linear time \cite{recogChordaLinear}.
	
	A disk graph is the intersection graph of a set of disks in the Euclidean plane ($\mathfrak{U} = \mathbb{R} \times \mathbb{R}$). When a disk $A$ intersects another disk $B$, it also means that $B$ intersects $A$. We write ``$A$ and $B$ intersect'' since the intersection is a symmetric relation.
	In a unit disk graph, all disks have the same radius.
	The recognition problem is NP-hard on disk graphs \cite{dgRecognition} and also on unit disk graphs \cite{udgRecognition}. Unlike chordal graphs, unit disk graphs may have exponentially many maximal cliques \cite{expoMaxCliUDGpublished} which renders the method of listing all the maximal cliques and applying a greedy algorithm inefficient.
	In this paper, we focus on the recognition problem when the disk centers are restricted to be on pre-given axes-parallel straight lines only, then the corresponding graph is called an \emph{axes-parallel unit disk graph} ($\APUD$).
	An instance of $\APUD$ with $k$ horizontal and $m$ vertical lines is denoted by $\APUD(k,m)$.
	Note that, $\APUD(1,0) = \APUD(0,1) = \UIG$ simply because $\mathfrak{U} = \mathbb{R}$, and the fact that disks being two-dimensional does not have any effect on the intersection relations.
	
	A \emph{clique} in a graph $G$ is a subset $\mathcal{Q} \subseteq V(G)$ of vertices (analogously, the corresponding geometric entities) where each pair $u \neq v \in \mathcal{Q}$ of vertices are adjacent. A clique of size $n$ is denoted by $K_n$. A clique $\mathcal{Q}$ is called \emph{maximal} if it can not be extended to a larger clique $\mathcal{Q}' \supsetneq \mathcal{Q}$ by adding new vertices to $\mathcal{Q}$.
	
	Two sets $\UU$ and $\WW$ are called \emph{disjoint} if $\UU \cap \WW = \emptyset$. A \emph{partitioning} of a graph $G=(V,E)$ is to divide $V(G)$ into at least two disjoint sets. A \emph{complete bipartite graph} is a graph $G$ which admits a partitioning into two disjoint subsets $\UU,\WW \subseteq V(G)$ such that there exists an edge $uw \in E(G)$ iff $u \in \UU$ and $w \in \WW$.
	A complete bipartite graph is denoted by $K_{m,n}$ where $n$ is the cardinality of $\UU$ and $m$ is the cardinality of $\WW$. A clique on four vertices with one missing edge $e$ is called a \emph{diamond}, and denoted by $K_4 - e$. 
	
	For a set $\{v_i, \dots, v_j\}=\UU \subseteq V(G)$, the \emph{induced subgraph} of $G$ on $\UU$, denoted by $G[v_i \cup \dots \cup v_j]  = G[\UU]$, is the subgraph of $G$ which consists of all vertices in $\UU$ and all edges appearing in $G$ among the vertices in $\UU$. A \emph{connected component} of a graph $G$ is an induced subgraph of $G$ which is connected and can not be extended to a larger induced subgraph of $G$ by adding new vertices. Therefore, the connected components of $G$ are disjoint. Let $\UU$ and $\WW$ be two disjoint subsets of $G$. Then, the \emph{attachment} of $\UU$ on $\WW$ is the family of neighbors of every vertex $u \in \UU$ on $\WW$. 
	
	A cycle of length $k$ is denoted by $C_k$, and an induced $C_k$ is the chordless cycle of the same length. A \emph{wheel graph} on $k$ vertices, denoted by $W_k$, is a graph which consists of an induced $C_k$ and one universal vertex adjacent to all vertices of that cycle.

	Considering a graph $G \in \APUD(1,1)$, a vertex $v_i \in V(G)$ and an $\APUD(1,1)$ embedding $\Sigma(G)$, we denote the disk corresponding to $v_i$ in $\Sigma(G)$ by $I$ and the center of $I$ in $\Sigma(G)$ by $(x_i,y_i)$. Throughout the text, when we write a set $\mathcal{S}$ of disks is a clique (or any other graph-theoretical structure such as cycle, diamond, etc.), it means that the induced subgraph $G[\bigcup_{I \in \mathcal{S}} v_i]$ is a clique.
	
	We note here that the embedding of an $\APUD(1,1)$ may not be unique since a unit interval graph may have more than one representation. However, in the upcoming sections, it will be enough for us to consider any embedding to obtain the characterizations of $\APUD(1,1)$ to recognize them. This is also since we do not require a representation given on the input.
	
	Without loss of generality, we assume that the two perpendicular lines in an $\APUD(1,1)$ embedding are $x$- and $y$- axes of the Cartesian coordinate system ($\mathfrak{U} = (0, \mathbb{R}) \times (\mathbb{R}, 0)$), and we illustrate them with black dashed lines throughout the paper.
	We say that a disk $A$ has its center in $x^+$ if $A$ is centered on the ray which starts at the origin $(0,0)$ and passes through the point $(1,0)$, i.e., the positive side of the $x$-axis. Analogously, we say that $A$ has its center in $y^+$, $x^-$, and $y^-$ if $A$ is centered on the positive side of the $y$-axis, negative side of the $x$-axis, and negative side of the $y$-axis, respectively. We denote by $\XX^+$, $\YY^+$, $\XX^-$ and $\YY^-$ the sets of disks which have their centers on $x^+$, $y^+$, $x^-$ and $y^-$, respectively. The sets $\XX^+ \cup \XX^-$ and $\YY^+ \cup \YY^-$, i.e., the disks with their centers on $x$- and $y$-axes are denoted by $\XX$ and $\YY$, respectively. We note here that, for $G \in \APUD(1,1)$, $\XX^+$, $\XX^-$, $\YY^+$ and $\YY^-$ (thus, $\XX$ and $\YY$) is a partitioning of $V(G)$ with respect to any $\Sigma(G)$, and if a disk has its center on $(0,0)$, we assume that it belongs to exactly one of the sets $\XX^+$, $\YY^+$, $\XX^-$ and $\YY^-$.
	
	While proving our main claim, which says that whether a graph $G$ can be recognized as an $\APUD(1,1)$ in polynomial time, we use the geometric property called Helly property which is defined over cliques.
		A clique is said to have \emph{Helly property} if a set of entities form a clique, then they all have at least one common point. A clique which satisfies the Helly property in an intersection representation is called a \emph{Helly clique}, and otherwise, a \emph{non-Helly clique}.
	Every unit interval graph has a representation that satisfies the Helly property.

	\section{Some properties of $\boldsymbol{\APUD(1,1)}$} \label{sec:property}
	
	In this section, we give some simpler properties of an $\APUD(1,1)$. We first introduce the characterization given in \cite{CSRapud}.
	
	\begin{corl}[by combining Lemmas~5, ~7 and~8 in \cite{CSRapud}]\label{corl:combine}
		If $G=(V,E)$ is a connected $\APUD(1,1)$, then the following hold:
		\begin{enumerate}
			\item [\textbf{A1.}] The length of the largest induced cycle of $G$ is at most $4$.
			
			\item [\textbf{A2.}] $V(G)$ can be partitioned into four sets such that the union of any two of them induces a unit interval subgraph of $G$.
			
			\item [\textbf{A3.}] Given two $4$-cycles $(A,B,C,D)$ and $(U,V,W,X)$ both are counter-clockwise ordered sets in $\Sigma(G)$, each one of the sets $\{A,B,U,V\}$, $\{B,C,V,W\}$, $\{C,D,W,X\}$, and $\{D,A,X,U\}$ forms either a $K_4$ or an induced diamond.
		\end{enumerate}
	\end{corl}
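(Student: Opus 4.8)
The plan is to prove each of the three assertions A1--A3 separately --- they are, in this order, the graph-theoretic content of Lemmas~5,~7 and~8 of \cite{CSRapud}, so the corollary is just their conjunction. Throughout I would fix an $\APUD(1,1)$ embedding $\Sigma(G)$, work with the induced partition $V(G)=\XX^+\cup\YY^+\cup\XX^-\cup\YY^-$, and lean on two elementary facts: (i) a set of disks whose centers lie on a single line induces a unit interval graph, hence a chordal graph; and (ii) a disk centered at $(x,0)$ meets a disk centered at $(0,y)$ exactly when $x^2+y^2\le 4$. For A1, I would assume $G$ has a chordless cycle $C=(v_1,\dots,v_k)$ with $k\ge 5$ and derive a contradiction. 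By (i), $C$ cannot lie on a single axis, so it decomposes cyclically into maximal runs that alternate between $\XX$ and $\YY$. Each $\XX$-run is an induced path in the proper interval graph $G[\XX]$, hence monotone in the $x$-coordinate; its two endpoints are adjacent to $\YY$-disks, so by (ii) they have $|x|\le 2$, and monotonicity then traps the whole run in $x\in[-2,2]$ (symmetrically for $\YY$-runs). Thus every disk of $C$ sits on an axis within distance $2$ of the origin. But an interior vertex of a monotone run has strictly smaller $|x|$ than the run's endpoints, while it must be \emph{non}-adjacent to some $\YY$-disk that an endpoint \emph{is} adjacent to; the two instances of (ii) then force $x_{\mathrm{int}}^2>x_{\mathrm{end}}^2$, a contradiction. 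A short case analysis over the possible alternation patterns of $C$ completes A1.

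For A2, I would take the partition $\XX^+,\XX^-,\YY^+,\YY^-$. The unions $\XX=\XX^+\cup\XX^-$ and $\YY=\YY^+\cup\YY^-$ are unit interval graphs by (i). For a mixed union such as $\XX^+\cup\YY^+$, the key observation is that the cross-adjacencies are nested: by (ii), if $A,A'\in\XX^+$ are centered at $(a,0),(a',0)$ with $a\le a'$, then every disk of $\YY^+$ meeting $A'$ also meets $A$. Combining this staircase structure with the linear orders on $\XX^+$ and on $\YY^+$ by distance to the origin should yield an explicit indifference order on $\XX^+\cup\YY^+$, making it a unit interval graph; the remaining mixed unions are symmetric.

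For A3, I would first put the two induced $4$-cycles into a normal form: each disk of an induced $4$-cycle is adjacent to two disks lying on the perpendicular axis, so by (ii) its center lies within distance $2$ of the origin, and one argues that the four disks occupy the four distinct rays consistently with the counter-clockwise labelling. So for the cycles $(A,B,C,D)$ and $(U,V,W,X)$ I may assume $A,U\in\XX^+$, $B,V\in\YY^+$, $C,W\in\XX^-$, $D,X\in\YY^-$. Two centers on the same ray lie in a common interval of length $2$, so $A\sim U$ and $B\sim V$; and if both $A\not\sim V$ and $B\not\sim U$ held, then adding $x_A^2+y_V^2>4$ and $x_U^2+y_B^2>4$ would contradict the sum of the cycle-edge inequalities $x_A^2+y_B^2\le 4$ and $x_U^2+y_V^2\le 4$. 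Hence $\{A,B,U,V\}$ carries at least the five edges $AB,UV,AU,BV$ together with one of $AV,BU$, so it is a $K_4$ or an induced diamond, and the remaining three quadruples follow by rotating the argument one ray.

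The part I expect to be the real obstacle is the normal form for A3: establishing that the disks of an induced $4$-cycle genuinely land one per ray and near the origin --- ruling out two cycle-disks on a single ray, or a cycle-disk sitting far out along an axis --- and verifying that the counter-clockwise convention matches the two cycles up ray-by-ray. Once that geometric normal form is secured, the edge count reduces to the one-line inequality above, and A1 and A2 are comparatively direct consequences of fact (i).
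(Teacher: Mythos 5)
The paper does not actually prove this corollary: it is imported wholesale as the conjunction of Lemmas~5, 7 and~8 of \cite{CSRapud}, so any self-contained derivation such as yours is by construction a different route. Your sketches of A2 and A3 are essentially sound. For A2, ordering $\XX^+$ by decreasing $x$-coordinate followed by $\YY^+$ by increasing $y$-coordinate does give consecutive closed neighbourhoods --- the nesting of cross-adjacencies you describe is exactly what makes the $\YY^+$-neighbours of an $\XX^+$-disk an initial segment of the $\YY^+$ block --- and consecutive closed neighbourhoods characterise proper (hence unit) interval graphs. For A3, the five-edge count is correct once the normal form is in place, and your summing trick (adding $x_A^2+y_V^2>4$ and $x_U^2+y_B^2>4$ against $x_A^2+y_B^2\le 4$ and $x_U^2+y_V^2\le 4$) is precisely the style of argument the paper later uses in Lemma~\ref{lem:someMandatoryEdges}; the normal form you flag as the obstacle is, for induced $C_4$s, exactly the content of the paper's Lemma~\ref{lem:4cycle}, whose proof relies only on A2, so there is no circularity provided A2 is established first.

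The one genuine gap is in A1. Your interior-vertex argument applies only to axis-runs of length at least three: there the interior disk is non-adjacent to the $\YY$-disks attached to both endpoints, forcing its $|x|$ to exceed both endpoint values and contradicting monotonicity. But an induced $C_5$ can decompose into four runs of sizes $2,1,1,1$, and a $C_6$ into runs of sizes $2,1,2,1$ or $1,1,1,1,1,1$, in which case no run has an interior vertex and the argument is vacuous. These configurations are where the substance of Lemma~5 of \cite{CSRapud} actually lies, and they are not dispatched by the sentence ``a short case analysis over the possible alternation patterns.'' They are excludable --- e.g.\ for the $2,1,1,1$ pattern, the two endpoints $v_1,v_2$ of the long run satisfy $x_{v_1}^2>x_{v_2}^2$ and $x_{v_2}^2>x_{v_1}^2$ by playing each endpoint's $\YY$-neighbour against the other's non-neighbour, and for six singleton runs three pairwise non-adjacent $\XX$-disks would need pairwise distance more than $2$ inside an interval of length $4$ --- but these are different instantiations of the inequality than the one you wrote down, and until they appear explicitly A1 is asserted rather than proved.
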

	
	Considering this characterization, we first give the following two remarks, and then a sequence of statements which we use to recognize $\APUD(1,1)$ in polynomial time.

	\begin{rem}\label{rem:allInducedinPoly}
		By the characterization item  \textbf{A1}, every induced $C_4$ of an $\APUD(1,1)$ can be identified in polynomial time since there are polynomially many 4-tuples of vertices in the size of the input graph.
	\end{rem}

	\begin{lem}\label{lem:4cycle}
		If the set $\{A,B,C,D\}$ \add{of disks} in an $\APUD(1,1)$ forms an induced \replace{$4$-cycle}{$C_4$}, then \add{up to symmetry} $A \in \XX^+$, $B \in \YY^+$, $C \in \XX^-$, and $D \in \YY^-$ hold\remove{up to symmetry}. Moreover, $A$ and $C$ do not intersect, and, $B$ and $D$ do not intersect. 
	\end{lem}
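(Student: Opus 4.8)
The statement should be read with the cyclic order $A,B,C,D$, i.e.\ with edges $AB,BC,CD,DA$ and non‑edges $AC,BD$; then the ``moreover'' part is immediate, since $A,C$ and $B,D$ are non‑adjacent and hence their disks are disjoint in \emph{every} representation. The real content is locating the four disks among the rays. The plan is to fix an arbitrary embedding $\Sigma(G)$ and first control how many of $A,B,C,D$ can lie in $\XX$ and how many in $\YY$, and only afterwards pin down the individual rays.

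Since $G[\XX]$ and $G[\YY]$ are each a $\UIG$ (because $\APUD(1,0)=\UIG$), they are chordal and contain no induced $C_4$, so at most three of the four disks lie in $\XX$ and at most three lie in $\YY$. I would then rule out a $3$--$1$ split. If three of the disks lie in $\XX$, they induce a $P_3$ (any three vertices of a $C_4$ do) with endpoints $P,R$ and middle $Q$; since $|x_P-x_R|>2$ while $|x_P-x_Q|,|x_Q-x_R|\le 2$, the centre of $Q$ lies strictly between those of $P$ and $R$ on the $x$‑axis. The fourth disk $S\in\YY$ is, in the $C_4$, non‑adjacent to exactly $Q$ (the middle of that $P_3$ is the vertex antipodal to $S$). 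But the disks centred at $(t,0)$ that meet $S=(0,y_S)$ are exactly those with $t^2\le 4-y_S^2$, a symmetric interval; it contains $x_P$ and $x_R$, hence also $x_Q$, so $S$ is adjacent to $Q$ — a contradiction. The symmetric argument excludes three disks in $\YY$, so the split is exactly $2$--$2$.

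Next I would show the two disks sharing an axis cannot be an edge of the $C_4$: if, say, $\{A,B\}\subseteq\XX$ and $\{C,D\}\subseteq\YY$, then $A\sim D$ together with $A\not\sim C$ gives $x_A^2+y_D^2\le 4<x_A^2+y_C^2$, i.e.\ $y_D^2<y_C^2$, while $B\sim C$ together with $B\not\sim D$ gives $y_C^2<y_D^2$ — a contradiction; the other three choices of an adjacent $\XX$‑pair are analogous. Hence, rotating $\Sigma(G)$ by $90^\circ$ if the $\XX$‑pair is $\{B,D\}$ rather than $\{A,C\}$, we may assume $\{A,C\}\subseteq\XX$ and $\{B,D\}\subseteq\YY$. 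Finally $A,C$ lie on opposite rays of the $x$‑axis: if both lay in $\XX^+$ then, as $|x_A-x_C|>2$, the one farther from the origin would satisfy $|x|>2$ and thus have squared distance $>4$ to every disk centred on the $y$‑axis, contradicting its adjacency to $B$; likewise $B,D$ lie on opposite rays of the $y$‑axis. Reflecting $\Sigma(G)$ across the $x$‑ and/or $y$‑axis then yields $A\in\XX^+$, $B\in\YY^+$, $C\in\XX^-$, $D\in\YY^-$.

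I expect the main obstacle to be excluding the $3$--$1$ split: this needs the geometric observation that a disk centred on one axis ``reaches'' a symmetric interval of centres on the other axis, combined with the betweenness of the middle vertex of the induced $P_3$. The $2$--$2$ adjacency step and the opposite‑ray step then reduce to short sign comparisons, and the final normalization is just the symmetry group of the two perpendicular axes.
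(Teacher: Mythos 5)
Your proof is correct, but it is organized quite differently from the paper's. The paper leans on characterization item \textbf{A2} of Corollary~\ref{corl:combine} to discard every configuration that uses at most two of the four rays (since the union of any two of $\XX^+,\YY^+,\XX^-,\YY^-$ induces a chordal $\UIG$), and then treats the remaining three-ray configuration $\{A,C\}\subseteq\XX^+$, $B\in\YY^+$, $D\in\YY^-$ geometrically: if $A$ and $C$ intersect, then $B$ and $D$ must also intersect and both must meet the disk nearer the origin, creating a triangle; if not, the farther disk lies at distance more than $2$ from the origin and so misses both $B$ and $D$. You instead stratify by the number of disks per \emph{axis} (4--0, 3--1, 2--2), using only that $G[\XX]$ and $G[\YY]$ are unit interval graphs, and you supply three elementary geometric steps: the symmetric-reach-interval plus betweenness argument to exclude the 3--1 split, the sign comparison $y_D^2<y_C^2<y_D^2$ to force the same-axis pair to be a non-edge of the $C_4$, and the ``$|x|>2$'' argument to force opposite rays. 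Your route is somewhat longer but more self-contained, and it explicitly covers ray distributions (e.g.\ two disks in $\YY^+$, one in $\YY^-$, one in $\XX^+$ --- a 3--1 axis split that still uses three distinct rays) which the paper's reduction ``up to symmetry'' to the single configuration $\{A,C\}\subseteq\XX^+$, $B\in\YY^+$, $D\in\YY^-$ passes over without comment; the paper's route is shorter precisely because it outsources all two-ray configurations to \textbf{A2}. One presentational caveat: you read the lemma with the fixed cyclic order $A,B,C,D$, which makes the ``moreover'' clause immediate; if the labelling of the set is taken as arbitrary, the ``moreover'' content is exactly your 2--2 step showing the same-axis pair must be a non-adjacent pair, so nothing is lost either way.
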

	
	\begin{proof}
		By the characterization item \textbf{A2}, if $\{A,B,C,D\} \subseteq \XX^+$ up to symmetry, then $\{A,B,C,D\}$ does not form an induced \replace{$4$-cycle}{$C_4$} since $\XX^+$ induces a unit interval graph that can not contain an induced \replace{$4$-cycle}{$C_4$} due to chordality. Similarly, if $\{A,B,C,D\} \subseteq \XX^+ \cup \YY^+$ or $\{A,B,C,D\} \subseteq \XX^+ \cup \XX^-$ up to symmetry, then $\{A,B,C,D\}$ does not form an induced \replace{$4$-cycle}{$C_4$}. Therefore, every induced \replace{$4$-cycle}{$C_4$} of an $\APUD(1,1)$ contains disks belonging to at least three distinct sets from \add{the set family} $\{\XX^+$, $\YY^+$, $\XX^-$, $\YY^-\}$. 
		
		Up to symmetry, suppose that $\{A,C\} \subseteq \XX^+$, $B \in \YY^+$, and $D \in \YY^-$, the centers of $A$, $B$, $C$ and $D$ are at the coordinates $(x_a,0)$, $(0,y_b)$, $(x_c,0)$ and $(0,-y_d)$ for $x_a,y_b,x_c,y_d \in \mathbb{R}^+$, respectively, and $x_a \leq x_c$. Since $\{A,B,C,D\}$ forms an induced \replace{$4$-cycle}{$C_4$}, every disk $U \in \{A,B,C,D\}$ intersects two disks in $\{A,B,C,D\} \setminus U$, and if the pair $\{U,V\} \subsetneq \{A,B,C,D\}$ intersect, then the pair $\{A,B,C,D\} \setminus \{U,V\}$ intersect. 
		
		If $A$ and $C$ intersect, then $B$ and $D$ must intersect. Then, both $B$ and $D$ must intersect one more disk from $\{A,B,C,D\} \setminus \{B,D\}= \{A,C\}$, and since $x_a \leq x_c$, both $B$ and $D$ intersect $A$. Thus, $\{A,B,D\}$ forms an induced \replace{$3$-cycle}{$C_3=K_3$} which contradicts that $\{A,B,C,D\}$ forms an induced \replace{$4$-cycle}{$C_4$}. Otherwise, $A$ and $C$ do not intersect which means that $x_c - x_a > 2$ since $\{A,C\} \subseteq \XX^+$ and $x_a \leq x_c$. Then, ${x_c}^2 > 4$ which means that $C$ intersects neither $B$ nor $D$ as ${x_c}^2 + {y_b}^2> 4$ and ${x_c}^2 + {y_d}^2> 4$, and thus, $\{A,B,C,D\}$ does not form an induced \replace{$4$-cycle}{$C_4$} which is a contradiction. Hence, if $B \in \YY^+$ and $D \in \YY^-$, then $A \in \XX^+$ if and only if $C \in \XX^-$.\qed
	\end{proof}

	Unless stated otherwise, we assume that the centers of four disks $\{A,B,C,D\}$ of an $\APUD(1,1)$ forming a \replace{not necessarily induced}{(not necessarily induced)} \replace{$4$-cycle}{$C_4$,} \replace{(an induced $C_4$, an induced diamond or a $K_4$)}{i.e., an induced $C_4$, an induced diamond or a $K_4$,} are at the coordinates $(x_a,0)$, $(0,y_b)$, $(-x_c,0)$, and $(0,-y_d)$, respectively for $x_a,y_b,x_c,y_d \in \mathbb{R}^+$.
	
	\begin{lem}\label{lem:helCl}
		A non-Helly clique of an $\APUD(1,1)$ contains at least one disk centered on $x$-axis and at least one disk centered on $y$-axis.
	\end{lem}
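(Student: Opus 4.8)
The plan is to establish the contrapositive by showing that any clique whose disks all lie on a single axis is automatically a Helly clique. Recall that in every $\APUD(1,1)$ embedding the sets $\XX$ and $\YY$ partition $V(G)$ (with the convention that a disk centered at the origin is assigned to exactly one of $\XX^+,\YY^+,\XX^-,\YY^-$). Hence a clique $\QQ$ that is \emph{not} contained in $\XX$ necessarily meets $\YY$, and one not contained in $\YY$ necessarily meets $\XX$; so it suffices to rule out $\QQ \subseteq \XX$ and, symmetrically, $\QQ \subseteq \YY$ whenever $\QQ$ is non-Helly in the embedding $\Sigma(G)$ witnessing that fact.

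So suppose $\QQ \subseteq \XX$ in $\Sigma(G)$. First I would note that a unit disk $I$ centered at $(x_i,0)$ meets the $x$-axis precisely in the closed interval $[x_i-1,\,x_i+1]$, and that two such axis-centered disks intersect if and only if these intervals intersect (both conditions are equivalent to $|x_i-x_j|\le 2$). Thus the intervals associated with the disks of $\QQ$ pairwise intersect. Next I would invoke the one-dimensional Helly property of intervals on a line: pairwise intersecting intervals on $\mathbb{R}$ have a common point $r$, for instance any $r$ with $\max_i(x_i-1) \le r \le \min_i(x_i+1)$, the range being nonempty exactly because of the pairwise intersections. Then the point $(r,0)$ lies in every disk of $\QQ$, so $\QQ$ is Helly in $\Sigma(G)$, contradicting the hypothesis. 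Swapping the two coordinates gives the same conclusion when $\QQ \subseteq \YY$, and the lemma follows.

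I do not expect a genuine obstacle here. The one substantive step is the reduction from unit disks with collinear centers to unit intervals on that line, after which the argument is just the elementary fact that intervals on a line have Helly number two. The only thing to keep straight is the partitioning convention for the origin disk, which is precisely what makes "not centered on the $x$-axis" synonymous with "centered on the $y$-axis", and vice versa.
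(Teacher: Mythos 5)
Your proposal is correct and follows essentially the same route as the paper: both reduce a clique of collinear-centered unit disks to pairwise-intersecting unit intervals on that axis and then apply the one-dimensional Helly property (the paper phrases the common point as lying in the intersection of the two farthest disks, which is exactly your $\max_i(x_i-1)\le r\le\min_i(x_i+1)$ argument). Your write-up is a bit more explicit about the disk-to-interval equivalence and the origin convention, but there is no substantive difference.
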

	
	\begin{proof}
		Every clique of size $2$ is already a Helly clique. Then, it directly follows from that if all disks of a clique are centered on the $x$-axis (or analogously on the $y$-axis), then they form a unit interval graph and must intersect at a common point which is contained by the intersection of the two disks furthest from each other in that clique.\qed
	\end{proof}
	
	\begin{lem}\label{lem:noHellyOn00}
		If a set $\{A,B,C\}$ in an $\APUD(1,1)$ forms a non-Helly clique, then none of $A$, $B$ and $C$ has its center on $(0,0)$.
	\end{lem}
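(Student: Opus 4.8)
The plan is to prove the contrapositive in the following sharpened form: if one of the three disks, say $A$, has its center at $(0,0)$ and $\{A,B,C\}$ is a clique, then $\{A,B,C\}$ is a Helly clique. Since a point $q$ lies in a unit disk centered at $c$ exactly when $|q-c|\le 1$, it suffices to exhibit a single point lying within distance $1$ of the centers of all three disks. Note also that, because $\{A,B,C\}$ is a clique, $B$ and $C$ intersect, so the distance between their centers is at most $2$; this is the only intersection inequality the argument will use.

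By the definition of $\APUD(1,1)$ the center of $B$ and the center of $C$ each lie on the $x$-axis or on the $y$-axis, which splits the analysis into two cases. \textbf{(i)} If both centers lie on the same one of the two axes, then together with $(0,0)$ all three centers are collinear, so along that axis the three disks behave as three unit intervals whose centers have pairwise distance at most $2$; their common intersection is then nonempty by exactly the argument used in the proof of Lemma~\ref{lem:helCl} (the intersection of the corresponding unit intervals is nonempty precisely because the extreme centers are at distance at most $2$), and $\{A,B,C\}$ is Helly. \textbf{(ii)} Otherwise, after possibly swapping the roles of $B$ and $C$, the center of $B$ is a point $(x_b,0)$ on the $x$-axis and the center of $C$ is a point $(0,y_c)$ on the $y$-axis, so the three centers $(0,0)$, $(x_b,0)$, $(0,y_c)$ span a right triangle (degenerate if $x_b=0$ or $y_c=0$) with the right angle at the origin. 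I would then take $M=\bigl(\tfrac{x_b}{2},\tfrac{y_c}{2}\bigr)$, the midpoint of the segment joining the centers of $B$ and $C$. A one-line computation gives $|M-(x_b,0)|=|M-(0,y_c)|=|M-(0,0)|=\tfrac12\sqrt{x_b^2+y_c^2}$, and $\sqrt{x_b^2+y_c^2}$ is precisely the distance between the centers of $B$ and $C$, hence at most $2$; so $M$ lies in all three disks and again $\{A,B,C\}$ is Helly.

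In both cases $\{A,B,C\}$ has a common point, contradicting the assumption that it is a non-Helly clique; hence none of $A$, $B$, $C$ can be centered at $(0,0)$. The argument is short, and I do not expect a genuine obstacle: the only things that need care are verifying that the case split on the positions of the centers of $B$ and $C$ is exhaustive and that in case (ii) the right angle really sits at the origin, which is exactly what makes the midpoint of the far side of the triangle equidistant from, and within distance $1$ of, all three centers. One could alternatively invoke Lemma~\ref{lem:helCl} at the outset to constrain which of $\XX$ and $\YY$ each disk belongs to, but the direct geometric argument above already covers every configuration.
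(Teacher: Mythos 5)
Your proof is correct and uses essentially the same idea as the paper: the paper's key step is also that the midpoint of the segment joining the two non-origin centers (the foot of the median from the right angle at the origin) lies within distance $\tfrac12\sqrt{x_b^2+y_c^2}\le 1$ of all three centers. The only difference is cosmetic — the paper splits into tangency subcases and treats the strict-intersection case informally, whereas you apply the midpoint computation uniformly and handle the collinear configuration explicitly rather than delegating it to Lemma~\ref{lem:helCl}.
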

	\begin{proof}
		By Lemma~\ref{lem:helCl}, let $A \in \XX^+$ and $B \in \YY^+$ hold up to symmetry. Assume that $C$ has its center on  $(0,0)$. The following cases may occur:
		\begin{itemize}
			\item If $A$ and $C$ intersect at only one point, then the center of $A$ is at $(2,0)$. Then, $A$ intersects $B$ if and only if the center of $B$ is at $(0,0)$. However, now $\{A,B,C\}$ forms a Helly clique as they mutually intersect at $(0,0)$.			
			\item Else if $A$ and $B$ intersect at only one point, let the centers of $A$ and $B$ be on $(x_a,0)$ and $(0,y_b)$, respectively, where $x_a,y_b \in \mathbb{R}^+$. Then, the line segment $\ell$ between $(x_a,0)$ and $(0,y_b)$ has length $2$. This means that the line segment ${\ell}'$ between $(0,0)$ and the median point $(x_{ab},y_{ab})$ of $\ell$ has length $1$ since it is the median line for the right-angled triangle on the corners $(0,0)$, $(x_a,0)$ and $(0,y_b)$. However, now $\{A,B,C\}$ forms a Helly clique as they mutually intersect at $(x_{ab},y_{ab})$ since $C$ with radius $1$ contains $(x_{ab},y_{ab})$.
			
			\item Otherwise, both $A$ and $B$ are closer to $C$ than in the above two cases. Thus, $\{A,B,C\}$ forms a Helly clique as they mutually intersect at a common point.
		\end{itemize} 
		Therefore, none of $A$, $B$ and $C$ has its center on the point $(0,0)$ if $\{A,B,C\}$ forms a non-Helly clique.\qed
	\end{proof}

	\begin{corl}\label{corl:pointon00forNonHelly}
		If a set $\mathcal{S}$ on $d$ disks in an $\APUD(1,1)$ forms a non-Helly clique and a disk $A \in \mathcal{S}$ has its center on $(0,0)$, then $\mathcal{S} \setminus A$ forms a non-Helly clique.
	\end{corl}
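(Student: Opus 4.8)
The plan is to reduce the statement to the three-disk case already settled in Lemma~\ref{lem:noHellyOn00}, via the Helly theorem in the plane. First note that $\mathcal{S}\setminus A$ is a clique, being an induced subgraph of the clique $\mathcal{S}$; so the only thing to prove is that it is non-Helly. Since every clique of size at most $2$ is a Helly clique (as recalled in the proof of Lemma~\ref{lem:helCl}), the hypothesis that $\mathcal{S}$ is non-Helly forces $d\ge 3$; in fact $d\ge 4$, because for $d=3$ Lemma~\ref{lem:noHellyOn00} would already preclude a disk of $\mathcal{S}$ being centered on $(0,0)$, making the statement vacuous in that case.

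Next I would invoke Helly's theorem for the finite family of (compact convex) disks $\mathcal{S}$ in $\mathbb{R}^2$: if the whole family has no common point, then already some three of its members have no common point. Hence there is a three-element subset $\mathcal{T}\subseteq\mathcal{S}$ that forms a non-Helly clique — it is a clique as a subset of $\mathcal{S}$, and it has no common point. By Lemma~\ref{lem:noHellyOn00}, none of the three disks of $\mathcal{T}$ has its center on $(0,0)$. Since $A$ does have its center on $(0,0)$, we get $A\notin\mathcal{T}$, so $\mathcal{T}\subseteq\mathcal{S}\setminus A$. Thus $\mathcal{S}\setminus A$ contains a triple of disks with no common point, and a clique that contains a subclique with no common point can itself have no common point; therefore $\mathcal{S}\setminus A$ is a non-Helly clique, which is what we wanted.

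There is essentially no serious obstacle here; the only points deserving care are confirming that a non-Helly clique has at least three disks, so that the planar Helly theorem is applicable, and that this theorem may be used verbatim for a finite family of disks. One could instead try a self-contained argument assuming $\mathcal{S}\setminus A$ has a common point $p$ and showing $p$ also lies in $A$, but this would require re-running the geometric case analysis of Lemma~\ref{lem:noHellyOn00}, so the reduction through Helly's theorem is the cleaner route.
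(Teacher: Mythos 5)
Your proof is correct and follows essentially the same route as the paper, which disposes of this corollary with a one-line appeal to Lemma~\ref{lem:noHellyOn00}; you have simply made explicit the reduction to a three-disk non-Helly subclique via Helly's theorem, which the paper packages separately as Lemma~\ref{lem:ONLYnonHelly}. Your observation that the $d=3$ case is vacuous is a nice touch that the paper leaves implicit.
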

	\begin{proof}
		It follows directly from Lemma~\ref{lem:noHellyOn00}.\qed
	\end{proof}
	
	\begin{lem}\label{lem:3clique}
		If a set \add{of disks} $\{A,B,C\}$ in an $\APUD(1,1)$ forms a non-Helly clique, then $A \in \mathcal{I}$, $B \in \mathcal{J}$, and $C \in \mathcal{K}$ where $\mathcal{I}, \mathcal{J}, \mathcal{K} \in  \{\XX^+, \YY^+, \XX^-, \YY^-\}$ and $\mathcal{I} \neq \mathcal{J} \neq \mathcal{K}$.
	\end{lem}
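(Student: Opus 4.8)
The plan is to reduce the whole statement to one clean geometric fact: \emph{if two of the three disks of a clique in an $\APUD(1,1)$ lie in the same set among $\XX^+,\YY^+,\XX^-,\YY^-$, then that clique is Helly.} Granting this, I argue as follows. Suppose for contradiction that a non-Helly clique $\{A,B,C\}$ has two disks in the same such set. By Lemma~\ref{lem:helCl} this clique has a disk centered on the $x$-axis and a disk centered on the $y$-axis, so not all three are on one axis; hence the two disks sharing a set cannot be on opposite half-axes, and after applying one of the eight symmetries of the pair of perpendicular lines I may assume the two are $A$ and $C$, both in $\XX^+$, with centers $(x_a,0)$ and $(x_c,0)$ where $0\le x_a\le x_c$ (Lemma~\ref{lem:noHellyOn00} even forbids the origin, but only $x_a\le x_c$ is needed below). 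The third disk $B$ must then be centered on the $y$-axis, say at $(0,\beta)$ — otherwise all three disks would be centered on the $x$-axis, contradicting Lemma~\ref{lem:helCl}.

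To contradict non-Hellyness I will produce an explicit point of $A\cap B\cap C$. Let $q$ be the point of the closed disk $C$ nearest to the center $(0,\beta)$ of $B$. Because $B$ and $C$ intersect, $\|(0,\beta)-(x_c,0)\|=\sqrt{x_c^2+\beta^2}\le 2$, so the distance from $q$ to $(0,\beta)$ equals $\max\{0,\sqrt{x_c^2+\beta^2}-1\}\le 1$; hence $q\in B$, and $q\in C$ by construction. It remains to verify $q\in A$. If $(0,\beta)\in C$ then $q=(0,\beta)$ and $\|q-(x_a,0)\|=\sqrt{x_a^2+\beta^2}\le\sqrt{x_c^2+\beta^2}\le 1$, so $q\in A$. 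Otherwise $q$ lies on the circle bounding $C$, on the segment from $(x_c,0)$ toward $(0,\beta)$, with first coordinate $q_x=x_c(1-1/\sqrt{x_c^2+\beta^2})\le x_c/2\le (x_a+x_c)/2$, the first inequality using $\sqrt{x_c^2+\beta^2}\le 2$ and the second using $x_a\ge 0$. Since $x_a\le x_c$ and $q_x\le (x_a+x_c)/2$, the value $q_x$ is at least as close to $x_a$ as to $x_c$, i.e. $(q_x-x_a)^2\le(q_x-x_c)^2$; adding $q_y^2$ and using $q\in C$ gives $\|q-(x_a,0)\|^2\le\|q-(x_c,0)\|^2\le 1$, so $q\in A$. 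In all cases $q\in A\cap B\cap C$, so $\{A,B,C\}$ is Helly — the desired contradiction.

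Therefore no two of $A,B,C$ lie in the same one of $\XX^+,\YY^+,\XX^-,\YY^-$, so they lie in three pairwise-distinct such sets, which is exactly the assertion (in particular $\mathcal I\neq\mathcal J\neq\mathcal K$).

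I expect the only genuine obstacle to be choosing the right witness point and checking the two inequalities about it; the reduction to ``two disks in a common set'' and the normalization to $A,C\in\XX^+$ are routine applications of Lemma~\ref{lem:helCl}, Lemma~\ref{lem:noHellyOn00}, and the symmetry of the two axes. A small point of care is the degenerate situation in which the center of $B$ already lies inside $C$, which is why I define $q$ as a nearest point of $C$ rather than simply as the intersection of $\partial C$ with the segment joining the centers of $C$ and $B$.
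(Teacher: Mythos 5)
Your proof is correct, and its overall strategy is the same as the paper's: assume two of the three disks lie in the same set from $\{\XX^+,\YY^+,\XX^-,\YY^-\}$, normalize to $A,C\in\XX^+$ with $B$ on the $y$-axis via Lemma~\ref{lem:helCl}, and derive a contradiction by showing the clique is then Helly. The difference is in how that last geometric step is justified. The paper simply asserts that ``the intersection of $A$ and $C$ contains the intersection of $B$ and $C$,'' i.e.\ $B\cap C\subseteq A$; this containment is in fact not literally true in all configurations (e.g.\ with $A$ near the origin, $C$ at $(1.9,0)$ and $B$ at $(0,0.6)$, one corner of the lens $B\cap C$ lies outside $A$), although the weaker fact actually needed, $A\cap B\cap C\neq\emptyset$, still holds. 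Your argument sidesteps this by exhibiting an explicit common point $q$ (the point of $C$ nearest the center of $B$) and verifying $q\in A\cap B\cap C$ with two short inequalities; this is airtight, handles the degenerate case where the center of $B$ lies in $C$, and is arguably a cleaner justification than the one in the paper. No gaps.
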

	
	\begin{proof}
		By Lemma~\ref{lem:helCl}, we know that all three disks cannot be on the same line, thus $\{A,B,C\} \not\subseteq \XX^+ \cup \XX^-$ and $\{A,B,C\} \not\subseteq \YY^+ \cup \YY^-$. By Lemma~\ref{lem:noHellyOn00}, none of $A$, $B$ and $C$ has its center on the point $(0,0)$.
		Let us assume for a contradiction that $A$ and $C$ belong to the same set from $\{\XX^+, \YY^+, \XX^-, \YY^-\}$. Up to symmetry, let $\{A,C\} \subseteq \XX^+$ hold such that $A$ is closer to the point $(0,0)$. Since $\{A,B,C\} \not\subseteq \XX^+ \cup \XX^-$, either $B \in \YY^+$ or $B \in \YY^-$ holds. Again up to symmetry, let $B \in \YY^+$ hold. Then, the intersection of $A$ and $C$ contains the intersection of $B$ and $C$ forming a Helly clique. Therefore, if $\{A,B,C\}$ does not form a Helly clique, then either $A \in \XX^-$ or $C \in \XX^-$ must hold.\qed
	\end{proof}
	
	Given an $\APUD(1,1)$ $G$, we give the following lemma which applies to all $\APUD(1,1)$ representations of $G$.
	We use this lemma to show that $G$ contains polynomially many maximal cliques when $G \in \APUD(1,1)$.
	
	\begin{lem}\label{lem:ONLYnonHelly}
		Every non-Helly clique in an $\APUD(1,1)$ contains a non-Helly clique on three disks.
	\end{lem}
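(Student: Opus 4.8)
The plan is to obtain this directly from Helly's theorem in the plane. Let $\mathcal{S}$ be a non-Helly clique in an $\APUD(1,1)$ embedding, so the disks of $\mathcal{S}$ are closed convex subsets of $\mathbb{R}^2$ with $\bigcap_{I\in\mathcal{S}} I=\emptyset$. Helly's theorem for convex sets in the plane says that if every three members of a finite convex family have a common point then the whole family has a common point; in contrapositive form, the empty common intersection of $\mathcal{S}$ yields three disks $A,B,C\in\mathcal{S}$ with $A\cap B\cap C=\emptyset$. Since $A,B,C$ all lie in the clique $\mathcal{S}$ they pairwise intersect, so $\{A,B,C\}$ is a clique on three disks, and $A\cap B\cap C=\emptyset$ makes it non-Helly. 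That already proves the lemma.

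If one prefers an argument that does not invoke Helly's theorem as a black box, the first step is to reduce to a $K_4$ by means of Lemma~\ref{lem:helCl}. Set $\mathcal{S}_X=\mathcal{S}\cap\XX$ and $\mathcal{S}_Y=\mathcal{S}\cap\YY$; each is a clique whose disks are all centered on one axis, hence a Helly clique by (the proof of) Lemma~\ref{lem:helCl}, and in fact $\bigcap_{I\in\mathcal{S}_X} I = A\cap A'$ where $A,A'$ are the two disks of $\mathcal{S}_X$ with centers farthest apart on the $x$-axis, and likewise $\bigcap_{I\in\mathcal{S}_Y} I = B\cap B'$. As $\mathcal{S}=\mathcal{S}_X\cup\mathcal{S}_Y$, non-Hellyness of $\mathcal{S}$ reads $(A\cap A')\cap(B\cap B')=\emptyset$. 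If one of $\mathcal{S}_X,\mathcal{S}_Y$ is empty then $\mathcal{S}$ lies on a single axis and is Helly by Lemma~\ref{lem:helCl}, a contradiction; if $|\mathcal{S}_X|=1$ (so $A=A'$) then $A\cap B\cap B'=\emptyset$ is already the desired non-Helly triple, and symmetrically when $|\mathcal{S}_Y|=1$; otherwise $\{A,A',B,B'\}$ is a non-Helly $K_4$.

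It then remains to show that a non-Helly $K_4$ contains a non-Helly triple, which is the classical base case of planar Helly applied to $\{A,A',B,B'\}$: for each of the four disks, either the other three already form a non-Helly triple and we are done, or we may choose a point lying in all three; the four chosen points admit a Radon partition, and convexity of the disks then forces a point common to all four, contradicting non-Hellyness. I do not expect a genuine obstacle along either route, since the content is purely two-dimensional Helly theory; the only points that need care are the degenerate bookkeeping in the reduction when an axis carries zero or one disk of $\mathcal{S}$, and the stylistic choice between simply citing Helly's theorem and reproving its four-set instance via Radon.
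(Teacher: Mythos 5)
Your first paragraph is exactly the paper's own proof: the authors likewise invoke planar Helly's theorem for convex sets in contrapositive form to extract a non-Helly triple from any non-Helly clique of disks. The additional reduction to a non-Helly $K_4$ via Lemma~\ref{lem:helCl} and the Radon-partition argument are a correct but unnecessary elaboration of the same idea.
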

	
	\begin{proof}
		By Helly theorem, if every three sets in a family of \replace{convex sets}{convex geometric object sets} in the Euclidean plane have a common intersection, then the whole family has a common intersection~\cite{HellyTheorem}. Since disks are convex, for every non-Helly clique on more than three disks in an $\APUD(1,1)$, there exists a non-Helly clique on three disks.\qed
	\end{proof}

		Here, we informally explain why we need Lemma~\ref{lem:ONLYnonHelly} before using it to prove Theorem~\ref{theo:linearlyMany}. Since every non-Helly clique in some $\APUD(1,1)$ embedding $\Sigma(G)$ of an $\APUD(1,1)$ $G$ contains a non-Helly clique on three disks, $G$ contains no clique $\mathcal{Q}$ of size $d > 3$ such that all $d-1$ tuples of vertices in $\mathcal{Q}$ intersect at a common point but not all $d$ of them in $\Sigma(G)$. This can equivalently be stated as that if an $\APUD(1,1)$ contains a non-Helly clique of size at least $4$, then it is non-Helly because of the non-Helly cliques of size $3$ it contains and thus, it has polynomially many non-Helly cliques since there are at most $\mathcal{O}(n^3)$ such triples for $G$ on $n$ vertices.

	For the upcoming claims, we use the following terminology.
	$G$ is a simple graph, and if $G$ is an $\APUD(1,1)$, $\Sigma(G)$ denotes some $\APUD(1,1)$ embedding of $G$. Let $\{A,B,C,D\}$ be four disks forming a \add{(not necessarily induced)} \replace{$4$-cycle}{$C_4$} \remove{such that the centers of $\{A,B,C,D\}$ are at the coordinates $(x_a,0)$, $(0,y_b)$, $(-x_c,0)$, and $(0,-y_d)$, respectively for $x_a,y_b,x_c,y_d \in \mathbb{R}^+$ }in $\Sigma(G)$.
	Let $\Gamma_{ABCD}$ denote the set of points $\mathcal{P}$ that is enclosed by the boundaries of $A$, $B$, $C$ and $D$, i.e. for every $p \in \mathcal{P}$, $\vert x_p \vert \leq x_a$, $\vert y_p \vert \leq y_b$, $\vert x_p \vert \leq x_c$, $\vert y_p \vert \leq y_d$, and no point in $\mathcal{P}$ is contained by the disks $A$, $B$, $C$ and $D$. If $\{A,B,C,D\}$ is an induced $C_4$, then $\Gamma_{ABCD}$ consists of one continuous region.
	If $\{A,B,C,D\}$ is an induced diamond, then $\Gamma_{ABCD}$ consists of at least one and at most two continuous regions.
	If $\{A,B,C,D\}$ is a $K_4$, then $\Gamma_{ABCD}$ is an empty set. Figure~\ref{fig:gammaExamples} shows $\Gamma_{ABCD}$ for those kinds of 4-cycles.
	
	\begin{figure}[h!]
		\centering
		\begin{subfigure}[t]{0.23\linewidth}
			\centering
			\begin{tikzpicture}[scale=0.4]
				
				\coordinate (E) at (0,0);
				
				\filldraw[color=white, fill=green, opacity=1] (E) circle (1);

				\begin{scope}
					\clip (-1.1,0) circle (1);
					\clip (0,0) circle (1);
					\filldraw[color=white, fill=white, opacity=1](0,0) circle (1);
				\end{scope}

				\begin{scope}
					\clip (1.1,0) circle (1);
					\clip (0,0) circle (1);
					\filldraw[color=white, fill=white, opacity=1](0,0) circle (1);
				\end{scope}
				
				\begin{scope}
					\clip (0,1.67) circle (1);
					\clip (0,0) circle (1);
					\filldraw[color=white, fill=white, opacity=1](0,0) circle (1);
				\end{scope}
				
				\begin{scope}
					\clip (0,-1.67) circle (1);
					\clip (0,0) circle (1);
					\filldraw[color=white, fill=white, opacity=1](0,0) circle (1);
				\end{scope}
				
				\draw (-3,-3) to[grid with coordinates] (3,3);
				
				
				\coordinate (A) at (-1.1,0);
				\coordinate (B) at (1.1,0);
				\coordinate (C) at (0,1.67);
				\coordinate (D) at (0,-1.67);
				
				\filldraw[color=red, fill=red, opacity=0.3] (A) circle (1);
				
				\filldraw[color=red, fill=red, opacity=0.3] (B) circle (1);
				
				\filldraw[color=red, fill=red, opacity=0.3] (C) circle (1);
				
				\filldraw[color=red, fill=red, opacity=0.3] (D) circle (1);
				
				
				\node (xa) at (0,-4.5) {(a)};
			\end{tikzpicture}
			\label{fig:nonHelly17}
		\end{subfigure}
		~
		\begin{subfigure}[t]{0.23\linewidth}
			\centering
			\begin{tikzpicture}[scale=0.4]
				
				\coordinate (E) at (0,0);
				
				\filldraw[color=white, fill=green, opacity=1] (E) circle (1);
				
				\begin{scope}
					\clip (-1,0) circle (1);
					\clip (0,0) circle (1);
					\filldraw[color=white, fill=white, opacity=1](0,0) circle (1);
				\end{scope}

				\begin{scope}
					\clip (1,0) circle (1);
					\clip (0,0) circle (1);
					\filldraw[color=white, fill=white, opacity=1](0,0) circle (1);
				\end{scope}
				
				\begin{scope}
					\clip (0,1.67) circle (1);
					\clip (0,0) circle (1);
					\filldraw[color=white, fill=white, opacity=1](0,0) circle (1);
				\end{scope}
				
				\begin{scope}
					\clip (0,-1) circle (1);
					\clip (0,0) circle (1);
					\filldraw[color=white, fill=white, opacity=1](0,0) circle (1);
				\end{scope}
				
				\draw (-3,-3) to[grid with coordinates] (3,3);
				
				
				\coordinate (A) at (-1,0);
				\coordinate (B) at (1,0);
				\coordinate (C) at (0,1.67);
				\coordinate (D) at (0,-1);
				
				\filldraw[color=red, fill=red, opacity=0.3] (A) circle (1);
				
				\filldraw[color=red, fill=red, opacity=0.3] (B) circle (1);
				
				\filldraw[color=red, fill=red, opacity=0.3] (C) circle (1);
				
				\filldraw[color=red, fill=red, opacity=0.3] (D) circle (1);
				
				\node (xb) at (0,-4.5) {(b)};
				
			\end{tikzpicture}
			\label{fig:nonHelly16}
		\end{subfigure}
		~
		\begin{subfigure}[t]{0.23\linewidth}
			\centering
			\begin{tikzpicture}[scale=0.4]
				
				\coordinate (E) at (0,0);
				
				\filldraw[color=white, fill=green, opacity=1] (E) circle (1);
				
				\begin{scope}
					\clip (-1,0) circle (1);
					\clip (0,0) circle (1);
					\filldraw[color=white, fill=white, opacity=1](0,0) circle (1);
				\end{scope}

				\begin{scope}
					\clip (1,0) circle (1);
					\clip (0,0) circle (1);
					\filldraw[color=white, fill=white, opacity=1](0,0) circle (1);
				\end{scope}
				
				\begin{scope}
					\clip (0,1.67) circle (1);
					\clip (0,0) circle (1);
					\filldraw[color=white, fill=white, opacity=1](0,0) circle (1);
				\end{scope}
				
				\begin{scope}
					\clip (0,-1.67) circle (1);
					\clip (0,0) circle (1);
					\filldraw[color=white, fill=white, opacity=1](0,0) circle (1);
				\end{scope}
				
				\draw (-3,-3) to[grid with coordinates] (3,3);
				
				
				\coordinate (A) at (-1,0);
				\coordinate (B) at (1,0);
				\coordinate (C) at (0,1.67);
				\coordinate (D) at (0,-1.67);
				
				\filldraw[color=red, fill=red, opacity=0.3] (A) circle (1);
				
				\filldraw[color=red, fill=red, opacity=0.3] (B) circle (1);
				
				\filldraw[color=red, fill=red, opacity=0.3] (C) circle (1);
				
				\filldraw[color=red, fill=red, opacity=0.3] (D) circle (1);
				
				\node (xc) at (0,-4.5) {(c)};
				
			\end{tikzpicture}
			\label{fig:nonHelly14}
		\end{subfigure}
		~
		\begin{subfigure}[t]{0.23\linewidth}
			\centering
			\begin{tikzpicture}[scale=0.4]
				
				\draw (-3,-3) to[grid with coordinates] (3,3);
				
				
				\coordinate (A) at (-1,0);
				\coordinate (B) at (1,0);
				\coordinate (C) at (0,1);
				\coordinate (D) at (0,-1);
				
				\filldraw[color=red, fill=red, opacity=0.3] (A) circle (1);
				
				\filldraw[color=red, fill=red, opacity=0.3] (B) circle (1);
				
				\filldraw[color=red, fill=red, opacity=0.3] (C) circle (1);
				
				\filldraw[color=red, fill=red, opacity=0.3] (D) circle (1);
				
				\node (xd) at (0,-4.5) {(d)};

			\end{tikzpicture}
			\label{fig:nonHelly15}
		\end{subfigure}
		\caption{$\Gamma_{ABCD}$ of (a) an induced $C_4$, (b) and (c) an induced diamond, and (d) a $K_4$ shown by green shading.}
		\label{fig:gammaExamples} 
	\end{figure}
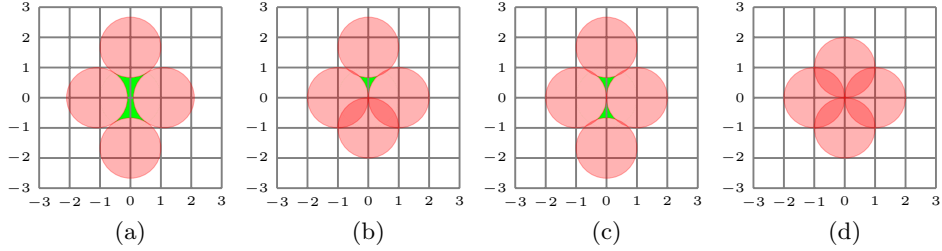

	\begin{lem}\label{lem:atLeastThree}
		If the set $\{A,B,C,D\}$ in an $\APUD(1,1)$ forms an induced $C_4$, 
		then $x_a,y_b,x_c,y_d \leq 2$ holds, and another disk $E$ whose 
		center is 
		contained in $\Gamma_{ABCD}$ intersects at least three disks from 
		$\{A,B,C,D\}$.
	\end{lem}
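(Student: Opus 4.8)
The plan is to prove the two assertions separately; each follows quickly from the definitions. For the bound $x_a,y_b,x_c,y_d\le 2$, recall that by Lemma~\ref{lem:4cycle} and the coordinate convention fixed just before this lemma we may take the induced $C_4$ to be $(A,B,C,D)$ with $A\in\XX^+$, $B\in\YY^+$, $C\in\XX^-$, $D\in\YY^-$; in particular $A$ intersects $B$, $B$ intersects $C$, $C$ intersects $D$, and $D$ intersects $A$. Since the disks are unit disks, ``$A$ intersects $B$'' is equivalent to the squared distance $x_a^2+y_b^2$ between their centers being at most $4$, so $x_a\le 2$ and $y_b\le 2$; the edges $BC$ and $DA$ give $x_c\le 2$ and $y_d\le 2$ the same way.

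For the statement about $E$ the point I would exploit is that $E$ is a disk of the embedding $\Sigma(G)$, so its center lies on the $x$-axis or on the $y$-axis; this is exactly what makes the statement hold (a disk whose center is in $\Gamma_{ABCD}$ but off both axes can touch only two of $A,B,C,D$). By symmetry --- reflecting the configuration in the line $y=x$, which interchanges the two axes and the pairs $\{A,C\}$ and $\{B,D\}$ --- it suffices to treat the case where the center of $E$ is a point $(e,0)$ on the $x$-axis, and then the definition of $\Gamma_{ABCD}$ forces $|e|\le\min\{x_a,x_c\}$. Now I would estimate three of the four distances from $(e,0)$ to the centers of $A,B,C,D$. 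The squared distance to the center of $B$ is $e^2+y_b^2\le x_a^2+y_b^2\le 4$, using $|e|\le x_a$ and the edge $AB$; the squared distance to the center of $D$ is $e^2+y_d^2\le x_a^2+y_d^2\le 4$, using the edge $DA$; hence $E$ intersects both $B$ and $D$. For the third disk: if $e\ge 0$ the distance from $(e,0)$ to the center of $A$ is $x_a-e\le x_a\le 2$, so $E$ intersects $A$; if $e\le 0$ the distance to the center of $C$ is $x_c-|e|\le x_c\le 2$, so $E$ intersects $C$. Either way $E$ intersects at least three of $A,B,C,D$.

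I do not anticipate a genuine difficulty: the whole content of the second part is the one structural observation that $E$'s center must sit on one of the two axes, and once that is used, each distance bound is a single application of a pairwise-intersection inequality of the $C_4$ together with $x_a,y_b,x_c,y_d\le 2$. Intuitively, a disk of $\Sigma(G)$ lying in $\Gamma_{ABCD}$ with center on the $x$-axis is wedged vertically between $B$ and $D$ and lies on one side of the $y$-axis, so it is forced to meet $B$, $D$, and one of $A,C$.
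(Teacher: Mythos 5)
Your proposal is correct and follows essentially the same route as the paper's proof: derive $x_a,y_b,x_c,y_d\le 2$ from the pairwise intersections along the edges of the induced $C_4$, place the center of $E$ on one axis up to symmetry, and use $|e|\le x_a$ (resp.\ $x_c$) to bound the distances to the centers of $B$, $D$, and the nearer of $A,C$. If anything, your write-up is slightly more explicit than the paper's, which asserts the intersection with $A$ without spelling out the one-line distance computation $x_a-e\le x_a\le 2$ that you include.
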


	\begin{proof}
		By Lemma~\ref{lem:4cycle}, $A$ does not intersect $C$, and $B$ does not intersect $D$. However, since $A$ intersects both $B$ and $D$, ${x_a}^2+{y_b}^2 \leq 4$, ${x_a}^2+{y_d}^2 \leq 4$, and $C$ also intersects both $B$ and $D$, ${x_c}^2+{y_b}^2 \leq 4$, ${x_c}^2+{y_d}^2 \leq 4$, and thus, $0 < x_a,y_b,x_c,y_d \leq 2$ since $x_a,y_b,y_d \in \mathbb{R}^+$. Suppose that $E$ is centered at the point $(x_e, 0)$, i.e. $E \in 
		\XX^+$, where $e \in \mathbb{R}^+$, up to symmetry.
		Then, $\sqrt{{x_e}^2 + {y_b}^2} \leq 2$ since $x_e \leq x_a$.
		Therefore, if the center of $E$ is in $\Gamma_{ABCD}$, then $E$ intersects $B$, and
		analogously $D$. Thus, $E$ intersects all disks from $\{ A,B,D \}$.\qed
	\end{proof}

	\begin{lem}\label{lem:middleClique}
		Let the set $\{A,B,C,D\}$ in an $\APUD(1,1)$ form an induced $C_4$ and $\mathcal{F}$ be the set of disks that are centered in $\Gamma_{ABCD}$. 
		Then, $\mathcal{F}$ is a Helly clique in $\Sigma(G)$.
	\end{lem}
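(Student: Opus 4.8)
The plan is to prove something slightly stronger and much cleaner than ``$\mathcal{F}$ is a clique with a common point'': I will show that \emph{every} disk of $\mathcal{F}$ contains the origin $(0,0)$. Since a Helly clique is exactly a clique all of whose members share a common point, and since disks all containing $(0,0)$ are automatically pairwise intersecting, this immediately yields that $\mathcal{F}$ is a Helly clique in $\Sigma(G)$.

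First I would invoke Lemma~\ref{lem:atLeastThree}, which tells us that for the induced $C_4$ $\{A,B,C,D\}$ we have $x_a,y_b,x_c,y_d \le 2$. Then fix an arbitrary $E \in \mathcal{F}$, whose center $(x_e,y_e)$ lies in $\Gamma_{ABCD}$. Because in an $\APUD(1,1)$ every disk is centered on the $x$- or the $y$-axis, $E$ belongs to $\XX$ or to $\YY$; by the symmetry of the configuration it suffices to treat the case $E \in \XX$, i.e.\ $y_e = 0$.

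Next I would split on the sign of $x_e$. If $x_e \ge 0$: since the center of $E$ is a point of $\Gamma_{ABCD}$ it is not contained in $A$, so the distance $|x_e - x_a|$ between the centers of $E$ and $A$ is at least $1$; the defining inequality $|x_e|\le x_a$ gives $x_e \le x_a$, and combining, $x_e \le x_a - 1 \le 1$. If $x_e < 0$, the same argument applied to $C$ (centered at $(-x_c,0)$) in place of $A$, together with $|x_e| \le x_c$, gives $x_e \ge 1 - x_c \ge -1$. In either case $|x_e| \le 1$, so the distance from the center of $E$ to $(0,0)$ is at most $1$ and hence $(0,0) \in E$. By symmetry the same conclusion holds when $E \in \YY$. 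Therefore $(0,0)$ lies in every member of $\mathcal{F}$, the disks of $\mathcal{F}$ pairwise intersect (so $G[\mathcal F]$ is complete, i.e.\ $\mathcal{F}$ is a clique), and they share the common point $(0,0)$, making $\mathcal{F}$ a Helly clique.

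I do not expect a genuine obstacle here; the proof is essentially a one-line distance estimate once the bound $x_a,y_b,x_c,y_d \le 2$ from Lemma~\ref{lem:atLeastThree} is in hand, since that bound is precisely what converts ``the center of $E$ is outside $A$ (resp.\ $C$)'' into ``the center of $E$ is within distance $1$ of the origin''. The only things needing a little care are (i) using the correct one of $A,C$ (resp.\ $B,D$) according to the sign of $x_e$ (resp.\ $y_e$), and (ii) the degenerate situations where $x_a \le 1$ or $x_c \le 1$, which force $x_e = 0$ and are harmless since then $E$ is centered at the origin and trivially contains it.
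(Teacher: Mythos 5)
Your proof is correct, and for the Helly part it takes a genuinely more direct route than the paper's. The paper's own proof first places the centers of the disks of $\mathcal{F}$ on the diagonals of the quadrilateral with vertices $(x_a-1,0)$, $(0,y_b-1)$, $(-(x_c-1),0)$, $(0,-(y_d-1))$, observes that $x_a,y_b,x_c,y_d\le 2$ forces both diagonals to have length at most $2$, deduces pairwise intersection, and then establishes the Helly property separately: it rules out a non-Helly triple in $\mathcal{F}$ using Lemma~\ref{lem:3clique} and closes with Helly's theorem for convex sets via Lemma~\ref{lem:ONLYnonHelly}. You use exactly the same quantitative fact (from Lemma~\ref{lem:atLeastThree}, the center of a disk of $\mathcal{F}$ on the positive $x$-axis satisfies $x_e\le x_a-1\le 1$, and symmetrically on the other three half-axes), but push it one step further to exhibit the explicit common point $(0,0)$ contained in every disk of $\mathcal{F}$. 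That single observation yields the clique property and the Helly property simultaneously, with no need for the quadrilateral, the case analysis on non-Helly triples, or the appeal to Helly's theorem; it is a strictly stronger and shorter conclusion. The two places where care is required are handled correctly: you use both the enclosure condition $|x_e|\le x_a$ (resp.\ $|x_e|\le x_c$) and the non-containment condition from the definition of $\Gamma_{ABCD}$ to get the inequality in the right direction, choosing $A$ or $C$ (resp.\ $B$ or $D$) according to the sign of the center's coordinate, and the degenerate cases $x_a\le 1$ or $x_c\le 1$ only shrink (possibly empty) the admissible set of centers, so the bound $|x_e|\le 1$ survives.
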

	\begin{proof}
		One of the intersection points of $A,B,C$ and $D$ with the axes that their center points lie on are $(x_a-1,0)$, $(0,y_b-1)$, $(-(x_c-1),0)$, and $(0,-(y_d-1)$, respectively. Note that these four points are also on the boundary of $\Gamma_{ABCD}$.
		
		Consider the quadrilateral $A'B'C'D'$ that is formed by the points $A'(x_a-1,0)$, $B'(0,y_b-1)$, $C'(-(x_c-1),0)$, and $D'(0,-(y_d-1))$.
		Since we consider $\APUD(1,1)$, a pair $F_i, F_j \in \mathcal{F}$ of disks that are centered inside $A'B'C'D'$ are also centered on the diagonals of $A'B'C'D'$ (either on the line segment $[A'C']$ or on the line segment $[B'D']$). Consider the following statements.
		\begin{align*}
			{x_a}^2 + {y_b}^2 & \leq 4\\
			{y_b}^2 + {x_c}^2 & \leq 4\\
			{x_c}^2 + {y_d}^2 & \leq 4\\
			{x_a}^2 + {y_d}^2 & \leq 4	
		\end{align*} 
		
		Since $x_a,y_b,x_c,y_d \in \mathbb{R}^+$, the following also hold.
		\begin{align*}
			x_a,y_b,x_c,y_d &\leq 2\\
			\therefore & \\
			x_a-1,y_b-1,x_c-1,y_d-1 &\leq 1\\
			x_a-1+x_c-1 &= \vert [A'C'] \vert \\
			y_b-1+y_d-1 &= \vert [B'D']  \vert\\
			\therefore & \\
			\vert [A'C'] \vert &\leq 2\\
			\vert [B'D'] \vert &\leq 2\\
		\end{align*}
		
		Since the lengths of both diagonals $[A'C']$ and $[B'D']$ are at most $2$ units,
		every pair $F_i, F_j \in \mathcal{F}$ intersect since each disk is centered on $x$- or $y$-axis in $\Sigma(G)$, and $\mathcal{F}$ forms a clique.
		
		Now, assume that the set $\{F_i,F_j,F_k\} \subseteq \mathcal{F}$ forms a non-Helly clique in $\Sigma(G)$. By Lemma~\ref{lem:3clique}, $F_i$, $F_j$ and $F_k$ belong to distinct sets in $\{\XX^+,\YY^+,\XX^-,\YY^-\}$. Up to symmetry, let $F_i \in \XX^+$, $F_j \in \XX^-$, $F_k \in \YY^+$. Since $x_c-1 \leq 1$), $F_j$ intersects the common intersection of $F_i$ and $F_k$. Since no three disks in $\mathcal{F}$ form a non-Helly clique, the disks in $\mathcal{F}$ satisfy the Helly property by Lemma~\ref{lem:ONLYnonHelly}, and thus, the claim holds.\qed
	\end{proof}

	\begin{lem}\label{lem:someMandatoryEdges}
		Let $\{A_1,\dots,A_i\}$, $\{B_1,\dots,B_j\}$ and $\{C_1,\dots,C_k\}$ be three sets of disks of a connected $\APUD(1,1)$ which belong to $\XX^+$, $\YY^+$ and $\XX^-$, respectively, \replace{where :}{such that the following hold.}

		\begin{alignat*}{5}
			x_{a_1} &\leq x_{a_l} &< x_{a_m} &\leq x_{a_i}\ \text{ for }\ 1 &\leq l &< m &\leq i\\
			y_{b_1} &\leq y_{b_l} &< y_{b_m} &\leq y_{b_j}\ \text{ for }\ 1 &\leq l &< m &\leq j\\
			x_{c_1} &\leq x_{c_l} &< x_{c_m} &\leq x_{c_k}\ \text{ for }\ 1 &\leq l &< m &\leq k
		\end{alignat*}
		
		Then, the following \add{also} hold.
		
		\begin{enumerate}[(i)]
			\item If $A_i$ intersects $B_j$ or $C_k$, then $A_i$ intersects every $A_m$ for $m<i$.
			
			\item If $A_i$ intersects $B_j$, then $A_i$ intersects every $B_m$ for $m<j$ and  $B_j$ intersects every $A_m$ for $m<i$.
			
			\item If $A_i$ intersects $C_k$, then $A_i$ intersects every $C_m$ 
			with $m<k$ and $C_k$ intersects every $A_m$ with $m<i$. Moreover, 
			if $x_{c_k} \geq x_{a_i}$, $A_i$ intersects every $B_j$ that 
			intersects $C_k$, and if $x_{a_i} \geq x_{c_k}$, $C_k$ intersects 
			every $B_j$ that intersects $A_i$.

			\item Let $A_l$ be the disk with the maximum $x_{a_l}$ intersecting some disk from $\{C_1, \dots, C_k\}$, and $C_{l'}$ be such a disk with the maximum $x_{c_{l'}}$. Let $B_{l^*}$ be the disk among $\{B_1, \dots, B_j\}$ with the maximum $y_{b_{l^*}}$ that is intersected by both $A_l$ and $C_{l'}$. Then, the set $\{A_1$, $\dots$, $A_l$, $B_1$, $\dots$, $B_{l^*}$, $C_1$, $\dots$, $C_{l'}\}$ forms a clique, and the set $\{A_{l+1}$, $\dots$, $A_i$, $B_{l^*+1}$, $\dots$, $B_j$, $C_{l'+1}$, $\dots$, $C_k\}$ forms a disconnected unit interval graph on at least two and at most three connected components.

			\item Let $B_l$ be the disk with the maximum $y_{b_{l}}$ intersecting some disk from $\{A_1, \dots, A_i\}$ and some disk from $\{C_1, \dots, C_k\}$. Let $A_{l'}$ and  $C_{l^*}$ be the disks with the maximum $x_{a_{l'}}$ and $x_{c_{l^*}}$, respectively. If $A_{l'}$ and  $C_{l^*}$ intersect, then the set $\{A_1$, $\dots$, $A_{l'}$, $B_1$, $\dots$, $B_{l}$, $C_1$, $\dots$, $C_{l^*}\}$ forms a clique, and the set $\{A_{l'+1}$, $\dots$, $A_i$, $B_{l+1}$, $\dots$, $B_j$, $C_{l^*+1}$, $\dots$, $C_k\}$ forms a disconnected unit interval graph on at least two and at most four connected components. Otherwise, both sets $\{A_1$, $\dots$, $A_{l'}$, $B_1$, $\dots$, $B_{l}\}$ and $\{B_1$, $\dots$, $B_{l}$, $C_1$, $\dots$, $C_{l^*}\}$ form cliques, and each of the sets $\{A_{l'+1}$, $\dots$, $A_i$, $B_{l+1}$, $\dots$, $B_j\}$ and $\{B_{l+1}$, $\dots$, $B_j$, $C_{l^*+1}$, $\dots$, $C_k\}$ forms a disconnected unit interval graph on at least two and at most four connected components.
		\end{enumerate}	
	\end{lem}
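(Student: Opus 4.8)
The plan is to dispatch items (i)--(iii) by elementary distance computations, and items (iv)--(v) by combining those with two structural facts about a connected $\APUD(1,1)$. I begin by recording the three intersection criteria in the present coordinate system: two disks centred on a common axis at signed positions $p,q$ along it intersect iff $|p-q|\le 2$; a disk of $\XX^+$ at $(x,0)$ and a disk of $\YY^+$ at $(0,y)$ intersect iff $x^2+y^2\le 4$; and a disk of $\XX^+$ at $(x,0)$ and a disk of $\XX^-$ at $(-x',0)$ intersect iff $x+x'\le 2$. Each criterion is monotone in the relevant coordinate, and the three lists are given in increasing order of $x_{a_m}$, $y_{b_m}$, $x_{c_m}$, so (i)--(iii) are immediate: for instance $A_i\cap B_j\ne\emptyset$ forces $x_{a_i}\le 2$, whence $x_{a_i}-x_{a_m}\le 2$ and $A_i\cap A_m\ne\emptyset$ for every $m<i$ (item (i)); and for the ``moreover'' clause of (iii), if $x_{c_k}\ge x_{a_i}$ and a disk of $\YY^+$ at $(0,y)$ meets $C_k$, then $x_{a_i}^2+y^2\le x_{c_k}^2+y^2\le 4$, so it meets $A_i$ as well.

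The first structural fact is that $G[\XX^+]$, $G[\YY^+]$ and $G[\XX^-]$ are each connected. Indeed, a disk centred on an axis farther than distance $2$ from the origin meets no disk on any other of the four rays; hence a gap of length $>2$ in the abscissae of the $A_m$'s (say) would separate $\{A_{m+1},\dots,A_i\}$ from the rest of $G$ as a union of connected components, contradicting connectedness of $G$. Consequently, in the natural ordering, consecutive disks on each axis are adjacent, so every suffix $\{A_{l+1},\dots,A_i\}$, $\{B_{l^*+1},\dots,B_j\}$, $\{C_{l'+1},\dots,C_k\}$ is connected; this bounds the number of components of each complement set appearing in (iv)--(v) by three, one per axis. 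The second fact is just characterization item \textbf{A2} of Corollary~\ref{corl:combine}: the union of any two of $\XX^+,\YY^+,\XX^-,\YY^-$ induces a unit interval graph, and a disjoint union of unit interval graphs is again one.

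For (iv) the clique part is a direct check: by the definitions of $C_{l'}$ and $B_{l^*}$ we have $A_l\cap C_{l'}\ne\emptyset$, $A_l\cap B_{l^*}\ne\emptyset$, $C_{l'}\cap B_{l^*}\ne\emptyset$, that is $x_{a_l}+x_{c_{l'}}\le 2$, $x_{a_l}^2+y_{b_{l^*}}^2\le 4$, $x_{c_{l'}}^2+y_{b_{l^*}}^2\le 4$; combined with monotonicity these force every pair inside $\{A_1,\dots,A_l,B_1,\dots,B_{l^*},C_1,\dots,C_{l'}\}$ to satisfy its criterion, so the set is a clique. For the complement $H$, maximality of $A_l$ gives that no disk of $\{A_{l+1},\dots,A_i\}$ meets any disk of $\XX^-$, so $H$ has no $A$--$C$ edge. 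Next, according as $x_{a_l}\le x_{c_{l'}}$ or $x_{a_l}\ge x_{c_{l'}}$, no disk of the $B$-part of $H$ meets, respectively, the $C$-part or the $A$-part of $H$: e.g.\ in the first case a disk $B$ of $\{B_{l^*+1},\dots,B_j\}$ meeting some $C_n\in H$ meets $C_{l'}$ by monotonicity, hence meets $A_l$ because $x_{a_l}\le x_{c_{l'}}$, hence meets both $A_l$ and $C_{l'}$ --- contradicting the maximality of $B_{l^*}$. Thus one of the two axis-suffixes (say $\{C_{l'+1},\dots,C_k\}$) is a connected component of $H$ on its own, while the remainder of $H$ lies in $\XX^+\cup\YY^+$ and so is a unit interval graph by \textbf{A2}, with at most two components; hence $H$ is a disconnected unit interval graph with two or three components. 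Item (v) follows by the same machinery, now with a case split on whether $A_{l'}\cap C_{l^*}=\emptyset$: if not, one obtains the single clique $\{A_1,\dots,A_{l'},B_1,\dots,B_l,C_1,\dots,C_{l^*}\}$ and analyses its complement exactly as above; if so, one verifies the two overlapping cliques $\{A_1,\dots,A_{l'},B_1,\dots,B_l\}$ and $\{B_1,\dots,B_l,C_1,\dots,C_{l^*}\}$ and analyses the two complements inside $\XX^+\cup\YY^+$ and $\YY^+\cup\XX^-$, the looser bound (at most four components) reflecting that $B_l$ is here extremal only among $B$-disks meeting \emph{both} sides, so the $B$-suffix is no longer tied to a single axis.

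The step I expect to be the main obstacle is the component bookkeeping in (iv)--(v): showing the complements are genuinely \emph{unit interval} graphs rather than merely chordal, and pinning down the exact ranges ``two to three'' and ``two to four''. The manoeuvre that makes this manageable is the dichotomy $x_{a_l}\le x_{c_{l'}}$ versus $x_{a_l}\ge x_{c_{l'}}$ above (and its analogue in (v)), which peels off one axis as an isolated component and confines the rest to the union of only two of the four partition classes, where \textbf{A2} applies directly. The degenerate configurations --- an empty $B$-part of the clique, or a complement reducing to a single non-empty axis-suffix --- have to be inspected separately for the ``at least two components'' assertion.
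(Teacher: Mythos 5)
Your items (i)--(iii) coincide with the paper's proof: both are the same monotone distance computations ($|p-q|\le 2$ on a common axis, $x^2+y^2\le 4$ across perpendicular rays, $x+x'\le 2$ across opposite rays) plus the triangle-inequality observation for the ``moreover'' clause of (iii). For (iv)--(v) you take a genuinely different route. The paper establishes the clique exactly as you do, but counts the components of the complement by asserting that $B_{l^*}$ ``cannot intersect both $A_{l+1}$ and $C_{l'+1}$'' and branching on which of the two it meets; this is fragile, since $B_{l^*}$ is not itself a vertex of the complement, and the assertion is not forced by the maximality of $B_{l^*}$ alone (one can place $A_{l+1}$, $C_{l'+1}$ far out and $B_{l^*}$ low enough to meet both). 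Your mechanism --- (1) connectedness of $G$ forbids gaps of length $>2$ along a ray, so each axis-suffix is connected; (2) maximality of $A_l$ and $C_{l'}$ kills all $A$--$C$ edges in the complement; (3) the dichotomy $x_{a_l}\le x_{c_{l'}}$ versus $x_{a_l}\ge x_{c_{l'}}$ shows the $B$-suffix cannot be adjacent to both the $A$-suffix and the $C$-suffix, on pain of producing a disk above $B_{l^*}$ meeting both $A_l$ and $C_{l'}$; (4) the surviving part sits inside a union of two classes, where \textbf{A2} gives a unit interval graph --- isolates one axis-suffix cleanly and yields the $2$-to-$3$ (resp.\ $2$-to-$4$) range without appealing to adjacencies of clique vertices. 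What your approach buys is a sound and checkable component count, and an actual argument for (v), which the paper dismisses with ``directly follows from (iii) and (iv).'' Two residual points you share with the paper rather than resolve: the ``at least two components'' claim needs the nonemptiness of the relevant suffixes (you flag this), and the connectivity of each axis-suffix tacitly assumes the listed disks exhaust their respective rays.
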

	
	\begin{proof}
		\replace{We note here that the second item implies to such pairs $B_j$ and $C_k$.}{We prove the above items (i)-(v) one by one to show that the lemma holds.}
		
		\begin{enumerate}[(i)]
			\item If $A_i$ and $B_j$ intersect, then $\sqrt{{x_{a_i}}^2 + {y_{b_j}}^2} \leq 2$, and if $A_i$ and $C_k$ intersect, then $x_{a_i} + x_{c_k} \leq 2$, both implying that $x_{a_i} \leq 2$ and therefore, $x_{a_i} - x_{a_m} \leq 2$ for $0 \leq x_{a_m} \leq x_{a_i}$.
			
			\item Since $x_{a_i} = \max(x_{a_1}, \dots, x_{a_i})$, $y_{b_j} = \max(y_{b_1}, \dots, y_{b_j})$, and $\sqrt{{x_{a_i}}^2 + {y_{b_j}}^2} \leq 2$, $A_i$ intersects all $B_m$ with $m < j$ as $y_{b_m} < y_{b_j}$ and thus $\sqrt{{x_{a_i}}^2 + {y_{b_m}}^2} \leq 2$. Similarly, $B_j$ intersects all $A_m$ with $m < i$ since $x_{a_m} < x_{a_i}$ and thus $\sqrt{{x_{a_m}}^2 + {y_{b_j}}^2} \leq 2$. Note that this also applies to pairs such as $B_j$ and $C_k$ which are centered on $\XX^-$ and $\YY^+$.
			
			\item Since $x_{a_i} = \max(x_{a_1}, \dots, x_{a_i})$, $x_{c_k} = \max(x_{c_1}, \dots, x_{c_k})$, and $x_{a_i} + x_{c_k} \leq 2$, $A_i$ intersects all $C_m$ with $m < j$ as
			$x_{c_m} < x_{c_k}$ and thus $x_{a_i} + x_{c_m} \leq 2$. Similarly, $C_k$ intersects all $A_m$ with $m < i$ since $x_{a_m} < x_{a_i}$and thus $x_{a_m} + x_{c_k} \leq 2$.
			By triangle inequality, if $x_{c_k} 
			\geq x_{a_i}$ and $C_k$ intersects some $B_j$, then $2 \geq 
			\sqrt{{x_{c_k}}^2 + {y_{b_j}}^2} \geq \sqrt{{x_{a_i}}^2 + 
				{y_{b_j}}^2}$, thus $A_i$ intersects $B_j$ which also applies to 
			the case when $x_{a_i} \geq x_{c_k}$ and $A_i$ intersects some 
			$B_j$.
			
			\item Since $A_l$ intersects $C_{l'}$, $A_l$ intersects all $C_m$ with $1 \leq m \leq l'$ and $C_{l'}$ intersects all $A_m$ with $1 \leq m \leq l$ by item (iii). Since $B_{l^*}$ intersects both $A_l$ and $C_{l'}$, it intersects all $A_m$ with $1 \leq m \leq l$ and all $C_{m}$ with $1 \leq m \leq l'$ by item (ii). Moreover, $A_l$ intersects all $A_m$ with $m \leq l$, $B_{l^*}$ intersects all $B_m$ with $m \leq {l^*}$ and $C_{l'}$ intersects all $C_m$ with $m \leq l'$ by item (i). Therefore, $\{A_1$, $\dots$, $A_l$, $B_1$ $\dots$, $B_{l^*}$, $C_1$, $\dots$, $C_{l'}\}$ forms a clique. In addition, since $B_{l^*}$ has the maximum $b_{l^*}$ intersecting both $A_l$ and $C_{l'}$, it cannot intersect both $A_{l+1}$ and $C_{l'+1}$. Then, if $B_{l^*}$ intersects one of them, say $A_l$, $\{A_{l+1}$, $\dots$, $A_i$, $B_{l^*+1}$, $\dots$, $B_j$, $C_{l'+1}$, $\dots$, $C_k\}$ is a disconnected unit interval graph on two connected components, and otherwise, $\{A_{l+1}$, $\dots$, $A_i$, $B_{l^*+1}$, $\dots$, $B_j$, $C_{l'+1}$, $\dots$, $C_k\}$ is a disconnected unit interval graph on three connected components.
			
			\item It directly follows from the proofs of items (iii) and (iv).
		\end{enumerate}\qed
	\end{proof}
	
	\remove{INSTEAD OF REPEATING "Note that there might be more disks centered between the centers of $A$ and $A'$ (resp. $B$ and $B'$, $C$ and $C'$, $D$ and $D'$)." -> A' max, B' max, etc, thus this holds}
	
	The removal of a maximal clique of an interval graph may result in more than two connected components. On the other hand, for an $\APUD(1,0)$ which is a unit interval graph, we get the following.  
	
	\begin{corl}\label{corl:uigcomponents}
		Let $\mathcal{Q}$ be a maximal clique in a connected graph $G \in \APUD(1,0)$. Then, $G - \mathcal{Q}$ has at most two connected components each forming a unit interval graph.
	\end{corl}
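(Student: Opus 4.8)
The plan is to argue directly from the identity $\APUD(1,0)=\UIG$ and the one-dimensional geometry. First I would fix a representation of $G$ by unit disks whose centers lie on a single line; equivalently, assign to each vertex $v$ a coordinate $x_v\in\mathbb{R}$ with $uv\in E(G)$ if and only if $|x_u-x_v|\le 2$. Order the vertices $v_1,\dots,v_n$ so that $x_{v_1}\le\dots\le x_{v_n}$, breaking ties arbitrarily. The key preliminary observation is that, because $G$ is connected, $v_iv_{i+1}\in E(G)$ for every $1\le i<n$: otherwise $x_{v_{i+1}}-x_{v_i}>2$, hence $x_{v_k}-x_{v_j}>2$ for all $j\le i<k$, so $\{v_1,\dots,v_i\}$ and $\{v_{i+1},\dots,v_n\}$ would be nonadjacent, contradicting connectedness. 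Thus $v_1v_2\cdots v_n$ is a Hamiltonian path of $G$.

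Next I would show that a maximal clique is a block of consecutive vertices in this order. Let $\mathcal{Q}$ be maximal and put $a=\min\{i:v_i\in\mathcal{Q}\}$ and $b=\max\{i:v_i\in\mathcal{Q}\}$. For $a\le k\le b$ and any $v_q\in\mathcal{Q}$ we have $x_{v_a}\le x_{v_k},x_{v_q}\le x_{v_b}$, so $|x_{v_k}-x_{v_q}|\le x_{v_b}-x_{v_a}\le 2$, the last inequality because $v_a,v_b\in\mathcal{Q}$ are adjacent. Hence $v_k$ is adjacent to every vertex of $\mathcal{Q}$, and maximality forces $v_k\in\mathcal{Q}$; therefore $\mathcal{Q}=\{v_a,\dots,v_b\}$, and $V(G)\setminus\mathcal{Q}$ is the disjoint union of $P_1=\{v_1,\dots,v_{a-1}\}$ and $P_2=\{v_{b+1},\dots,v_n\}$.

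Finally I would verify the two facts that give the statement. First, $G-\mathcal{Q}$ has no edge between $P_1$ and $P_2$: an edge $v_iv_j$ with $i\le a-1$ and $j\ge b+1$ would give $x_{v_j}-x_{v_i}\le 2$, hence $x_{v_k}-x_{v_i}\le 2$ for every $a\le k\le b$ (since $x_{v_i}\le x_{v_k}\le x_{v_j}$), so $v_i$ would be adjacent to all of $\mathcal{Q}$ although $v_i\notin\mathcal{Q}$ --- impossible by maximality. Second, each of $G[P_1]$ and $G[P_2]$ is connected when nonempty, since it contains the subpath $v_1\cdots v_{a-1}$, respectively $v_{b+1}\cdots v_n$, of the Hamiltonian path, and all edges of that subpath lie inside the part. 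Being induced subgraphs of the unit interval graph $G$, both $G[P_1]$ and $G[P_2]$ are unit interval graphs, so $G-\mathcal{Q}$ has at most two connected components, each a unit interval graph.

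I do not expect a genuine obstacle: the argument is elementary. The only points that need care are the tie-breaking in the coordinate order and the fact that the ``consecutive vertices are adjacent'' step really uses connectedness of $G$ (the statement is false without it, which is exactly why the preceding text contrasts this with general interval graphs). If one prefers to avoid the explicit Hamiltonian path, the same conclusion can be extracted from Lemma~\ref{lem:someMandatoryEdges}(iv)--(v) by specializing to the degenerate case in which all disks lie on one line, but the direct argument above is the cleaner route.
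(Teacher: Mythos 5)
Your proof is correct, but it takes a genuinely different route from the paper. The paper's own proof is a two-line citation argument: it invokes the fact that unit interval graphs are claw-free ($K_{1,3}$-free) to conclude that deleting a maximal clique leaves at most two components, and then uses hereditariness of the class to conclude each component is again a unit interval graph. You instead work directly with a one-dimensional representation: ordering the centers, you show that connectedness forces consecutive vertices to be adjacent (a Hamiltonian path in coordinate order), that every maximal clique is a consecutive block $\{v_a,\dots,v_b\}$ in this order, and that no edge can jump over the block without violating maximality --- so the complement splits into a ``left'' and a ``right'' piece, each connected and each an induced (hence unit interval) subgraph. Your argument is longer but self-contained, and it is arguably tighter than the paper's: the step from claw-freeness to ``at most two components after deleting a maximal clique'' is not entirely immediate (the three hypothetical components need not all attach to a single clique vertex), whereas your consecutive-block argument closes that gap explicitly and additionally identifies the two components geometrically, which is exactly the form in which Lemma~\ref{lem:components} later uses the corollary. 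Both proofs are valid; yours trades brevity for transparency.
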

	
	\begin{proof}
		The removal of a maximal clique of a unit interval graph results in at most two connected components as an interval graph contains no induced $K_{1,3}$ \cite{UIGK13free}. Moreover, the class of unit interval graphs is hereditary, i.e. any induced subgraph of a unit interval graph is also a unit interval graph \cite{UIGhereditary}. Since every $\APUD(1,0)$ is a unit interval graph, the corollary holds.\qed
	\end{proof}

	\begin{lem}\label{lem:components}
		Let $\mathcal{Q}$ be a maximal clique in a connected graph $G \in \APUD(1,1)$, and let $q$ denote the number of connected components in $G - \mathcal{Q}$. Then, the following hold.
		
		\begin{enumerate}[(i)]
			\item $1 \leq q \leq 4$.
			\item If $q = 4$ or $q = 3$, then every connected component of $G - \mathcal{Q}$ is a unit interval graph.
			\item If $q = 2$, then at least one connected component of $G - \mathcal{Q}$ is a unit interval graph.
		\end{enumerate}
	\end{lem}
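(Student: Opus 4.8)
The plan is to argue entirely within a fixed $\APUD(1,1)$ embedding $\Sigma(G)$. If $\mathcal{Q}=V(G)$ the statement is vacuous, so I assume $G-\mathcal{Q}\neq\emptyset$; then $q\ge 1$ because $G$ is connected. For the remaining claims I would first record the two consequences of $\mathcal{Q}$ being a \emph{maximal} clique that drive everything: (a) every vertex outside $\mathcal{Q}$ misses some disk of $\mathcal{Q}$, and (b) every connected component of $G-\mathcal{Q}$ has a vertex adjacent to $\mathcal{Q}$. Combining (a) with the monotonicity of centre coordinates along each ray (the $\sqrt{x_a^2+y_b^2}\le 2$ and $x_a+x_c\le 2$ inequalities used in Lemma~\ref{lem:someMandatoryEdges}(i)--(iii)) shows that $\mathcal{Q}$ meets each of $\XX^+,\YY^+,\XX^-,\YY^-$ in a contiguous block of the distance-from-origin ordering, and that whenever $\mathcal{Q}$ meets two or more of these sets this block reaches the innermost disk of that ray; so in that case the disks of each ray outside $\mathcal{Q}$ form a single ``beyond'' part. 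A short argument using connectivity of $G$ (a ``beyond'' chunk cut off by a gap of width $>2$ along its ray would be unreachable from the rest of $G$, since its disks are centred at distance $>2$ from the origin and hence from every other ray) shows each such ``beyond'' part is itself connected. Note that if $\mathcal{Q}$ is non-Helly then by Lemma~\ref{lem:ONLYnonHelly} and Lemma~\ref{lem:3clique} it already meets at least three of these sets, so the non-Helly case is subsumed by the ``$\mathcal{Q}$ meets at least two rays'' case below.

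From here the bound $q\le 4$ and the unit-interval claims come out by a pigeonhole on components versus the four rays. When $\mathcal{Q}$ meets at least two rays, $V(G)\setminus\mathcal{Q}$ is covered by at most four connected ``arm-pieces'' (the ``beyond'' part of each ray $\mathcal{Q}$ meets, and the whole of each ray $\mathcal{Q}$ misses, again connected by the gap argument), each contained in a single component; hence $q\le 4$. Moreover each component is a union of arm-pieces, so if $q=4$ each component consists of the disks on a single ray and is a unit interval graph; if $q=3$ no component can involve three rays (the other two components would then share at most one ray, forcing $q\le 2$), so every component lies in the union of at most two rays and is a unit interval graph by characterization item \textbf{A2}; and if $q=2$ the same pigeonhole yields at least one component contained in at most two rays, hence a unit interval graph. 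When $\mathcal{Q}$ meets exactly one ray, say $\XX^+$, I would restrict $G$ to the three rays $\XX^+,\YY^+,\XX^-$ and to $\XX^+,\YY^-,\XX^-$ and apply Lemma~\ref{lem:someMandatoryEdges}(iv)--(v) (with $\mathcal{Q}$, up to disks centred at the origin, which are harmless by Corollary~\ref{corl:pointon00forNonHelly}, playing the role of the core clique there) together with Corollary~\ref{corl:uigcomponents} applied to $G[\XX^+]$; these already package the ``at least two and at most three/four components, all of them unit interval graphs'' conclusion, and one then checks that merging along the shared rays $\XX^+,\XX^-$ keeps the count at most $4$ and leaves at most one ``large'' (possibly non-unit-interval) component, which suffices for items (i)--(iii).

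The main obstacle, and the place where the argument is genuinely fiddly rather than routine, is the case where $\mathcal{Q}$ meets only one ray: there an extra ``inner'' chain on that ray can survive in $G-\mathcal{Q}$, so one must both show that this does not push the number of components past $4$ (using, as above, that any chain cut off at distance $>2$ from the origin is forced by connectivity of $G$ to attach to $\mathcal{Q}$, and hence collapses into one component once $\mathcal{Q}$ is deleted) and correctly certify which components are unit interval graphs via Lemma~\ref{lem:someMandatoryEdges}. A secondary nuisance is reconciling the four-ray picture with the three-ray statements of Lemma~\ref{lem:someMandatoryEdges}: one has to split along the $x$-axis, apply the lemma twice, and glue the two resulting component counts without double-counting the arm-pieces on $\XX^+\cup\XX^-$.
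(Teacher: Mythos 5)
Your overall strategy is the same as the paper's: cover $G - \mathcal{Q}$ by at most four connected, ray-indexed pieces, deduce $q \le 4$ by pigeonhole, and get the unit-interval claims from the fact (characterization \textbf{A2} plus heredity) that any component contained in the union of at most two of $\XX^+,\YY^+,\XX^-,\YY^-$ is a $\UIG$. The difference is in how the ``at most four pieces'' step is certified. The paper splits per axis: $\XX$ and $\YY$ each induce an $\APUD(1,0)$, and it invokes Corollary~\ref{corl:uigcomponents} to say each is cut into at most two components, then enumerates the possible component patterns for $q=4,3,2$. You instead split per ray and re-derive everything geometrically: $\mathcal{Q}$ occupies a contiguous, innermost block of each ray it meets (when it meets at least two), each ``beyond'' piece and each untouched ray is connected by the gap-of-width-$>2$ argument, and the pigeonhole on $4$ arm-pieces versus $q$ components gives (ii) and (iii) uniformly rather than by enumeration. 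Your version is more self-contained and in fact papers over a crack in the cited route: Corollary~\ref{corl:uigcomponents} is stated for a \emph{maximal} clique of a \emph{connected} $\APUD(1,0)$, whereas $\mathcal{Q}\cap\XX$ need not be maximal in $G[\XX]$ and $G[\XX]$ need not be connected, so your first-principles derivation is the safer one.

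The one place your sketch does not yet close is the case you yourself flag, $\mathcal{Q}$ contained in a single ray, and the sentence you offer for it is wrong as written: a chain that ``attaches to $\mathcal{Q}$'' does not collapse into one component once $\mathcal{Q}$ is deleted --- it becomes its \emph{own} component, which is exactly how a fifth piece could threaten to appear (inner part of the ray, outer part of the ray, and the three other rays). The observation that actually kills the fifth component is coordinate monotonicity in the style of Lemma~\ref{lem:someMandatoryEdges}(ii)--(iii): if $\mathcal{Q}\subseteq\YY^-$ and the inner part of $\YY^-$ is nonempty, then any disk on another ray adjacent to some disk of $\mathcal{Q}$ is automatically adjacent to every surviving inner disk (it is strictly closer to the origin), so the inner part absorbs every ray that hangs off $\mathcal{Q}$ and only the outer part, which lies entirely at distance greater than $2$ from the origin and hence meets nothing but $\YY^-$, splits off; if the inner part is empty there are only four candidate pieces to begin with. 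With that observation inserted, your argument is complete; this is also precisely the configuration the paper handles as the third sub-case of its item (iii), with $\mathcal{Y'}^-$ and $\mathcal{Y''}^-$ playing the roles of your inner and outer parts.
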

	
	\begin{proof}
		
		Recall that a set $\mathcal{S} = \{I,\dots,J\}$ of disks of $\Sigma(G)$ is said to form a special graph class if the induced graph $G[\bigcup_{k=i}^{j} v_k]$ forms that special graph class. We consider the sets \remove{$\XX$, $\YY$,} $\XX^+$, $\YY^+$, $\XX^-$, and $\YY^-$\add{, such that  $\XX^+ \cup \XX^- = \XX$ and $\YY^+ \cup \YY^- = \YY$} in some embedding $\Sigma(G)$, and assume that none of $\XX^+$, $\YY^+$, $\XX^-$ and $\YY^-$\remove{, thus none of $\XX$ and $\YY$,} is empty \replace{for the sake}{without loss} of generality.
		
		\begin{enumerate}[(i)]
			\item By Corollary~\ref{corl:uigcomponents}, removing a maximal clique from a given $\APUD(1,0)$ partitions that graph into at most two connected components.
			Since there are two induced $\APUD(1,0)$ subgraphs $\XX$ and $\YY$ of $G$, removing $\mathcal{Q}$ partitions each of $\XX$ and $\YY$ into at most two connected components, thus $G-\mathcal{Q}$ into at most four connected components.
			
			\item If $q = 4$, then removing $\mathcal{Q}$ from $G$ partitions both $\XX$ and $\YY$, thus, these four connected components of $G -\mathcal{Q}$ are exactly the sets $\XX^+ \setminus \mathcal{Q}$, $\YY^+ \setminus \mathcal{Q}$, $\XX^- \setminus \mathcal{Q}$ and $\YY^- \setminus \mathcal{Q}$, which means that they all form unit interval graphs by Lemma 6 in \cite{CSRapud}. If $q=3$, then removing $\mathcal{Q}$ from $G$ partitions at least one of $\XX$ and $\YY$. Then, up to symmetry, these three connected components of $G -\mathcal{Q}$ are either $\XX \setminus \mathcal{Q}$, $\YY^+ \setminus \mathcal{Q}$ and $\YY^- \setminus \mathcal{Q}$, or $(\XX^ + \cup \YY^+) \setminus \mathcal{Q}$, $\XX^- \setminus \mathcal{Q}$ and $\YY^- \setminus \mathcal{Q}$. In this case, all three connected components form unit interval graphs again by Lemma 6 in \cite{CSRapud}.

			\item If $q = 2$, then up to symmetry, two connected components are $(\XX^+ \cup \YY^+) \setminus \mathcal{Q}$ and $(\XX^- \cup \YY^-)  \setminus \mathcal{Q}$, or $(\XX^+ \cup \YY^+ \cup \XX^-)  \setminus \mathcal{Q}$ and $\YY^-  \setminus \mathcal{Q}$, or $(\XX^+ \cup \YY^+ \cup \XX^- \cup \mathcal{Y'}^-)  \setminus \mathcal{Q}$ and $\mathcal{Y''}^-  \setminus \mathcal{Q}$ where $\mathcal{Y'}^- \cup \mathcal{Q} \cup \mathcal{Y''}^- = \YY^-$ (i.e., $\mathcal{Q} \subseteq \YY^-$ and the removal of $\mathcal{Q}$ only separates the disks in $\YY^-$). In the former case, again by Lemma 6 in \cite{CSRapud}, both connected components form unit interval graphs. In two latter cases, only the second connected component forms a unit interval graph.\qed	
		\end{enumerate}
	\end{proof}

	\begin{figure} [h!]
		\centering
		
		\begin{subfigure}[t]{0.18\linewidth}
			\centering
			
			\begin{tikzpicture}[xscale=-0.5,yscale=0.5]
				\coordinate (x+) at (-2.25,0);
				\coordinate (x-) at (2.25,0);
				\coordinate (y+) at (0,2.25);
				\coordinate (y-) at (0,-2.25);
				
				\coordinate (a) at (-1,0);
				\coordinate (b) at (0,1);
				\coordinate (c) at (1,0);
				\coordinate (d) at (0,-1);
				\coordinate (a') at (-2,0);
				\coordinate (b') at (0,2);
				\coordinate (c') at (2,0);
				\coordinate (d') at (0,-2);
				
				\draw [line width=5pt, gray, opacity=0.4] (-1.1,0) -- (-2,0)  node[midway, above]{};
				\draw [line width=5pt, gray, opacity=0.4] (1.1,0) -- (2,0)  node[midway, above]{};
				\draw [line width=5pt, gray, opacity=0.4] (0,-1.1) -- (0,-2)  node[midway, above]{};
				\draw [line width=5pt, gray, opacity=0.4] (0,1.1) -- (0,2)  node[midway, above]{};
				
				\draw [line width=5pt, pink, opacity=0.4] (-1.35,0) -- (0,1.35)  node[midway, above]{};
				\draw [line width=5pt, pink, opacity=0.4] (1.35,0) -- (0,-1.35)  node[midway, above]{};

				\draw [dashed,<->] (x-) -- (x+) node[midway, above]{};
				\node[] at (-2.4,0) {\tiny{$x$}};
				
				\draw [dashed,<->] (y-) -- (y+) node[midway, above]{};
				\node[] at (0,2.4) {\tiny{$y$}};

				\node[] at (-0.6,0.15) {\tiny{$A$}};
				\node[] at (0.2,1.15) {\tiny{$B$}};
				\node[] at (0.6,0.15) {\tiny{$C$}};
				\node[] at (0.3,-1) {\tiny{$D$}};
				
				\draw [blue,thick] (a) -- (b) node[midway, above]{};
				
				\draw [blue,thick] (a) -- (c) node[midway, above]{};
				
				\draw [blue,thick] (a) -- (d) node[midway, above]{};
				
				\draw [blue,thick] (b) -- (c) node[midway, above]{};
				
				\draw [blue,thick,dotted] (b) .. controls (-0.4,0.) .. (d) node[midway, above]{};
				
				\draw [blue,thick] (c) -- (d) node[midway, above]{};
				
				\draw [red,thick] (a) -- (a') node[midway, above]{};
				
				\draw [red,thick] (b) -- (b') node[midway, above]{};
				
				\draw [red,thick] (c) -- (c') node[midway, above]{};
				
				\draw [red,thick] (d) -- (d') node[midway, above]{};
				
				\node[fill,circle,scale=0.3] at (-1,0) {};
				\node[fill,circle,scale=0.3] at (0,1) {};
				\node[fill,circle,scale=0.3] at (1,0) {};
				\node[fill,circle,scale=0.3] at (0,-1) {};
				\node[fill,circle,scale=0.3] at (-2,0) {};
				\node[fill,circle,scale=0.3] at (0,2) {};
				\node[fill,circle,scale=0.3] at (2,0) {};
				\node[fill,circle,scale=0.3] at (0,-2) {};
				
				\node (xa) at (0,-3.5) {(a)};
				
			\end{tikzpicture}
			\label{fig:fourUINT}
		\end{subfigure}
		~
		\begin{subfigure}[t]{0.18\linewidth}
			\centering
			\begin{tikzpicture}[xscale=-0.5,yscale=0.5]
				
				\coordinate (x-) at (2.25,0);
				\coordinate (x+) at (-2.25,0);
				\coordinate (y-) at (0,-2.25);
				\coordinate (y+) at (0,2.25);
				
				\coordinate (a) at (-1,0);
				\coordinate (b) at (0,1);
				\coordinate (c) at (1,0);
				\coordinate (d) at (0,-1);
				\coordinate (a') at (-2,0);
				\coordinate (b') at (0,2);
				\coordinate (c') at (2,0);
				\coordinate (d') at (0,-2);
				
				\draw [line width=5pt, gray, opacity=0.4] (2,0) -- (-2,0)  node[midway, above]{};
				\draw [line width=5pt, gray, opacity=0.4] (0,-1.1) -- (0,-2)  node[midway, above]{};
				\draw [line width=5pt, gray, opacity=0.4] (0,1.1) -- (0,2)  node[midway, above]{};
				
				\draw [dashed,<->] (x-) -- (x+) node[midway, above]{};
				\node[] at (-2.4,0) {\tiny{$x$}};
				
				\draw [dashed,<->] (y-) -- (y+) node[midway, above]{};
				\node[] at (0,2.4) {\tiny{$y$}};
				
				\node[] at (-0.6,0.15) {\tiny{$A$}};
				\node[] at (0.2,1.15) {\tiny{$B$}};
				\node[] at (0.6,0.15) {\tiny{$C$}};
				\node[] at (0.3,-1) {\tiny{$D$}};
				
				\draw [blue,thick] (a) -- (b) node[midway, above]{};
				
				\draw [blue,thick] (a) -- (c) node[midway, above]{};
				
				\draw [blue,thick] (a) -- (d) node[midway, above]{};
				
				\draw [blue,thick] (b) -- (c) node[midway, above]{};
				
				\draw [blue,thick,dotted] (b) .. controls (-0.4,0.) .. (d) node[midway, above]{};
				
				\draw [blue,thick] (c) -- (d) node[midway, above]{};
				
				\draw [red,thick] (a) -- (a') node[midway, above]{};
				
				\draw [red,thick] (b) -- (b') node[midway, above]{};
				
				\draw [red,thick] (c) -- (c') node[midway, above]{};
				
				\draw [red,thick] (d) -- (d') node[midway, above]{};
				
				\node[fill,circle,scale=0.3] at (-1,0) {};
				\node[fill,circle,scale=0.3] at (0,1) {};
				\node[fill,circle,scale=0.3] at (1,0) {};
				\node[fill,circle,scale=0.3] at (0,-1) {};
				\node[fill,circle,scale=0.3] at (-2,0) {};
				\node[fill,circle,scale=0.3] at (0,2) {};
				\node[fill,circle,scale=0.3] at (2,0) {};
				\node[fill,circle,scale=0.3] at (0,-2) {};
				
				\node (xb) at (0,-3.5) {(b)};
				
			\end{tikzpicture}
			\label{fig:fourUINTx}
		\end{subfigure}
		~
		\begin{subfigure}[t]{0.18\linewidth}
			\centering
			\begin{tikzpicture}[xscale=-0.5,yscale=0.5]
				
				\draw [line width=5pt, gray, opacity=0.4] (-2,0) -- (-1.35,0) -- (0,1.35) -- (0,2) -- (0,1.35) --(1.35,0) -- (2,0) node[midway, above]{};
				
				\draw [line width=5pt, gray, opacity=0.4] (0,-1.1) -- (0,-2)  node[midway, above]{};
				
				\coordinate (x-) at (2.25,0);
				\coordinate (x+) at (-2.25,0);
				\coordinate (y-) at (0,-2.25);
				\coordinate (y+) at (0,2.25);
				
				\coordinate (a) at (-1,0);
				\coordinate (b) at (0,1);
				\coordinate (c) at (1,0);
				\coordinate (d) at (0,-1);
				\coordinate (a') at (-2,0);
				\coordinate (b') at (0,2);
				\coordinate (c') at (2,0);
				\coordinate (d') at (0,-2);

				\draw [dashed,<->] (x-) -- (x+) node[midway, above]{};
				\node[] at (-2.4,0) {\tiny{$x$}};
				
				\draw [dashed,<->] (y-) -- (y+) node[midway, above]{};
				\node[] at (0,2.4) {\tiny{$y$}};
				
				\node[] at (-0.6,0.15) {\tiny{$A$}};
				\node[] at (0.2,1.15) {\tiny{$B$}};
				\node[] at (0.6,0.15) {\tiny{$C$}};
				\node[] at (0.3,-1) {\tiny{$D$}};
				
				\draw [blue,thick] (a) -- (b) node[midway, above]{};
				
				\draw [blue,thick] (a) -- (c) node[midway, above]{};
				
				\draw [blue,thick] (a) -- (d) node[midway, above]{};
				
				\draw [blue,thick] (b) -- (c) node[midway, above]{};
				
				\draw [blue,thick,dotted] (b) .. controls (-0.4,0.) .. (d) node[midway, above]{};
				
				\draw [blue,thick] (c) -- (d) node[midway, above]{};
				
				\draw [red,thick] (a) -- (a') node[midway, above]{};
				
				\draw [red,thick] (b) -- (b') node[midway, above]{};
				
				\draw [red,thick] (c) -- (c') node[midway, above]{};
				
				\draw [red,thick] (d) -- (d') node[midway, above]{};
				
				\node[fill,circle,scale=0.3] at (-1,0) {};
				\node[fill,circle,scale=0.3] at (0,1) {};
				\node[fill,circle,scale=0.3] at (1,0) {};
				\node[fill,circle,scale=0.3] at (0,-1) {};
				\node[fill,circle,scale=0.3] at (-2,0) {};
				\node[fill,circle,scale=0.3] at (0,2) {};
				\node[fill,circle,scale=0.3] at (2,0) {};
				\node[fill,circle,scale=0.3] at (0,-2) {};
				
				\node (xc) at (0,-3.5) {(c)};
				
			\end{tikzpicture}
			\label{fig:oneUINTonenonUINTdiamond1}
		\end{subfigure}
		~
		\begin{subfigure}[t]{0.18\linewidth}
			\centering
			\begin{tikzpicture}[xscale=-0.5,yscale=0.5]
				
				\draw [line width=5pt, gray, opacity=0.4] (-1.6,0) -- (-1.35,0) -- (0,1.35) -- (0,2) -- (0,1.35) --(1.35,0) -- (2,0) -- (1.35,0) -- (0,-1.35) -- (0,-2) -- (0,-1.35) -- (-1.35,0) -- (-1.55,0) node[midway, above]{};
				
				\draw [line width=5pt, gray, opacity=0.4] (-1.7,0) -- (-2,0)  node[midway, above]{};
				
				\coordinate (x-) at (2.25,0);
				\coordinate (x+) at (-2.25,0);
				\coordinate (y-) at (0,-2.25);
				\coordinate (y+) at (0,2.25);
				
				\coordinate (a) at (-1,0);
				\coordinate (b) at (0,1);
				\coordinate (c) at (1,0);
				\coordinate (d) at (0,-1);
				\coordinate (a') at (-2,0);
				\coordinate (b') at (0,2);
				\coordinate (c') at (2,0);
				\coordinate (d') at (0,-2);

				\draw [dashed,<->] (x-) -- (x+) node[midway, above]{};
				\node[] at (-2.4,0) {\tiny{$x$}};
				
				\draw [dashed,<->] (y-) -- (y+) node[midway, above]{};
				\node[] at (0,2.4) {\tiny{$y$}};
				
				\node[] at (-0.6,0.15) {\tiny{$A$}};
				\node[] at (0.2,1.15) {\tiny{$B$}};
				\node[] at (0.6,0.15) {\tiny{$C$}};
				\node[] at (0.3,-1) {\tiny{$D$}};
				
				\draw [blue,thick] (a) -- (b) node[midway, above]{};
				
				\draw [blue,thick] (a) -- (c) node[midway, above]{};
				
				\draw [blue,thick] (a) -- (d) node[midway, above]{};
				
				\draw [blue,thick] (b) -- (c) node[midway, above]{};
				
				\draw [blue,thick,dotted] (b) .. controls (-0.4,0.) .. (d) node[midway, above]{};
				
				\draw [blue,thick] (c) -- (d) node[midway, above]{};
				
				\draw [red,thick] (a) -- (a') node[midway, above]{};
				
				\draw [red,thick] (b) -- (b') node[midway, above]{};
				
				\draw [red,thick] (c) -- (c') node[midway, above]{};
				
				\draw [red,thick] (d) -- (d') node[midway, above]{};
				
				\node[fill,circle,scale=0.3] at (-1,0) {};
				\node[fill,circle,scale=0.3] at (0,1) {};
				\node[fill,circle,scale=0.3] at (1,0) {};
				\node[fill,circle,scale=0.3] at (0,-1) {};
				\node[fill,circle,scale=0.3] at (-2,0) {};
				\node[fill,circle,scale=0.3] at (0,2) {};
				\node[fill,circle,scale=0.3] at (2,0) {};
				\node[fill,circle,scale=0.3] at (0,-2) {};
				
				\node (xd) at (0,-3.5) {(d)};
				
			\end{tikzpicture}
			\label{fig:oneUINTonenonUINTdiamond2}
		\end{subfigure}
		~
		\begin{subfigure}[t]{0.18\linewidth}
			\centering
			\begin{tikzpicture}[xscale=-0.5,yscale=0.5]
				
				\draw [line width=5pt, gray, opacity=0.4] (-2,0) -- (-1.35,0) -- (0,1.35) -- (0,2) -- (0,1.35) --(1.35,0) -- (2,0) -- (1.35,0) -- (0,-1.35) -- (0,-2) node[midway, above]{};
				
				\draw [line width=5pt, pink, opacity=0.4] (-1.35,0) -- (0,-1.35) node[midway, above]{};
				
				\coordinate (x-) at (2.25,0);
				\coordinate (x+) at (-2.25,0);
				\coordinate (y-) at (0,-2.25);
				\coordinate (y+) at (0,2.25);
				
				\coordinate (a) at (-1,0);
				\coordinate (b) at (0,1);
				\coordinate (c) at (1,0);
				\coordinate (d) at (0,-1);
				\coordinate (a') at (-2,0);
				\coordinate (b') at (0,2);
				\coordinate (c') at (2,0);
				\coordinate (d') at (0,-2);

				\draw [dashed,<->] (x-) -- (x+) node[midway, above]{};
				\node[] at (-2.4,0) {\tiny{$x$}};
				
				\draw [dashed,<->] (y-) -- (y+) node[midway, above]{};
				\node[] at (0,2.4) {\tiny{$y$}};
				
				\node[] at (-0.6,0.15) {\tiny{$A$}};
				\node[] at (0.2,1.15) {\tiny{$B$}};
				\node[] at (0.6,0.15) {\tiny{$C$}};
				\node[] at (0.3,-1) {\tiny{$D$}};
				
				\draw [blue,thick] (a) -- (b) node[midway, above]{};
				
				\draw [blue,thick] (a) -- (c) node[midway, above]{};
				
				\draw [blue,thick] (a) -- (d) node[midway, above]{};
				
				\draw [blue,thick] (b) -- (c) node[midway, above]{};
				
				\draw [blue,thick,dotted] (b) .. controls (-0.4,0.) .. (d) node[midway, above]{};
				
				\draw [blue,thick] (c) -- (d) node[midway, above]{};
				
				\draw [red,thick] (a) -- (a') node[midway, above]{};
				
				\draw [red,thick] (b) -- (b') node[midway, above]{};
				
				\draw [red,thick] (c) -- (c') node[midway, above]{};
				
				\draw [red,thick] (d) -- (d') node[midway, above]{};
				
				\node[fill,circle,scale=0.3] at (-1,0) {};
				\node[fill,circle,scale=0.3] at (0,1) {};
				\node[fill,circle,scale=0.3] at (1,0) {};
				\node[fill,circle,scale=0.3] at (0,-1) {};
				\node[fill,circle,scale=0.3] at (-2,0) {};
				\node[fill,circle,scale=0.3] at (0,2) {};
				\node[fill,circle,scale=0.3] at (2,0) {};
				\node[fill,circle,scale=0.3] at (0,-2) {};
				
				\node (xe) at (0,-3.5) {(e)};
				
			\end{tikzpicture}
			\label{fig:onenonUINTpath}
		\end{subfigure}

		\caption{The possible cases in the setting of Lemma~\ref{lem:components}. }
		\label{fig:chordalS4COPONENTS}
	\end{figure}
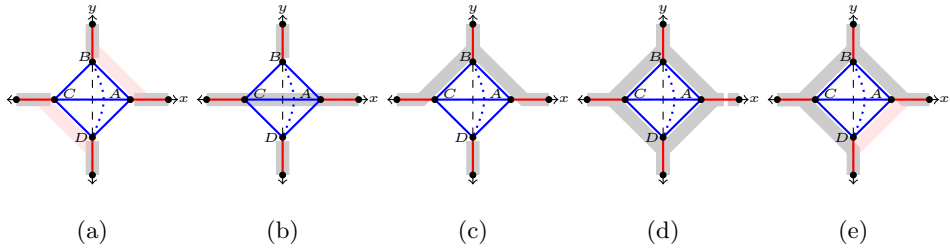

	In Figure~\ref{fig:chordalS4COPONENTS} we demonstrate the possible cases investigated in the proof of Lemma~\ref{lem:components} where the \add{thick} gray \replace{edges}{lines} denote existing \replace{connections}{intersections} and the \add{thick} pink \replace{edges}{lines} denote the possible additional \replace{connections}{intersections}. \add{The diagram in} (a) without the pink edges shows when $G \setminus \mathcal{Q}$ has four components which are all unit interval graphs. \add{If we exclude one pink line from the diagram shown in} (a) \replace{with one pink edge, then (a)} and (b) \add{together} show the cases when $G \setminus \mathcal{Q}$ has three components which are all unit interval graphs. \add{If we include both pink lines from the diagram shown in} (a), \replace{with two pink edges,}{then (a),} (c) and (d) show the cases when $G \setminus \mathcal{Q}$ has two components which are all unit interval graphs. (e) show the case when $G \setminus \mathcal{Q}$ has one component that is not a unit interval graph.

	Suppose that the set $\mathcal{L}=\{A,B,C,D\}$ is a 4-cycle in $\Sigma(G)$ which forms an induced \replace{cycle}{$C_4$}, an induced diamond or a $K_4$ such that the centers of $A,B,C$ and $D$ are at $(x_a,0)$, $(0,y_b)$, $(-x_c,0)$ and $(0,-y_d)$, respectively where $x_a,y_b,x_c,y_d \in \mathbb{R}^+$. We call $\mathcal{L}$ \emph{the minimum 4-cycle} if $x_a$, $y_b$, $x_c$ and $y_d$ are the minimum coordinates among the centers of the disks belonging to $\XX^+$, $\YY^+$, $\XX^-$ and $\YY^-$, respectively in $\Sigma(G)$.
	
	We also call $\mathcal{L}$ a \emph{minimum induced $C_4$} (analogously, a \emph{minimum induced diamond} and a \emph{minimum $K_4$}) if $(x_a+x_c) \cdot (y_b+y_d)$ is the minimum area among the areas of all induced $C_4$s (analogously, induced diamonds and $K_4$s) such that both $x_a$ and $x_c$ or both $y_b$ and $y_d$ are the minimum possible coordinates among their respective sets from $\{\XX^+,\YY^+, \XX^-,\YY^-\}$. Unless stated otherwise, we always assume that the centers of four such disks are at the coordinates $(x_a,0)$, $(0,y_b)$, $(-x_c,0)$, and $(0,-y_d)$, respectively for $x_a,y_b,x_c,y_d \in \mathbb{R}^+$. Note that the minimum 4-cycle of an $\APUD(1,1)$ may not exists and this means that there exists no minimum $K_4$, no minimum induced diamond and no minimum induced $C_4$.

	\begin{lem}\label{lem:numberOfMinCycles}
		Let $G$ be an $\APUD(1,1)$ and $\Sigma(G)$ be one of its embeddings. Then, the following hold.
		\begin{enumerate}[(i)]
			\item There is at most one minimum 4-cycle in $\Sigma(G)$.	
			\item There is at most one minimum $K_4$ in $\Sigma(G)$.
			\item There are at most four minimum induced diamonds in $\Sigma(G)$.
			\item There are at most eight minimum induced $C_4$s in $\Sigma(G)$.
		\end{enumerate}
	\end{lem}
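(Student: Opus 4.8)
The plan is to trace every case back to the four disks lying closest to the origin on the four half-axes. We may assume that each of $x^+,y^+,x^-,y^-$ carries a disk center, as otherwise $\Sigma(G)$ has no $4$-cycle. Let $\alpha$ (resp.\ $\beta$, $\gamma$, $\delta$) denote the distance from the origin to the nearest disk center lying on $x^+$ (resp.\ $y^+$, $x^-$, $y^-$), and let $A_\star,B_\star,C_\star,D_\star$ be the disks centered at $(\alpha,0)$, $(0,\beta)$, $(-\gamma,0)$, $(0,-\delta)$ respectively; these four disks are well defined, since two disks with a common center coincide. Item~(i) is then immediate: by definition a minimum $4$-cycle has $x_a=\alpha$, $y_b=\beta$, $x_c=\gamma$ and $y_d=\delta$, so its disk set equals $\{A_\star,B_\star,C_\star,D_\star\}$, and there is at most one such $4$-cycle (and none if these four disks fail to induce a $C_4$, a diamond, or a $K_4$).

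For item~(ii) I would run an argument that replaces each disk by its innermost counterpart on the same half-axis. Let $\mathcal{L}=\{A,B,C,D\}$ be a minimum $K_4$; by definition either $x_a=\alpha$ and $x_c=\gamma$, or $y_b=\beta$ and $y_d=\delta$, and we may assume the former, so $A=A_\star$ and $C=C_\star$. Since $\mathcal{L}$ is a $K_4$ we have $\alpha+\gamma\le 2$, $y_b+y_d\le 2$, and $x_a^2+y_b^2,\ x_c^2+y_b^2,\ x_a^2+y_d^2,\ x_c^2+y_d^2\le 4$. As $\beta\le y_b$ and $\delta\le y_d$, each pairwise distance among the centers of $A_\star,B_\star,C_\star,D_\star$ is at most the corresponding distance in $\mathcal{L}$ (the single unchanged one, between $A_\star$ and $C_\star$, is already at most $2$), so $\{A_\star,B_\star,C_\star,D_\star\}$ is again a $K_4$, of area $(\alpha+\gamma)(\beta+\delta)\le(\alpha+\gamma)(y_b+y_d)$. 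Minimality of the area of $\mathcal{L}$ then forces $\beta+\delta=y_b+y_d$, hence $y_b=\beta$ and $y_d=\delta$, i.e.\ $\mathcal{L}=\{A_\star,B_\star,C_\star,D_\star\}$; the remaining case is symmetric, so there is at most one minimum $K_4$.

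For items~(iii) and~(iv) the same mechanism applies, except that a diamond has exactly one absent diagonal and an induced $C_4$ has both diagonals absent, and an absent diagonal $\{A,C\}$ means $x_a+x_c>2$ whereas an absent diagonal $\{B,D\}$ means $y_b+y_d>2$; hence along an axis carrying an absent diagonal one can no longer push both disks to the innermost pair without creating the missing edge. I would branch on (a) which diagonal(s) are absent and (b) which of the pairs $\{x_a,x_c\}$, $\{y_b,y_d\}$ is the extremal one demanded by the definition. On an axis whose diagonal is present, the replacement argument of~(ii) still pins both of its disks to the innermost ones; on the extremal axis the two coordinates are pinned by definition; and on an axis that carries an absent diagonal but is not the extremal axis, minimizing the area amounts to minimizing a sum of two coordinates subject to a strict inequality of the form $x_a+x_c>2$ (or $y_b+y_d>2$), together with the intersection constraints against the disks already fixed on the other axis, whose minimizers are extremal admissible disks; their number is controlled by the coordinatewise monotonicity of the intersection conditions (cf.\ Lemma~\ref{lem:someMandatoryEdges}). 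Carrying out this case analysis (the $2\times 2$ possibilities for a diamond, and for an induced $C_4$ the strict-inequality minimization forced on both axes at once together with the $x/y$ dichotomy) yields the claimed bounds $4$ and $8$.

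The step I expect to be the main obstacle is this last one in items~(iii) and~(iv): bounding, once the combinatorial type and the extremal coordinates are fixed, how many area-minimizing completions of the configuration there can be. In item~(ii) the governing constraint is a non-strict inequality $x_a+x_c\le 2$ and the optimum collapses to the single configuration $\{A_\star,B_\star,C_\star,D_\star\}$, whereas an absent diagonal produces a strict inequality $x_a+x_c>2$ (or $y_b+y_d>2$), for which the minimizers of a coordinate sum need not form a singleton; the delicate part is to invoke Lemma~\ref{lem:4cycle} and the coordinatewise monotonicity of the intersection conditions to show that this set of minimizers is small enough that propagating the counts through the case analysis gives exactly $4$ and $8$ and not a larger number.
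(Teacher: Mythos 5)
Your items (i) and (ii) are correct and follow the paper's route; in fact your treatment of (ii) is more careful than the paper's one-sentence version. You make explicit the replacement argument: since every diagonal of a $K_4$ is governed by a non-strict constraint ($x_a+x_c\le 2$, $y_b+y_d\le 2$, $x_a^2+y_b^2\le 4$, etc.), swapping each disk for the innermost disk on its half-axis preserves all adjacencies and does not increase the area, so area-minimality collapses the configuration onto $\{A_\star,B_\star,C_\star,D_\star\}$. The paper asserts essentially this in one line; your version actually verifies it.

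For items (iii) and (iv), however, there is a genuine gap, and you have identified it yourself: you set up the same branching as the paper (which diagonal is absent, which coordinate pair is pinned to the innermost values) but you never prove that each branch contributes at most two area-minimizing completions, and you explicitly defer this as ``the main obstacle.'' That bound is the entire content of (iii) and (iv); without it the case analysis yields no numerical conclusion. The paper's argument for the per-branch bound is the following: with, say, $x_a=\alpha$ and $x_c=\gamma$ pinned and the diagonal $BD$ required to be absent, an area-minimizing admissible pair $(y_b,y_d)$ must have either $y_b$ equal to the smallest admissible value of $y_b$ over all such diamonds, or $y_d$ equal to the smallest admissible value of $y_d$; each of these two extremal choices then determines its partner coordinate (as the smallest value keeping $y_b+y_d>2$ and all required intersections), giving at most two diamonds per branch, hence $2\times 2=4$ in total, and the analogous count with each of the four disks playing the role of the pinned extremal one gives $4\times 2=8$ for induced $C_4$s. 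To close your proof you would need to supply this extremality claim for the minimizers under the strict constraint $y_b+y_d>2$ (note that, unlike in (ii), you cannot push both coordinates inward simultaneously without creating the forbidden edge, which is exactly why the argument changes shape here); as it stands, your proposal only observes that the minimizers ``need not form a singleton'' without bounding their number.
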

	
	\begin{proof}
		We prove each of these items separately.
		\begin{enumerate}[(i)] 
			\item Suppose that the minimum cycle is on the unique disks $A$, $B$, $C$ and $D$. Then, the coordinates of the centers, $x_a$, $y_b$, $x_c$ and $y_d$ are the smallest values on $x^+$, $y^+$, $x^-$, and $y^-$, respectively. Then by definition, there exists at most one minimum 4-cycle.
			\item The existence of more than one minimum $K_4$ is a contradiction since all vertices of all $K_4$s pairwise intersect as among a set of $K_4$s, one obtains the unique minimum $K_4$ (if it exists) on the four disks with minimum coordinates $x_a,y_b,x_c$ and $y_d$.
			\item If $A$ and $C$ intersect, then with respect to the minimum $x_a$ and $x_c$, there are at most two possible minimum coordinates for $y_b$ and $y_d$ one having the smallest possible $y_b$ and the other having the smallest possible $y_d$ such that $B$ and $D$ do not intersect. Similarly, if $B$ and $D$ intersect, there are at most two possible minimum coordinates for $x_a$ and $x_c$ one having the smallest possible $x_a$ and the other having the smallest possible $x_c$ such that $A$ and $C$ do not intersect. Therefore, there are at most four minimum induced diamonds. Moreover, if the minimum 4-cycle of $G$ is already a minimum induced diamond, then it is unique.
			\item Neither $A$ and $C$ nor $B$ and $D$ intersect. For $A$ with the minimum $x_a$ (analogously, for $C$ with the minimum $x_c$), there are two possible minimum coordinates for $y_b$ and $y_d$ one having the smallest possible $y_b$ and the other having the smallest possible $y_d$, and they fix the smallest possible coordinate for $x_c$ (analogously, for $x_a$) by the definition of a minimum cycle. Similarly, for $B$ with the minimum $y_b$ (analogously, for $D$ with the minimum $y_d$), there are two possible minimum coordinates for $x_a$ and $x_c$ one having the smallest possible $x_a$ and the other having the smallest possible $x_c$, and they fix the smallest possible coordinate for $y_d$ (analogously, for $y_b$) by the definition of a minimum cycle. Therefore, there are at most eight minimum induced cycles. Moreover, if the minimum 4-cycle of $G$ is already a minimum induced $C_4$, then it is unique.
		\end{enumerate}
		Therefore, the lemma holds.\qed
	\end{proof}

	\replace{In the rest of the paper}{Henceforth}, whenever we mention a minimum induced diamond or a minimum induced $C_4$, we consider any minimum 4-cycle of such a kind since they provide the same arguments in the proofs. Specifically, we spoil the obtained characterizations for some representation to recognize whether an input graph is $\APUD(1,1)$.

	\remove{check commented claim above!!!}
	
	\remove{mention that we assume no twins since they can be placed on the same center. If we dont say this, min cycle creates problems!!!}

	\remove{IN ALL OPTION-BASED LEMMAS, WHENEVER WE SAY THERE EXISTS THREE DISKS A' IN MATHCALX+, CHANGED ALL TO  A' NEQ A IN MATHCALX+}
	
	\section{Recognizing a chordal $\boldsymbol{\APUD(1,1)}$ in polynomial time}
	\label{sec:chordalpoly}
	
	In this section, we show how to recognize whether a chordal input graph $G$ is an $\APUD(1,1)$.

		\begin{figure} [t]
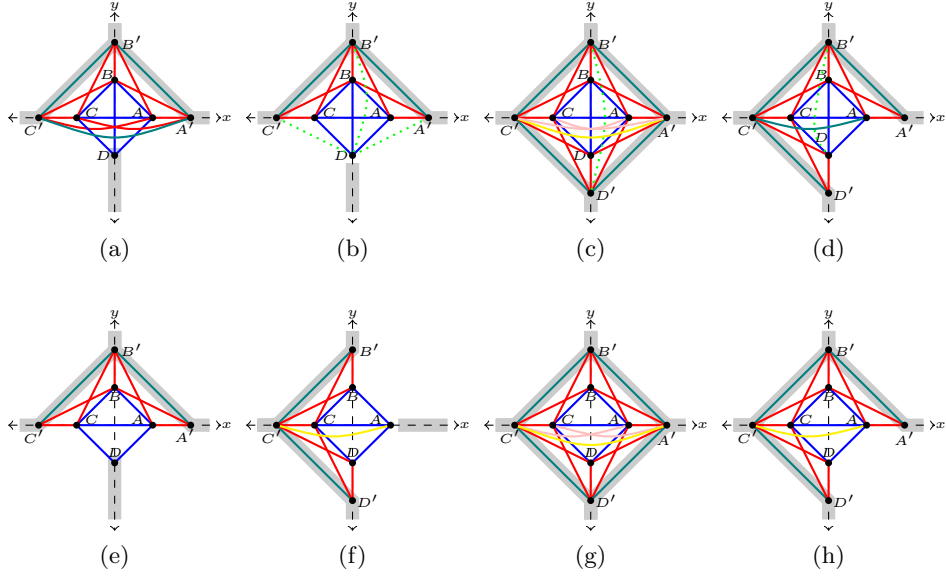

			\centering
			\begin{subfigure}[t]{0.23\linewidth}
				\centering

				\label{fig:case4Diamond}
			\end{subfigure}

			\caption{The possible cases in the setting of Lemma~\ref{lem:CHOR4into4unitINTS}.}
			\label{fig:chordalS4}
		\end{figure}

		\begin{lem}\label{lem:CHOR4into4unitINTS}
			If a given connected graph $G \in \APUD(1,1)$ is chordal and the minimum 4-cycle exists in $\Sigma(G)$, then there exists a maximal clique $\mathcal{Q}$ of $G$ such that $G - \mathcal{Q}$ has $q$ connected components each forming a unit interval graph.
		\end{lem}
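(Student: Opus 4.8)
The plan is to use chordality to restrict the type of the minimum $4$-cycle, root a separating clique at it, and then check that the leftover ``arms'' fall apart into unit interval graphs.

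\emph{Reduction.} Since $G$ is chordal it has no induced cycle of length at least $4$, so in particular no induced $C_4$. Hence the minimum $4$-cycle $\mathcal{L}=\{A,B,C,D\}$, whose centres we place at $(x_a,0)$, $(0,y_b)$, $(-x_c,0)$, $(0,-y_d)$ with $x_a,y_b,x_c,y_d$ the smallest coordinates occurring in $\XX^+,\YY^+,\XX^-,\YY^-$ respectively, is either a $K_4$ or an induced diamond; these correspond to Figure~\ref{fig:chordalS4}(a)--(d) and (e)--(h). Order the disks of each of the four sets by distance to the origin as $A_1{=}A,A_2,\dots$; $B_1{=}B,B_2,\dots$; $C_1{=}C,C_2,\dots$; $D_1{=}D,D_2,\dots$; by characterization item \textbf{A2} each of these four sequences induces a unit interval graph.

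\emph{The separating clique.} Applying the appropriate instance of Lemma~\ref{lem:someMandatoryEdges}(iv)--(v) to a triple of consecutive arms (say $\XX^+,\YY^+,\XX^-$) produces a clique consisting of initial segments $\{A_1,\dots,A_s\}$, $\{B_1,\dots,B_t\}$, $\{C_1,\dots,C_u\}$, while the rest of those three arms splits into two or three unit interval components. I would extend this to a maximal clique $\mathcal{Q}$ of all of $G$ by adjoining the largest initial segment $\{D_1,\dots,D_v\}$ of the fourth arm that still intersects everything chosen so far; this is possible, and $A,B,C,D\in\mathcal{Q}$, precisely because $x_a,y_b,x_c,y_d$ are the minimum coordinates. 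Monotonicity of the intersection relation — a disk lying on an axis between the origin and another disk of the same arm intersects everything the farther one intersects, by the triangle inequality for cross-axis pairs (as in Lemma~\ref{lem:someMandatoryEdges}(iii)) and by the length bound within an axis — then shows that $\mathcal{Q}$ meets each of the four arms in an initial segment, and that two leftover arms of adjacent quadrants share a component of $G-\mathcal{Q}$ only if their innermost leftover disks $A',B',C',D'$ are directly adjacent. The eight configurations of Figure~\ref{fig:chordalS4} are then exactly: $\mathcal{L}$ a $K_4$ or a diamond, and, within each, whether $A'$ meets $C'$ and whether $B'$ meets $D'$.

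\emph{Each component is a unit interval graph — the main obstacle.} It remains to show every component of $G-\mathcal{Q}$ is a unit interval graph; by \textbf{A2} this is immediate once the component lies inside a union of two of $\XX^+,\YY^+,\XX^-,\YY^-$, so the real task is to exclude a component spanning three (or four) of these sets. This is where the mere existence of a $4$-cycle is not enough and the \emph{minimality} of $\mathcal{L}$ enters: if, say, the leftover arms of $\XX^+,\YY^+,\XX^-$ all met, then by monotonicity $A'\sim B'$ and $B'\sim C'$, whence $A\sim B'$ and $C\sim B'$, so $\{A,B',C,D\}$ would itself be a $K_4$ or an induced diamond still carrying the minimal $x$-coordinates $x_a,x_c$; comparing it against $\mathcal{L}$ through Lemma~\ref{lem:numberOfMinCycles} and an area/minimality argument is the mechanism I expect to force the required contradiction — and it is also what rules out a claw forming at $B'$, which is the concrete obstruction to being a unit interval graph. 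Once this is verified in each of the eight configurations, $G-\mathcal{Q}$ has $q$ connected components with $1\le q\le 4$ (consistently with Lemma~\ref{lem:components}), each a unit interval graph. The diamond case needs separate bookkeeping because $\mathcal{Q}$ is grown from a triangle, so one arm of $\mathcal{L}$ attaches through its missing edge and its interaction with the opposite arm must be treated by hand — which is precisely what distinguishes Figure~\ref{fig:chordalS4}(e)--(h) from (a)--(d).
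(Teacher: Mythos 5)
Your setup matches the paper's: reduce via Lemma~\ref{lem:components} to the cases $q\in\{1,2\}$ with a non-unit-interval component, note that chordality forces the minimum 4-cycle $\mathcal{L}$ to be a $K_4$ or an induced diamond, and look for a separating maximal clique anchored at $\{A,B,C,D\}$. The gap is in your ``main obstacle'' step. You propose to \emph{exclude} a component of $G-\mathcal{Q}$ spanning three of the sets $\XX^+,\YY^+,\XX^-,\YY^-$ by deriving a contradiction from the minimality of $\mathcal{L}$: if the leftover arms all met, you would obtain a 4-cycle $\{A,B',C,D\}$ and then invoke Lemma~\ref{lem:numberOfMinCycles} together with an ``area/minimality argument''. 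No contradiction is available here. The cycle $\{A,B',C,D\}$ has $y_{b'}>y_b$, so it is simply not a minimum 4-cycle; its existence is entirely consistent with $\mathcal{L}$ being minimum, and nested cliques and diamonds around the origin are exactly the generic situation. The paper never rules this configuration out --- it responds to it by \emph{re-selecting} the separating clique: it takes the outermost bridging disks $A',B',C'$ (chosen so that the distances between the centers of the pairs $(A',B')$ and $(B',C')$ are maximum), and removes the larger maximal clique containing $\{A,B,C\}\cup\{A',B',C'\}$, or one of the variants $\{A,B,C,D,B'\}$, $\{A,B,C,B',C'\}$, and so on, depending on which pairs among $A',B',C',D'$ intersect. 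The statement is existential in the clique, and the clique must be chosen adaptively; fixing $\mathcal{Q}$ first and then trying to prove that every component of $G-\mathcal{Q}$ is a unit interval graph cannot succeed.

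A second, related problem: you build $\mathcal{Q}$ by applying Lemma~\ref{lem:someMandatoryEdges}(iv)--(v) to one triple of arms and then adjoining an initial segment of the fourth arm. That lemma only controls the three arms it is applied to; the leftover of the fourth arm (say $\YY^-$) can itself bridge the leftover of $\XX^+$ to the leftover of $\XX^-$ through a disk $D'$, recreating a component spanning three sets on the other side, and the top and bottom bridging cliques need not be compatible (their $\XX^+$- and $\XX^-$-portions can differ, so their union need not be a clique). Handling the interaction of the two bridging structures --- whether $A'$ and $D'$ intersect, whether $B'$ and $D'$ intersect, and the extra chords forced by chordality when they do not --- is precisely the case analysis that occupies the bulk of the paper's proof and is absent from yours.
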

		
		\begin{proof}
		Consider the minimum 4-cycle $\mathcal{L}=\{A,B,C,D\}$ in $\Sigma(G)$. Since $G$ is chordal, it contains no induced \replace{cycles of length $>3$}{$C_k$ for $k > 3$}. We, therefore, consider the following two cases:
		\begin{enumerate}[(i)]
			\item $\mathcal{L}$ forms a $K_4$.
			\item $\mathcal{L}$ forms an induced diamond.
		\end{enumerate}
		
		We know that $q \leq 4$, by Lemma~\ref{lem:components}.
		If all $q$ components of $G-\mathcal{Q}$ are unit interval graphs, then the lemma holds.
		We also know that if $q = 3$ or $q = 4$, then the lemma holds, again by Lemma~\ref{lem:components}.
		Therefore, we only study the following subcases:
		\begin{enumerate}[a)]
			\item $q=2$ and there exists exactly one component that is unit interval graph.
			\item $q=1$ and that component is not a unit interval graph.
		\end{enumerate}
		
		We show that if one of these cases occur, then there exists a maximal clique whose removal separates $G$ into disjoint connected components, such that all of those components are unit interval graphs. In the upcoming part of the proof, whenever we write ``$\mathcal{S}=\{A,B,C,D\}$ forms a clique'', we mean the clique $\mathcal{S'}$ consisting of  $\{A,B,C,D\}$ and the other disks that intersect all in $\{A,B,C,D\}$ as well as each other, i.e.  $\mathcal{S'} \supseteq \{\mathcal{A}',\mathcal{B}',\mathcal{C}',\mathcal{D}'\}$ where for every $A' \in \mathcal{A}' \subseteq \XX^+$, $B' \in \mathcal{B}' \subseteq \YY^+$, $C' \in \mathcal{C}' \subseteq \XX^-$, and $D' \in \mathcal{D}' \subseteq \YY^-$, $x_{a'} \leq x_a$, $y_{b'} \leq y_b$, $-x_{c'} \geq -x_c$, and $-y_{d'} \geq x_d$, respectively.

		\begin{enumerate}[(i)]
			\item $\mathcal{L}$ forms a $K_4$.
			If $\mathcal{L}$ is not a maximal clique, then it can be extended to some maximal clique $\mathcal{Q} \supsetneq \mathcal{L}$ by checking linearly many maximal cliques in $G$. Now, let us study the cases a) and b) that are mentioned above.

			\begin{enumerate}[a)]
				\item $G - \mathcal{Q}$ has two components where exactly one of them is not a unit interval graph.
				Let these components be $\Delta_1 = (\XX^+ \cup \YY^+ \cup \XX^-) \setminus \mathcal{Q}$, and $\Delta_2 = \YY^- \setminus \mathcal{Q}$, up to symmetry.
				Then, there exist at least three disks, say $A' \in \XX^+$, $B' \in \YY^+$, and $C' \in \XX^-$ such that $\{A',B',C'\} \subseteq \Delta_1$, $A'$ intersects $B'$, and $B'$ intersects $C'$.
				Moreover, these disks $A'$,$B'$ and $C'$ are such that the Euclidean distance between the centers of the pairs $(A', B')$ and $(B',C')$ are maximum.
				Note that there might be more disks centered between the centers of $A$ and $A'$ (resp. $B$ and $B'$, $C$ and $C'$).
				By triangle inequality, $A'$ intersects all in $\{A,B,B'\}$, $B'$ intersects all in $\{A,B,C,A',C'\}$, and $C'$ intersects \replace{all in $\{B,C,C'\}$}{both $B$ and $C$}.
				Note that none of the disks in $\Delta_1$ intersects a disk in $\Delta_2$ since $\Delta_1$ and $\Delta_2$ are disjoint. Now, we study the following two subcases.
				\begin{itemize}
					\item \textbf{If $\boldsymbol{A'}$ and $\boldsymbol{C'}$ intersect,} then \replace{the pairs $(A,C)$ and $(A', C')$ intersect}{$A$ intersects $C'$ and $A'$ intersects $C$} by the triangle inequality.
					In this case, $\mathcal{S} = \{A,B,C,A',B',C'\}$ is a clique.
					Since the Euclidean distance between center points of the pairs $(A', B')$ and $(B',C')$ are maximum, there exists no disks that intersect $B'$ that are farther to origin compared to $A'$ in $\XX^+$ and $C'$ in $\XX^-$.
					Thus, removing the prescribed maximal clique $\mathcal{S'} \supseteq \mathcal{S}$, the connected component $\Delta_1$ is separated into three disjoint connected components, all of which are unit interval graphs.
					Thus, there are four components all of which are unit interval graphs, including $\Delta_2$.
					
					
					\item \textbf{Otherwise, $\boldsymbol{A'}$ and $\boldsymbol{C'}$ do not intersect.} In this case, if $B'$ intersects $D$, then $\mathcal{S}=\{A,B,C,D,B'\}$ is a clique.
					Considering that $\Delta_1$ and $\Delta_2$ are disjoint, none of $A'$, $B'$ and $C'$ intersects some disk in $\Delta_2$.
					Since the Euclidean distance between center points of the pairs $(A', B')$ and $(B',C')$ are maximum, neither $A'$ nor $C'$ intersects some disk in $\YY^+ \cap \Delta_1$.
					Thus, removing $\{A,B,C,D,B'\}$ separates $G$ into four disjoint connected components, all of which are unit interval graphs. Otherwise, $B'$ does not intersect $D$ and removing the prescribed maximal clique $\mathcal{S'} \supseteq \mathcal{S}=\{A,B,C,B'\}$  separates $G$ into four disjoint connected components, all of which are unit interval graphs. 
				\end{itemize}
				
				\remove{check first intersected sets for A' and D'}
				\item $G - \mathcal{Q}$ has exactly one component $\Delta = (\XX^+ \cup \YY^+ \cup \XX^- \cup \YY^-) \setminus \mathcal{Q}$ which is not a unit interval graph.
				There must exist at least fours disks, say $A' \in \XX^+$, $B' \in \YY^+$, $C' \in \XX^-$, and $D' \in \YY^-$ such that $\{A',B',C',D'\} \subseteq \Delta$, and $A'$ intersects all in $\{A,B,B'\}$, $B'$ intersects all in $\{A,B,C,A',C'\}$, $C'$ intersects all in $\{B,C,D,B',D'\}$, and $D'$ intersects all in $\{C,D,C'\}$. 
				This is because there is only one component when the set $\{A,B,C,D\}$ is removed, and therefore the some pairs of the remaining disks should mutually intersect to preserve the connectivity.
				
				\begin{itemize}
					\item \textbf{If $\boldsymbol{A'}$ and $\boldsymbol{D'}$ intersect,} then the pairs $A'$ intersects $D$ and $A$ intersects $D'$ due to the triangle inequality. 
					Since $G$ does not contain an induced $C_4$, $A'$ intersects $C'$ or $B'$ intersects $D'$.
					If both of these pairs intersect, then the prescribed $\mathcal{S'} \supseteq \mathcal{S}=\{A,B,C,D,A',B',C',D'\}$ forms a clique and removing $\mathcal{S'}$ separates $G$ into four components, all of which are unit interval graphs.
					Otherwise, assume that $A'$ intersects $C'$ up to symmetry to the case when $B'$ intersects $D'$.
					Then, both  $\{A,B,C,D,A',B',C'\}$ and $\{A,B,C,D,A',C',D'\}$ form cliques and the removal of any of them separates $G$ into at four components, all of which are unit interval graphs.
					
					\item \textbf{Otherwise, $\boldsymbol{A'}$ and $\boldsymbol{D'}$ do not intersect.} In this case, $A$ intersects $C'$ or $D$ intersects $B'$ since otherwise, $\{A,B',C',D\}$ forms an induced $C_4$. Assume that $A$ intersects $C'$ up to symmetry to the case $D$ intersects $B'$. If $D$ and $B'$ also intersect, then the prescribed $\mathcal{S'} \supseteq \mathcal{S}=\{A,B,C,D,B',C'\}$ forms a maximal clique.
					Thus, removing $\mathcal{S'}$ separates $G$ into four components, all of which are unit interval graphs.
					Otherwise, $D$ and $B'$ do not intersect, and the prescribed $\mathcal{S'} \supseteq \mathcal{S}=\{A,B,C,B',C'\}$ forms a maximal clique and removing $\mathcal{S'}$ separates $G$ into at most four unit interval graphs.	
				\end{itemize}
			\end{enumerate}
			
			\item $\mathcal{L}$ is a diamond. Therefore, either $A$ intersects $C$, or $B$ intersects $D$. Let $A$ and $C$ intersect up to symmetry. Then both $\mathcal{S}_1=\{A,B,C\}$ and $\mathcal{S}_2=\{A,C,D\}$ form cliques. Let $\mathcal{S}_1' \supseteq \mathcal{S}_1$ and $\mathcal{S}_2' \supseteq \mathcal{S}_2$ be the prescribed maximal cliques. Assume that neither $G - \mathcal{S}_1'$ nor $G - \mathcal{S}_2'$ results in at most four (not necessarily connected) components which are all unit interval graphs. Similar to the previous case (i), $G - \mathcal{S}_1'\supseteq \mathcal{S}_1=\{A,B,C\}$ has a) two components where exactly one of them is not a unit interval graph, or b) exactly one component which is not a unit interval graph since even removing whole $\mathcal{L}$ does not separate the disks which belong to same set from $\{\XX^+,\YY^+,\XX^-,\YY^-\}$. Let us study the cases a) and b) that are mentioned above.
			\begin{enumerate}[a)]
				\item If $G - \{A,B,C\}$ has two components $\Delta_1$ and $\Delta_2$ where exactly one of them is not a unit interval graph, let $\Delta_1$ be that component. Then, up to symmetry, either $\Delta_1 = (\XX^+ \cup \YY^+ \cup \XX^-) \setminus \{A,B,C\}$ or $\Delta_1 = ( \YY^+ \cup \XX^- \cup \YY^-) \setminus \{A,B,C\}$ hold.
				
				
				\begin{itemize}
					\item \textbf{If $\boldsymbol{\Delta_1 = (\XX^+ \cup \YY^+ \cup \XX^-) \setminus \{A,B,C\}}$}, then there exist three disks $A' \in \XX^+$, $B' \in \YY^+$, and $C' \in \XX^-$ such that $\{A',B',C'\} \in \Delta_1$, $A'$ intersects $\{A,B,B'\}$, $B'$ intersects $\{A,B,C,A',C'\}$, $C'$ intersects $\{B,C,C'\}$, $D$ does not intersect $A'$ or $C'$, and $B'$ cannot intersect $D$ nor $D'$ since $B$ does not intersect $D$. If $A'$ and $C'$ intersect, then $\{A,B,C,A',B',C'\}$, else if $A'$ and $C$ intersect, then $\{A,B,C,A',B'\}$, else if $A$ and $C'$ intersect, then $\{A,B,C,B',C'\}$, and otherwise $\{A,B,C,B'\}$ forms a maximal clique and its removal results in four unit interval components.
					
					\item \textbf{Otherwise, those three disks are (up to symmetry) $\boldsymbol{B'}$, $\boldsymbol{C'}$, and $\boldsymbol{D'}$,} and intersect $\{B,C,C'\}$, $\{B,C,D,B',D'\}$ and $\{C,D,C'\}$, respectively\replace{, and}{. Note that} $B'$ cannot intersect $D$ since $B$ does not intersect $D$. \replace{Then}{Thus}, $C'$ intersects $A$ since otherwise, $\{A,B,C',D\}$ forms an induced $C_4$.
					If $A$ and $B'$ intersect, then $\{A,B,C,B',C'\}$, and otherwise, $\{A,B,C,C'\}$ forms a maximal clique and its removal results in four unit interval components.
				\end{itemize} 
				
				\item If $G - \{A,B,C\}$ has exactly one component $\Delta = (\XX^+ \cup \YY^+ \cup \XX^- \cup \YY^-) \setminus \mathcal{Q}$ which is not a unit interval, then there must exist at least fours disks, say $A'$, $B'$, $C'$, and $D'$ intersecting $\{A,B,B'\}$, $\{A,B,C,A',C'\}$, $\{B,C,D,B',D'\}$ and $\{C,D,C'\}$, respectively. $B'$ intersect neither $D$ nor $D'$, and $D'$ intersect neither $B$ nor $B'$ since $B$ and $D$ does not intersect.
				\begin{itemize}
					\item \textbf{If $\boldsymbol{A'}$ and $\boldsymbol{D'}$ intersect,} then $A'$ and $C'$ intersect since otherwise, $\{A',B',C',D'\}$ form an induced $C_4$. Therefore, $A'$ intersects $C$, and $A$ intersects $C'$. Then, each of $\{A,B,C,A',B',C'\}$ and $\{A,C,D,A',C',D'\}$ forms a maximal clique and its removal results in four unit interval components.
					
					\item \textbf{Otherwise, $\boldsymbol{A'}$ and $\boldsymbol{D'}$ do not intersect.} Then, $A$ intersects $C'$ since \remove{otherwise,} $\{A,B',C',D\}$ \replace{form}{is not} an induced $C_4$.
					\replace{Now}{Moreover}, if $A'$ intersects $D$, then, $A'$ and $C$ intersect since otherwise $\{A',B,C,D\}$ form an induced $C_4$\replace{, and}{. Also,} $A'$ and $C'$ intersect since otherwise $\{A',B,C',D\}$ form an induced $C_4$. 
					\replace{Then}{Considering these, the set} $\{A,B,C,A',B',C'\}$ forms a maximal clique and its removal results in four unit interval components. 
					\add{Because} otherwise, $A'$ and $D$ do not intersect.
					If $A$ and $D'$ intersect, then $A$ and $C'$ intersect since otherwise $\{A,C,C',D\}$ forms an induced $C_4$.
					Then, $\{A,B,C,B',C'\}$ or $\{A,B,C,A',B',C'\}$ forms a maximal clique and its removal results in four unit interval components. Else, $A$ and $D'$ do not intersect. Then, again, $\{A,B,C,B',C'\}$ or $\{A,B,C,A',B',C'\}$ forms a maximal clique and its removal results in four unit interval components.
				\end{itemize}
			\end{enumerate}
		\end{enumerate}
		
		Thus, the lemma holds.\qed
	\end{proof}

	Figure~\ref{fig:chordalS4} shows the possible cases investigated in the proof of Lemma~\ref{lem:CHOR4into4unitINTS} where the nodes correspond to disk centers, the blue edges exist in the induced diamond or $K_4$, the teal edges must exist in connected components, the red edges must exist due to triangle inequalities by the teal edges, the yellow edges must exist due to chordality, the pink edges must exist due to triangle inequalities by the yellow edges, and the green dotted edges exist in the considered subcases.
	
	\begin{lem}\label{lem:noMinThenChordal}
		If a given connected graph $G \in \APUD(1,1)$ does not contain a minimum 4-cycle, then $G$ is chordal.
	\end{lem}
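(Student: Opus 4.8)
The plan is to reduce the statement to showing that $G$ contains no induced $C_4$. By item \textbf{A1} of Corollary~\ref{corl:combine}, the largest induced cycle of the connected graph $G$ has length at most $4$, so $G$ has no induced $C_k$ for any $k\geq 5$; hence $G$ is chordal exactly when it contains no induced $C_4$. I would therefore prove the contrapositive: starting from an induced $C_4$ in $\Sigma(G)$, I will exhibit a minimum $4$-cycle, contradicting the hypothesis.

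So suppose $\{P,Q,R,S\}$ induces a $C_4$ in $\Sigma(G)$ with cyclic order $P,Q,R,S$. By Lemma~\ref{lem:4cycle}, up to the symmetries of the configuration I may assume $P\in\XX^+$, $Q\in\YY^+$, $R\in\XX^-$ and $S\in\YY^-$; in particular none of $\XX^+,\YY^+,\XX^-,\YY^-$ is empty. Let $A\in\XX^+$ be a disk whose center has the smallest first coordinate $x_a$, and choose $B\in\YY^+$, $C\in\XX^-$, $D\in\YY^-$ minimizing $y_b$, $x_c$, $y_d$ respectively. These four disks are pairwise distinct because $\XX^+,\YY^+,\XX^-,\YY^-$ partition $V(G)$, and by minimality $x_a\leq x_p$, $y_b\leq y_q$, $x_c\leq x_r$, $y_d\leq y_s$.

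The crucial step is to verify that the four consecutive edges $AB$, $BC$, $CD$, $DA$ are all present. Since $P$ and $Q$ intersect, $x_p^{2}+y_q^{2}\leq 4$, hence $x_a^{2}+y_b^{2}\leq 4$, so $A$ and $B$ intersect; the edges $BC$, $CD$, $DA$ follow in exactly the same way from the facts that $Q$ intersects $R$, $R$ intersects $S$ and $S$ intersects $P$, together with the respective coordinate inequalities. Thus the subgraph induced on $\{A,B,C,D\}$ contains the cycle through $A,B,C,D$ in that cyclic order, and the only two further possible edges are the diagonals $AC$ and $BD$; so $\{A,B,C,D\}$ is an induced $C_4$, an induced diamond, or a $K_4$ --- in every case a $4$-cycle in the sense used in the preceding section. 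Because $x_a,y_b,x_c,y_d$ are the minimum coordinates among the centers of the disks in $\XX^+,\YY^+,\XX^-,\YY^-$ respectively, this $4$-cycle is by definition a minimum $4$-cycle of $\Sigma(G)$, contradicting the assumption. Hence $G$ has no induced $C_4$ and is chordal.

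I do not expect a genuine obstacle: the heart of the argument is a single triangle-inequality (monotonicity of distance in the coordinates) observation applied four times. The two points I would treat with care are the ``up to symmetry'' reduction --- I must check that the axis relabeling used to place the induced $C_4$ one disk per ray also leaves invariant the notion of a minimum $4$-cycle, since those symmetries merely permute $\XX^+,\YY^+,\XX^-,\YY^-$ and map closest-to-origin disks to closest-to-origin disks --- and the degenerate configurations (a disk centered at the origin, or several disks with coincident centers), which are harmless here thanks to the partition of $V(G)$ and the intersection-versus-distance criterion.
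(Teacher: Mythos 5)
Your proof is correct and takes essentially the same route as the paper: both consider the four disks with minimum coordinates on the four half-axes and argue (via Lemma~\ref{lem:4cycle} and the monotonicity of the intersection condition in the coordinates, which the paper delegates to Lemma~\ref{lem:someMandatoryEdges} while you spell out the distance inequalities directly) that any induced $C_4$ forces these four disks to form a 4-cycle, hence a minimum 4-cycle. The only difference is presentational — you phrase it as a contrapositive and fill in the triangle-inequality details the paper leaves to the cited lemmas.
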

	\begin{proof}
		Let the set $\mathcal{L}=\{A,B,C,D\}$ be on the disks such that $x_a$, $y_b$, $x_c$ and $y_d$ are the minimum coordinates among the centers of the disks belonging to $\XX^+$, $\YY^+$, $\XX^-$ and $\YY^-$, respectively in $\Sigma(G)$. Since $\mathcal{L}$ does not form a 4-cycle, $G$ can not contain an induced $C_4$ by Lemma~\ref{lem:4cycle} and Lemma~\ref{lem:someMandatoryEdges}. Also, $G$ contains no induced $C_k$ for $k >4$ by the characterization item \textbf{A1} of Corollary~\ref{corl:combine}. Therefore, $G$ is chordal.\qed
	\end{proof}

	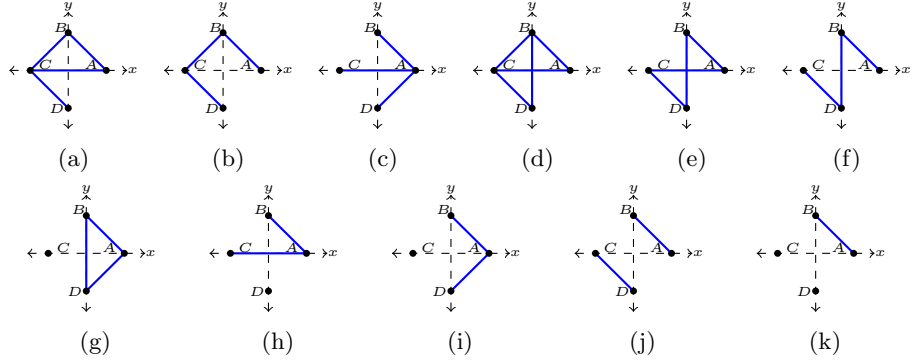
\begin{figure} [h!]
		\captionsetup[subfigure]{position=b}	
		\centering
		
		\begin{subfigure}[t]{0.14\linewidth}
			\centering
			\begin{tikzpicture}[xscale=-0.5,yscale=0.5]
				\coordinate (x-) at (1.55,0);
				\coordinate (x+) at (-1.55,0);
				\coordinate (y-) at (0,-1.55);
				\coordinate (y+) at (0,1.55);
				
				\draw [dashed,<->] (x-) -- (x+) node[midway, above]{};
				\node[] at (-1.7,0) {\tiny{$x$}};
				
				\draw [dashed,<->] (y-) -- (y+) node[midway, above]{};
				\node[] at (0,1.7) {\tiny{$y$}};

				\coordinate (a) at (-1,0);
				\coordinate (b) at (0,1);
				\coordinate (c) at (1,0);
				\coordinate (d) at (0,-1);
				
				\node[] at (-0.6,0.15) {\tiny{$A$}};
				\node[] at (0.2,1.15) {\tiny{$B$}};
				\node[] at (0.6,0.15) {\tiny{$C$}};
				\node[] at (0.3,-1) {\tiny{$D$}};
				
				\draw [blue,thick] (a) -- (b) node[midway, above]{};
				
				\draw [blue,thick] (a) -- (c) node[midway, above]{};
				
				\draw [blue,thick] (b) -- (c) node[midway, above]{};
				
				\draw [blue,thick] (c) -- (d) node[midway, above]{};
				
				\node[fill,circle,scale=0.3] at (-1,0) {};
				\node[fill,circle,scale=0.3] at (0,1) {};
				\node[fill,circle,scale=0.3] at (1,0) {};
				\node[fill,circle,scale=0.3] at (0,-1) {};
				
			\end{tikzpicture}
			\caption{}
			\label{fig:case1}
		\end{subfigure}
		~
		\begin{subfigure}[t]{0.14\linewidth}
			\centering
			\begin{tikzpicture}[xscale=-0.5,yscale=0.5]
				\coordinate (x-) at (1.55,0);
				\coordinate (x+) at (-1.55,0);
				\coordinate (y-) at (0,-1.55);
				\coordinate (y+) at (0,1.55);
				
				\draw [dashed,<->] (x-) -- (x+) node[midway, above]{};
				\node[] at (-1.7,0) {\tiny{$x$}};
				
				\draw [dashed,<->] (y-) -- (y+) node[midway, above]{};
				\node[] at (0,1.7) {\tiny{$y$}};
				
				\coordinate (a) at (-1,0);
				\coordinate (b) at (0,1);
				\coordinate (c) at (1,0);
				\coordinate (d) at (0,-1);
				
				\node[] at (-0.6,0.15) {\tiny{$A$}};
				\node[] at (0.2,1.15) {\tiny{$B$}};
				\node[] at (0.6,0.15) {\tiny{$C$}};
				\node[] at (0.3,-1) {\tiny{$D$}};
				
				\draw [blue,thick] (a) -- (b) node[midway, above]{};
				
				\draw [blue,thick] (b) -- (c) node[midway, above]{};
				
				\draw [blue,thick] (c) -- (d) node[midway, above]{};
				
				\node[fill,circle,scale=0.3] at (-1,0) {};
				\node[fill,circle,scale=0.3] at (0,1) {};
				\node[fill,circle,scale=0.3] at (1,0) {};
				\node[fill,circle,scale=0.3] at (0,-1) {};
				
			\end{tikzpicture}
			\caption{}
			\label{fig:case2}
		\end{subfigure}
		~
		\begin{subfigure}[t]{0.14\linewidth}
			\centering
			\begin{tikzpicture}[xscale=-0.5,yscale=0.5]
				\coordinate (x-) at (1.55,0);
				\coordinate (x+) at (-1.55,0);
				\coordinate (y-) at (0,-1.55);
				\coordinate (y+) at (0,1.55);
				
				\draw [dashed,<->] (x-) -- (x+) node[midway, above]{};
				\node[] at (-1.7,0) {\tiny{$x$}};
				
				\draw [dashed,<->] (y-) -- (y+) node[midway, above]{};
				\node[] at (0,1.7) {\tiny{$y$}};
				
				\coordinate (a) at (-1,0);
				\coordinate (b) at (0,1);
				\coordinate (c) at (1,0);
				\coordinate (d) at (0,-1);
				
				\node[] at (-0.6,0.15) {\tiny{$A$}};
				\node[] at (0.2,1.15) {\tiny{$B$}};
				\node[] at (0.6,0.15) {\tiny{$C$}};
				\node[] at (0.3,-1) {\tiny{$D$}};
				
				\draw [blue,thick] (a) -- (b) node[midway, above]{};
				
				\draw [blue,thick] (a) -- (c) node[midway, above]{};
				
				\draw [blue,thick] (a) -- (d) node[midway, above]{};
				
				\node[fill,circle,scale=0.3] at (-1,0) {};
				\node[fill,circle,scale=0.3] at (0,1) {};
				\node[fill,circle,scale=0.3] at (1,0) {};
				\node[fill,circle,scale=0.3] at (0,-1) {};
				
			\end{tikzpicture}
			\caption{}
			\label{fig:case3}
		\end{subfigure}
		~	
		\begin{subfigure}[t]{0.14\linewidth}
			\centering
			\begin{tikzpicture}[xscale=-0.5,yscale=0.5]
				\coordinate (x-) at (1.55,0);
				\coordinate (x+) at (-1.55,0);
				\coordinate (y-) at (0,-1.55);
				\coordinate (y+) at (0,1.55);
				
				\draw [dashed,<->] (x-) -- (x+) node[midway, above]{};
				\node[] at (-1.7,0) {\tiny{$x$}};
				
				\draw [dashed,<->] (y-) -- (y+) node[midway, above]{};
				\node[] at (0,1.7) {\tiny{$y$}};
				
				\coordinate (a) at (-1,0);
				\coordinate (b) at (0,1);
				\coordinate (c) at (1,0);
				\coordinate (d) at (0,-1);
				
				\node[] at (-0.6,0.15) {\tiny{$A$}};
				\node[] at (0.2,1.15) {\tiny{$B$}};
				\node[] at (0.6,0.15) {\tiny{$C$}};
				\node[] at (0.3,-1) {\tiny{$D$}};
				
				\draw [blue,thick] (a) -- (b) node[midway, above]{};
				
				\draw [blue,thick] (a) -- (c) node[midway, above]{};
				
				\draw [blue,thick] (b) -- (c) node[midway, above]{};
				
				\draw [blue,thick] (b) -- (d) node[midway, above]{};
				
				\draw [blue,thick] (c) -- (d) node[midway, above]{};
				
				\node[fill,circle,scale=0.3] at (-1,0) {};
				\node[fill,circle,scale=0.3] at (0,1) {};
				\node[fill,circle,scale=0.3] at (1,0) {};
				\node[fill,circle,scale=0.3] at (0,-1) {};
				
			\end{tikzpicture}
			\caption{}
			\label{fig:case4}
		\end{subfigure}
		~
		\begin{subfigure}[t]{0.14\linewidth}
			\centering
			\begin{tikzpicture}[xscale=-0.5,yscale=0.5]
				\coordinate (x-) at (1.55,0);
				\coordinate (x+) at (-1.55,0);
				\coordinate (y-) at (0,-1.55);
				\coordinate (y+) at (0,1.55);
				
				\draw [dashed,<->] (x-) -- (x+) node[midway, above]{};
				\node[] at (-1.7,0) {\tiny{$x$}};
				
				\draw [dashed,<->] (y-) -- (y+) node[midway, above]{};
				\node[] at (0,1.7) {\tiny{$y$}};
				
				\coordinate (a) at (-1,0);
				\coordinate (b) at (0,1);
				\coordinate (c) at (1,0);
				\coordinate (d) at (0,-1);
				
				\node[] at (-0.6,0.15) {\tiny{$A$}};
				\node[] at (0.2,1.15) {\tiny{$B$}};
				\node[] at (0.6,0.15) {\tiny{$C$}};
				\node[] at (0.3,-1) {\tiny{$D$}};
				
				\draw [blue,thick] (a) -- (b) node[midway, above]{};
				
				\draw [blue,thick] (a) -- (c) node[midway, above]{};
				
				\draw [blue,thick] (b) -- (d) node[midway, above]{};
				
				\draw [blue,thick] (c) -- (d) node[midway, above]{};
				
				\node[fill,circle,scale=0.3] at (-1,0) {};
				\node[fill,circle,scale=0.3] at (0,1) {};
				\node[fill,circle,scale=0.3] at (1,0) {};
				\node[fill,circle,scale=0.3] at (0,-1) {};
				
			\end{tikzpicture}
			\caption{}
			\label{fig:case5}
		\end{subfigure}
		~
		\begin{subfigure}[t]{0.14\linewidth}
			\centering
			\begin{tikzpicture}[xscale=-0.5,yscale=0.5]
				\coordinate (x-) at (1.55,0);
				\coordinate (x+) at (-1.55,0);
				\coordinate (y-) at (0,-1.55);
				\coordinate (y+) at (0,1.55);
				
				\draw [dashed,<->] (x-) -- (x+) node[midway, above]{};
				\node[] at (-1.7,0) {\tiny{$x$}};
				
				\draw [dashed,<->] (y-) -- (y+) node[midway, above]{};
				\node[] at (0,1.7) {\tiny{$y$}};
				
				\coordinate (a) at (-1,0);
				\coordinate (b) at (0,1);
				\coordinate (c) at (1,0);
				\coordinate (d) at (0,-1);
				
				\node[] at (-0.6,0.15) {\tiny{$A$}};
				\node[] at (0.2,1.15) {\tiny{$B$}};
				\node[] at (0.6,0.15) {\tiny{$C$}};
				\node[] at (0.3,-1) {\tiny{$D$}};
				
				\draw [blue,thick] (a) -- (b) node[midway, above]{};
				
				\draw [blue,thick] (b) -- (d) node[midway, above]{};
				
				\draw [blue,thick] (c) -- (d) node[midway, above]{};
				
				\node[fill,circle,scale=0.3] at (-1,0) {};
				\node[fill,circle,scale=0.3] at (0,1) {};
				\node[fill,circle,scale=0.3] at (1,0) {};
				\node[fill,circle,scale=0.3] at (0,-1) {};
				
			\end{tikzpicture}
			\caption{}
			\label{fig:case6}
		\end{subfigure}
		
		\begin{subfigure}[t]{0.17\linewidth}
			\centering
			\begin{tikzpicture}[xscale=-0.5,yscale=0.5]
				\coordinate (x-) at (1.55,0);
				\coordinate (x+) at (-1.55,0);
				\coordinate (y-) at (0,-1.55);
				\coordinate (y+) at (0,1.55);
				
				\draw [dashed,<->] (x-) -- (x+) node[midway, above]{};
				\node[] at (-1.7,0) {\tiny{$x$}};
				
				\draw [dashed,<->] (y-) -- (y+) node[midway, above]{};
				\node[] at (0,1.7) {\tiny{$y$}};
				
				\coordinate (a) at (-1,0);
				\coordinate (b) at (0,1);
				\coordinate (c) at (1,0);
				\coordinate (d) at (0,-1);
				
				\node[] at (-0.6,0.15) {\tiny{$A$}};
				\node[] at (0.2,1.15) {\tiny{$B$}};
				\node[] at (0.6,0.15) {\tiny{$C$}};
				\node[] at (0.3,-1) {\tiny{$D$}};
				
				\draw [blue,thick] (a) -- (b) node[midway, above]{};
				
				\draw [blue,thick] (a) -- (d) node[midway, above]{};
				
				\draw [blue,thick] (b) -- (d) node[midway, above]{};
				
				\node[fill,circle,scale=0.3] at (-1,0) {};
				\node[fill,circle,scale=0.3] at (0,1) {};
				\node[fill,circle,scale=0.3] at (1,0) {};
				\node[fill,circle,scale=0.3] at (0,-1) {};
				
			\end{tikzpicture}
			\caption{}
			\label{fig:case7}
		\end{subfigure}
		~
		\begin{subfigure}[t]{0.17\linewidth}
			\centering
			\begin{tikzpicture}[xscale=-0.5,yscale=0.5]
				\coordinate (x-) at (1.55,0);
				\coordinate (x+) at (-1.55,0);
				\coordinate (y-) at (0,-1.55);
				\coordinate (y+) at (0,1.55);
				
				\draw [dashed,<->] (x-) -- (x+) node[midway, above]{};
				\node[] at (-1.7,0) {\tiny{$x$}};
				
				\draw [dashed,<->] (y-) -- (y+) node[midway, above]{};
				\node[] at (0,1.7) {\tiny{$y$}};
				
				\coordinate (a) at (-1,0);
				\coordinate (b) at (0,1);
				\coordinate (c) at (1,0);
				\coordinate (d) at (0,-1);
				
				\node[] at (-0.6,0.15) {\tiny{$A$}};
				\node[] at (0.2,1.15) {\tiny{$B$}};
				\node[] at (0.6,0.15) {\tiny{$C$}};
				\node[] at (0.3,-1) {\tiny{$D$}};
				
				\draw [blue,thick] (a) -- (b) node[midway, above]{};
				
				\draw [blue,thick] (a) -- (c) node[midway, above]{};
				
				\node[fill,circle,scale=0.3] at (-1,0) {};
				\node[fill,circle,scale=0.3] at (0,1) {};
				\node[fill,circle,scale=0.3] at (1,0) {};
				\node[fill,circle,scale=0.3] at (0,-1) {};
				
			\end{tikzpicture}
			\caption{}
			\label{fig:case8}
		\end{subfigure}
		~
		\begin{subfigure}[t]{0.17\linewidth}
			\centering
			\begin{tikzpicture}[xscale=-0.5,yscale=0.5]
				\coordinate (x-) at (1.55,0);
				\coordinate (x+) at (-1.55,0);
				\coordinate (y-) at (0,-1.55);
				\coordinate (y+) at (0,1.55);
				
				\draw [dashed,<->] (x-) -- (x+) node[midway, above]{};
				\node[] at (-1.7,0) {\tiny{$x$}};
				
				\draw [dashed,<->] (y-) -- (y+) node[midway, above]{};
				\node[] at (0,1.7) {\tiny{$y$}};
				
				\coordinate (a) at (-1,0);
				\coordinate (b) at (0,1);
				\coordinate (c) at (1,0);
				\coordinate (d) at (0,-1);
				
				\node[] at (-0.6,0.15) {\tiny{$A$}};
				\node[] at (0.2,1.15) {\tiny{$B$}};
				\node[] at (0.6,0.15) {\tiny{$C$}};
				\node[] at (0.3,-1) {\tiny{$D$}};
				
				\draw [blue,thick] (a) -- (b) node[midway, above]{};
				
				\draw [blue,thick] (a) -- (d) node[midway, above]{};
				
				\node[fill,circle,scale=0.3] at (-1,0) {};
				\node[fill,circle,scale=0.3] at (0,1) {};
				\node[fill,circle,scale=0.3] at (1,0) {};
				\node[fill,circle,scale=0.3] at (0,-1) {};
				
			\end{tikzpicture}
			\caption{}
			\label{fig:case9}
		\end{subfigure}
		~
		\begin{subfigure}[t]{0.17\linewidth}
			\centering
			\begin{tikzpicture}[xscale=-0.5,yscale=0.5]
				\coordinate (x-) at (1.55,0);
				\coordinate (x+) at (-1.55,0);
				\coordinate (y-) at (0,-1.55);
				\coordinate (y+) at (0,1.55);
				
				\draw [dashed,<->] (x-) -- (x+) node[midway, above]{};
				\node[] at (-1.7,0) {\tiny{$x$}};
				
				\draw [dashed,<->] (y-) -- (y+) node[midway, above]{};
				\node[] at (0,1.7) {\tiny{$y$}};
				
				\coordinate (a) at (-1,0);
				\coordinate (b) at (0,1);
				\coordinate (c) at (1,0);
				\coordinate (d) at (0,-1);
				
				\node[] at (-0.6,0.15) {\tiny{$A$}};
				\node[] at (0.2,1.15) {\tiny{$B$}};
				\node[] at (0.6,0.15) {\tiny{$C$}};
				\node[] at (0.3,-1) {\tiny{$D$}};
				
				\draw [blue,thick] (a) -- (b) node[midway, above]{};
				
				\draw [blue,thick] (c) -- (d) node[midway, above]{};
				
				\node[fill,circle,scale=0.3] at (-1,0) {};
				\node[fill,circle,scale=0.3] at (0,1) {};
				\node[fill,circle,scale=0.3] at (1,0) {};
				\node[fill,circle,scale=0.3] at (0,-1) {};
				
			\end{tikzpicture}
			\caption{}
			\label{fig:case10}
		\end{subfigure}
		~
		\begin{subfigure}[t]{0.17\linewidth}
			\centering
			\begin{tikzpicture}[xscale=-0.5,yscale=0.5]
				\coordinate (x-) at (1.55,0);
				\coordinate (x+) at (-1.55,0);
				\coordinate (y-) at (0,-1.55);
				\coordinate (y+) at (0,1.55);
				
				\draw [dashed,<->] (x-) -- (x+) node[midway, above]{};
				\node[] at (-1.7,0) {\tiny{$x$}};
				
				\draw [dashed,<->] (y-) -- (y+) node[midway, above]{};
				\node[] at (0,1.7) {\tiny{$y$}};
				
				\coordinate (a) at (-1,0);
				\coordinate (b) at (0,1);
				\coordinate (c) at (1,0);
				\coordinate (d) at (0,-1);
				
				\node[] at (-0.6,0.15) {\tiny{$A$}};
				\node[] at (0.2,1.15) {\tiny{$B$}};
				\node[] at (0.6,0.15) {\tiny{$C$}};
				\node[] at (0.3,-1) {\tiny{$D$}};
				
				\draw [blue,thick] (a) -- (b) node[midway, above]{};
				
				\node[fill,circle,scale=0.3] at (-1,0) {};
				\node[fill,circle,scale=0.3] at (0,1) {};
				\node[fill,circle,scale=0.3] at (1,0) {};
				\node[fill,circle,scale=0.3] at (0,-1) {};
				
			\end{tikzpicture}
			\caption{}
			\label{fig:case11}
		\end{subfigure}
		
		\caption{The possible simple graphs of order four except a $K_4$, an induced diamond and an induced $C_4$, up to symmetry.}
		\label{fig:CHOR4noMinYy}
	\end{figure}
	
	Figure~\ref{fig:CHOR4noMinYy} shows all simple graphs of order four except a $K_4$, an induced diamond and an induced $C_4$. Considering the set $\mathcal{L}$ given in the proof of Lemma~\ref{lem:noMinThenChordal}, we give the following. 
	
	\begin{rem}\label{rem:allcases}
		\remove{$\mathcal{L}$ can not induce t}{T}he graphs given in Figure~\ref{fig:case4}, Figure~\ref{fig:case5} \replace{or}{and} Figure~\ref{fig:case6} \add{cannot appear in $\mathcal{L}$ as induced graphs} since the illustrated edges implies that the missing edges must exist by triangle similarity and inequality, and thus $\{A,B,C,D\}$ would form an induced diamond or a $K_4$. Also, $\mathcal{L}$ can not induce the graphs given in Figure~\ref{fig:case7} to Figure~\ref{fig:case11} as they would imply that $G$ is disconnected which contradicts our assumption. Therefore, 
		$\mathcal{L}$ induces the graphs given in Figure~\ref{fig:case1}, Figure~\ref{fig:case2} or Figure~\ref{fig:case3}.
	\end{rem}

	\begin{lem}\label{lem:CHOR4noMin}
		In a given connected graph $G \in \APUD(1,1)$, if the\add{re is no} minimum 4-cycle \remove{does not exist} in $\Sigma(G)$, then there exists a maximal clique $\mathcal{Q}$ of $G$ such that $G - \mathcal{Q}$ has $q$ connected components each forming a unit interval graph.
	\end{lem}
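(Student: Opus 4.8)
The plan is to follow the same strategy as in the proof of Lemma~\ref{lem:CHOR4into4unitINTS}, with the $K_4$/diamond dichotomy replaced by the three possible shapes of the minimum-coordinate quadruple. First I would note that, by Lemma~\ref{lem:noMinThenChordal}, $G$ is chordal, hence has only linearly many maximal cliques (which we may enumerate), and no induced $C_k$ for $k>3$. Let $\mathcal{L}=\{A,B,C,D\}$ be the disks of $\XX^+,\YY^+,\XX^-,\YY^-$ with the smallest coordinates $x_a,y_b,x_c,y_d$. By Remark~\ref{rem:allcases}, $G[\mathcal{L}]$ is one of the three graphs in Figures~\ref{fig:case1},~\ref{fig:case2},~\ref{fig:case3}: the ``paw'' with edges $AB,AC,BC,CD$; the path $A$--$B$--$C$--$D$; or the star with centre $A$ and leaves $B,C,D$.

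The key reduction I would make next is that a non-edge of $G[\mathcal{L}]$ between two of the four classes propagates to the whole classes: e.g. if $A$ and $C$ do not intersect then $x_a+x_c>2$, and since every disk of $\XX^+$ (resp.\ $\XX^-$) has $x$-coordinate at least $x_a$ (resp.\ at least $x_c$), no disk of $\XX^+$ intersects any disk of $\XX^-$; the same argument (as in Lemma~\ref{lem:someMandatoryEdges}(i)) applies to the pairs $\XX^+\YY^-$, $\YY^+\YY^-$, etc. Hence in each of the three cases we know exactly which of the four UIG blobs $\XX^+,\YY^+,\XX^-,\YY^-$ (each a unit interval graph by Lemma~6 of \cite{CSRapud}) are joined: in the paw case they realise a triangle on $\XX^+,\YY^+,\XX^-$ with $\YY^-$ pendant at $\XX^-$; in the path case they realise $\XX^+$--$\YY^+$--$\XX^-$--$\YY^-$; in the star case $\YY^+,\XX^-,\YY^-$ are pairwise non-adjacent and each joined only to $\XX^+$.

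Now by Lemma~\ref{lem:components} the removal of a maximal clique $\mathcal{Q}$ gives $1\le q\le 4$ components, all of them unit interval graphs when $q\ge 3$; so I only need to rule out the subcases ``$q=2$ with one non-UIG component'' and ``$q=1$ with a non-UIG component'', by exhibiting, in each of the three structural cases, one maximal clique $\mathcal{Q}$ that avoids these. As in Lemma~\ref{lem:CHOR4into4unitINTS}, I would take $\mathcal{Q}$ to be a small clique inside $\mathcal{L}$ together with the extremal disks $A',B',C',D'$ of the relevant classes (those with the largest coordinate in their class still intersecting that clique --- precisely the disks supplied by Lemma~\ref{lem:someMandatoryEdges}(iv)--(v)), extended to a maximal clique of $G$; in the paw and path cases this is a direct application of Lemma~\ref{lem:someMandatoryEdges}(iv)--(v) to the triple $\XX^+,\YY^+,\XX^-$, with $\mathcal{Q}$ possibly enlarged by a disk of $\YY^-$ so that the pendant blob is also separated cleanly. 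One then splits according to whether the relevant pair among $A',B',C',D'$ intersects; the absence of an induced $C_4$ forces the missing edges among $\{A,B,C,D,A',B',C',D'\}$, after which Lemma~\ref{lem:someMandatoryEdges}(iv)--(v), characterisation item \textbf{A2}, and hereditariness of the UIG class show that every component of $G-\mathcal{Q}$ is a unit interval graph. Since $G$ is chordal with linearly many maximal cliques, the algorithm simply tests all of them.

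The step I expect to be the main obstacle is the star case (Figure~\ref{fig:case3}): there $\XX^+$ is adjacent to the three pairwise non-adjacent blobs $\YY^+,\XX^-,\YY^-$, so $G$ can contain induced $K_{1,3}$'s centred in $\XX^+$, and $\mathcal{Q}$ must be taken as a clique of $\XX^+$ (extended maximally) large enough to destroy all of them yet not so large that the three blobs merge into one non-UIG component. Verifying that the $\XX^+$-disks adjacent to two of $\YY^+,\XX^-,\YY^-$ in fact form a clique (they all have $x$-coordinate below the relevant threshold, so pairwise within distance~$2$), and that each leftover blob-fragment glued to the residual part of $\XX^+$ is a unit interval graph, is where the extremal-disk bookkeeping of Lemma~\ref{lem:someMandatoryEdges} and the no-induced-$C_4$ property are used most delicately.
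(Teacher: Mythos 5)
Your proposal follows essentially the same route as the paper's own proof: the same three structural cases for the minimum-coordinate quadruple (Remark~\ref{rem:allcases}), the same reduction via Lemma~\ref{lem:components} to the two problematic values of $q$, and the same construction of $\mathcal{Q}$ from the extremal disks $A',B',C',D'$ supplied by Lemma~\ref{lem:someMandatoryEdges}, with chordality and the triangle inequality forcing the remaining edges. Your explicit ``non-edges between the minimum disks propagate to the whole classes'' observation is a clean way of making precise what the paper uses implicitly, but the argument is otherwise the same.
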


	\begin{proof}
		We know that $q \leq 4$, by Lemma~\ref{lem:components}.
		If all $q$ components of $G-\mathcal{Q}$ are unit interval graphs, then the lemma holds.
		We also know that if $q = 3$ or $q = 4$, then the lemma holds, again by Lemma~\ref{lem:components}.
		Therefore, we only study the following cases:
		\begin{enumerate}[(i)]
			\item $q=2$ and there exists exactly one component that is unit interval graph.
			\item $q=1$ and that component is not a unit interval graph.
		\end{enumerate}
		
		Since the minimum 4-cycle does not exist in $\Sigma(G)$, let us consider $\{A,B,C,D\}$ which is the induced subgraph of $G$ on $\{A,B,C,D\}$ not forming a 4-cycle where $x_a$, $y_b$, $x_c$ and $y_d$ are the minimum coordinates in $\mathcal{X}^+$, $\mathcal{Y}^+$, $\mathcal{X}^-$ and $\mathcal{Y}^-$, respectively. We, therefore, consider the following three subcases:
		\begin{enumerate}[a)]
			\item $\{A,B,C\}$ is a $K_3$ and $D$ intersect only $C$.
			\item $\{A,B,C,D\}$ is an induced path of length four in this order.
			\item $\{A,B,C,D\}$ forms an induced $K_{1,3}$ where all $\{A,C,D\}$ intersect $B$. 
		\end{enumerate}
		
		Using analogous arguments as in the proof of Lemma~\ref{lem:CHOR4into4unitINTS}, we get the following.
		
		\begin{enumerate}[(i)]
			\item $G-\{A,B,C,D\}$ has two connected components where exactly one of them is not a unit interval graph.
			These components are $\Delta_1 = (\mathcal{X}^+ \cup \mathcal{Y}^+ \cup \mathcal{X}^-) \setminus \{A,B,C,D\}$, and $\Delta_2 = \mathcal{Y}^- \setminus \{A,B,C,D\}$.
			Then, there exist at least three disks $A' \in \mathcal{X}^+$, $B' \in \mathcal{Y}^+$, and $C' \in \mathcal{X}^-$ such that $\{A',B',C'\} \subseteq \Delta_1$, $B'$ intersects both $A'$ and $C'$.
			Moreover, these disks $A'$,$B'$ and $C'$ are such that the Euclidean distance between the centers of the pairs $(A', B')$ and $(B',C')$ are maximum.
			Note that there might be more disks centered between the centers of $A$ and $A'$ (resp. $B$ and $B'$, $C$ and $C'$).
			By triangle inequality, $A'$ intersects all in $\{A,B,B'\}$, $B'$ intersects all in $\{A,B,C,A',C'\}$, and $C'$ intersects all in $\{B,C,B'\}$.
			Note that none of the disks in $\Delta_1$ intersects a disk in $\Delta_2$ since $\Delta_1$ and $\Delta_2$ are disjoint. Now, we study the following three subcases.
			\begin{enumerate}[a)]
				\item $\{A,B,C\}$ is a $K_3$ and $D$ intersect only $C$. If $A'$ and $C'$ intersect, then, $\{A,B,C,A',B',C'\}$ forms a maximal clique and its removal results in four unit interval components.
				Otherwise, if $A$ and $C'$ (analogously, $A'$ and $C$) intersect, $\{A,B,C,B',C'\}$ (analogously, $\{A,B,C,A',B'\}$), and else, i.e. neither $A$ and $C'$ nor $A'$ and $C$ intersect, $\{A,B,C,B'\}$ forms a maximal clique and its removal results in four unit interval components.
				
				\item $\{A,B,C,D\}$ is an induced path of length four in this order. Then, $\{B,C,B',C'\}$ may not form a maximal clique or its removal may not result in four unit interval components. If the removal of  $\{B,C,B',C'\}$ does not result in four unit interval components, then there exists another disk $B'' \in \mathcal{Y}^+$ intersecting $A$ such that the Euclidean distance between the centers of $(A,B'')$ is the maximum. Then, the following hold.	
				
				\begin{align*}
					{x_a}^2 + {y_{b''}}^2 \leq 4, \text{  } {x_c}^2 + {x_{c'}}^2  \leq 4, \text{  } {x_a}^2 + {x_{c}}^2 & > 4 \\
					\therefore {x_a}^2 + {x_c}^2 + {y_{b''}}^2 + {x_{c'}}^2 \leq 8, \text{  } {y_{b''}}^2 + {x_{c'}}^2 < 4, \text{  } \sqrt{{y_{b''}}^2 + {x_{c'}}^2}& < 2\\			
				\end{align*}
				Then, $\{B,C,B',C',B''\}$ forms a maximal clique and its removal results in four unit interval components.
				
				\item $\{A,B,C,D\}$ forms an induced $K_{1,3}$ where all $\{A,C,D\}$ intersect $B$. Then, both $\{A,B,A',B'\}$ and $\{B,C,B',C'\}$ form maximal cliques and the removal of any of them results in four unit interval components.
			\end{enumerate}
			
			\item $G-\{A,B,C,D\}$ has one connected component $\Delta = (\mathcal{X}^+ \cup \mathcal{Y}^+ \cup \mathcal{X}^- \cup \mathcal{Y}^+) \setminus \{A,B,C,D\}$ which is not a unit interval graph. Then, there exist at least four disks $A' \in \mathcal{X}^+$, $B' \in \mathcal{Y}^+$, $C' \in \mathcal{X}^-$ and $D' \in \mathcal{Y}^-$ such that $\{A',B',C',D'\} \subseteq \Delta$.
			Note that there might be more disks centered between the centers of $A$ and $A'$ (resp. $B$ and $B'$, $C$ and $C'$, $D$ and $D'$).
			Now, we study the following three subcases.
			
			\begin{enumerate}[a)]
				\item $\{A,B,C\}$ is a $K_3$ and $D$ intersect only $C$. $A'$ intersects $B'$, $B'$ intersects $C'$ and $C'$ intersects $D'$. Moreover, these disks $A'$,$B'$, $C'$ and $D'$ are such that the Euclidean distance between the centers of the pairs $(A', B')$, $(B',C')$ and $(C',D')$ are maximum.
				By triangle inequality, $A'$ intersects all in $\{A,B,B'\}$, $B'$ intersects all in $\{A,B,C,A',C'\}$, $C'$ intersects all in $\{B,C,D,B',D'\}$, and $D'$ does not intersect $A'$ since $D$ does not intersect $A$. If $A'$ and $C'$ intersect, then $\{A,B,C,A',B',C'\}$ forms a maximal clique and its removal results in four unit interval components.
				Otherwise, if $A$ and $C'$ (analogously, $A'$ and $C$) intersect, $\{A,B,C,B',C'\}$ (analogously, $\{A,B,C,A',B'\}$), and else, i.e. neither $A$ and $C'$ nor $A'$ and $C$ intersect, $\{B,C,B',C'\}$ forms a maximal clique and its removal results in four unit interval components. 
				
				\item $\{A,B,C,D\}$ is an induced path of length four in this order. $A'$ intersects $B'$, $B'$ intersects $C'$, and $C'$ intersects $D'$.
				Moreover, these disks $A'$, $B'$, $C'$ and $D'$ are such that the Euclidean distance between the centers of the pairs $(A', B')$, $(B',C')$ and $(C',D')$ are maximum. By triangle inequality, $A'$ intersects all in $\{A,B,B'\}$, $B'$ intersects all in $\{A,B,C,A',C'\}$, $C'$ intersects all in $\{B,C,D,B',D'\}$, and $D'$ intersects all in $\{C,D,C'\}$. Then, $\{B,C,B',C'\}$ may not form a maximal clique or its removal may not result in four unit interval components. If the removal of  $\{B,C,B',C'\}$ does not result in four unit interval components, then there exists another disk $B'' \in \mathcal{Y}^+$ intersecting $A$ such that the Euclidean distance between the centers of $(A,B'')$ is the maximum or another disk $C'' \in \mathcal{X}^+-$ intersecting $D$ such that the Euclidean distance between the centers of $(D,C'')$ is the maximum. If both $B''$ and $C''$ exist, then the following hold.	
				\begin{align*}
					{x_a}^2 + {y_{b''}}^2 \leq 4, \text{  }
					{y_d}^2 + {x_{c''}}^2 \leq 4, \text{  }
					{x_a}^2 + {y_d}^2 & > 4\\
					\therefore 
					{x_a}^2 + {y_d}^2 + {y_{b''}}^2 + {x_{c''}}^2 \leq 8, \text{  }
					{y_{b''}}^2 + {x_{c''}}^2 < 4, \text{  }
					\sqrt{{y_{b''}}^2 + {x_{c''}}^2} &< 2
					\\
					\therefore \sqrt{{y_{b''}}^2 + {e}^2} < 2 \text{ for } e \in \{x_{c'},x_{c''}\}
					, \text{  }
					\sqrt{{x_{c''}}^2 + {f}^2} < 2 \text{ for } f \in \{y_{b'},&y_{b''}\}	
				\end{align*}
				Then, $\{B,C,B',C',B'',C''\}$ forms a maximal clique and its removal results in four unit interval components. If only one of $B''$ and $C''$ exists, say $B''$ up to symmetry, then, it follows from item (ii) a) that $\{B,C,B',C',B''\}$ forms a maximal clique and its removal results in four unit interval components.
				
				\item $\{A,B,C,D\}$ forms an induced $K_{1,3}$ where all $\{A,C,D\}$ intersect $B$. $B'$ intersects $A'$, $C'$ and $D'$.
				Moreover, these disks $A'$, $B'$, $C'$ and $D'$ are such that the Euclidean distance between the centers of the pairs $(A', B')$, $(B',C')$ and $(B',D')$ are maximum. By triangle inequality, $A'$ intersects all in $\{A,B,B'\}$, $B'$ intersects all in $\{A,B,C,D,A',C',D'\}$, $C'$ intersects all in $\{B,C,B'\}$, and $D'$ intersects all in $\{B,D,B'\}$. Then, all $\{A,B,A',B'\}$, $\{B,C,B',C'\}$, $\{B,D,B',D'\}$ form maximal cliques and the removal of any of them results in four unit interval components.
			\end{enumerate}
		\end{enumerate}
		Thus, the lemma holds.\qed
	\end{proof}

	Figure~\ref{fig:CHOR4noMin} shows the possible cases investigated in the proof of Lemma~\ref{lem:CHOR4noMin} where the nodes correspond to disk centers, the blue edges exist in the considered induced graph on $\{A,B,C,D\}$, the teal edges must exist in connected components, the red edges must exist due to triangle inequalities by the teal edges, and the green dotted edges exist in the considered subcases.

	\begin{figure} [t]
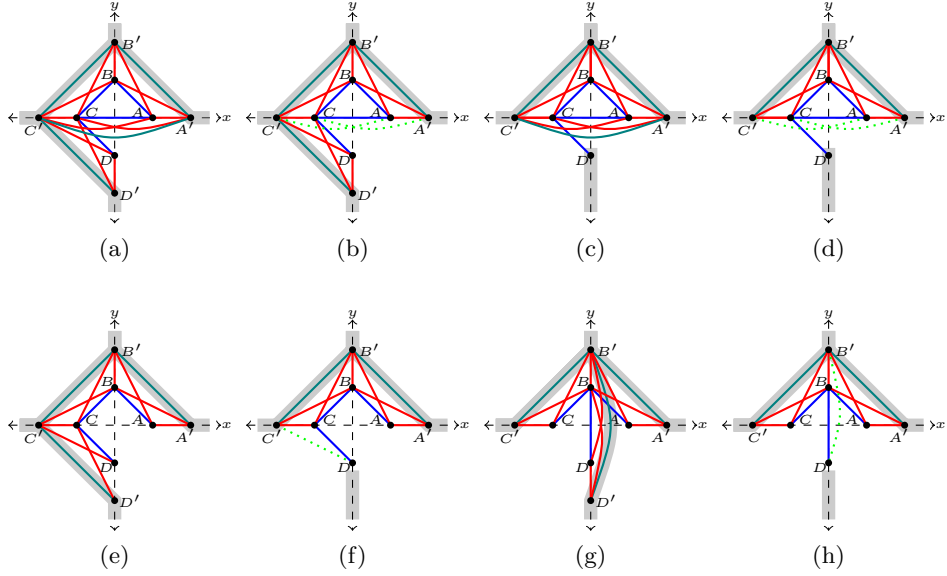

		\centering
		
		\begin{subfigure}[t]{0.23\linewidth}
			\centering

			\label{fig:case1K4xxxy}
		\end{subfigure}
		
		\caption{The possible cases in the setting of Lemma~\ref{lem:CHOR4noMin}.}
		\label{fig:CHOR4noMin}
	\end{figure}

	For a tree $T$, a \emph{$T$-graph} is the intersection graph of connected subtree of a subdivision of $T$. The complete bipartite graph $K_{1,d}$ is called a \emph{star} with $d$ rays, and also denoted by $S_d$.
	An \emph{$S_d$-graph} is the intersection graph of connected substars of a subdivision of the star $S_d$ with $d$ rays and they form a subset of $T$-graphs. Every $S_d$-graph $G$ contains a maximal clique $\mathcal{Q}$ such that the partial order on the connected components of $G-\mathcal{Q}$ can be covered by four chains each forming an interval graph and $S_d$-graphs can be recognized in polynomial time~\cite{zemanWG}. It is known that every chordal graph is a $T$-graph for some tree $T$ \cite{chordalityInters} and next, we prove a stronger result considering a chordal $\APUD(1,1)$.
	
	\begin{lem}\label{lem:Sd}
		If a given graph $G \in \APUD(1,1)$ is chordal, then $G \in S_4$-graph.
	\end{lem}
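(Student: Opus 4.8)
The plan is to feed the structural handle on $\APUD(1,1)$ graphs built in the previous sections into the recognition machinery for $S_d$-graphs of~\cite{zemanWG}: it should suffice to produce a maximal clique $\mathcal{Q}$ of $G$ such that the connected components of $G-\mathcal{Q}$, ordered by their attachments to $\mathcal{Q}$, can be covered by four chains, each of which together with $\mathcal{Q}$ induces an interval graph, since~\cite{zemanWG} characterizes $S_4$-graphs by exactly this data (the four chains being the four legs, and $\mathcal{Q}$ the center, of a subdivided $4$-star). First I would reduce to $G$ connected: in an $\APUD(1,1)$ embedding two distinct components cannot both contain disks arbitrarily close to the origin, so all but one component of a disconnected $G$ is a unit interval graph, and those components are strung out along a prolonged leg of the spider produced for the remaining component. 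So assume $G$ connected and fix an embedding $\Sigma(G)$. If $\Sigma(G)$ contains a minimum $4$-cycle, Lemma~\ref{lem:CHOR4into4unitINTS} yields the desired $\mathcal{Q}$; otherwise $G$ is chordal by Lemma~\ref{lem:noMinThenChordal} and Lemma~\ref{lem:CHOR4noMin} yields it. In both cases $\mathcal{Q}$ is a maximal clique with $G-\mathcal{Q}$ having $q\le4$ connected components (Lemma~\ref{lem:components}), each a unit interval graph.

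What remains is to check that these $q\le4$ components organize into four chains as required. Each component $\Delta_j$ lies on one of the half-axis configurations enumerated in the proof of Lemma~\ref{lem:components} (one, two, or three of $\XX^+,\YY^+,\XX^-,\YY^-$ together with the origin region), and since $\mathcal{Q}$ is a clique it occupies a consecutive block of disks on each half-axis, so the part of $\Delta_j$ on a given half-axis is a single outward stretch. The key monotonicity, already the engine of Lemma~\ref{lem:someMandatoryEdges}, is that moving outward along a half-axis the predicate ``the center of a fixed disk of $\mathcal{Q}$ is within distance $2$'' can only switch from true to false; hence the neighborhood in $\Delta_j$ of any $v\in\mathcal{Q}$ is, on each half-axis, an initial segment of the outward order, and across half-axes the ``corner'' adjacencies (a disk at $(x,0)$ meets one at $(0,y)$ iff $x^2+y^2\le4$) are nested as well. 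These laminarity facts should let one route each half-axis stretch of each $\Delta_j$ along its own leg of the $4$-star, with disks nearer the origin placed nearer the center and with each $v\in\mathcal{Q}$ covering exactly the prescribed initial segments; summing over the components, the number of legs used equals the number of half-axis stretches present in $G-\mathcal{Q}$, which the case analysis of Lemma~\ref{lem:components} bounds by four.

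The step I expect to be the main obstacle is precisely this organization when a single component spans two or three half-axes (the $q\le3$ cases): its stretches occupy several legs, and the edges of the component that cross the origin must be realized near the center of the star, simultaneously with all the prefix-type attachments from $\mathcal{Q}$ and without creating any unwanted intersection between stretches of distinct components or between a component stretch and a non-neighbor in $\mathcal{Q}$. Establishing that all the relevant adjacency patterns are laminar along and across the half-axes — so that the four legs, glued at the center, carry exactly the interval graphs demanded by~\cite{zemanWG} — is where the detailed work lies, but it should follow from Lemma~\ref{lem:someMandatoryEdges} together with the fact that each $\Delta_j$ is a unit interval graph supported on a fixed family of half-axes.
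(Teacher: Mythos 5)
Your proposal follows essentially the same route as the paper: invoke the characterization of $S_4$-graphs from \cite{zemanWG} via a maximal clique $\mathcal{Q}$ whose removal leaves components coverable by four chains each forming an interval graph, and obtain that clique from Lemma~\ref{lem:CHOR4into4unitINTS} (when a minimum $4$-cycle exists) or Lemma~\ref{lem:CHOR4noMin} (otherwise, via Lemma~\ref{lem:noMinThenChordal}). You are in fact more explicit than the paper about the final step — verifying that the attachments on $\mathcal{Q}$ are laminar/chain-like using the monotonicity of Lemma~\ref{lem:someMandatoryEdges} — which the paper's proof passes over with an unargued ``if the attachments of each of them forms a chain by inclusion on $\mathcal{Q}$''.
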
  
	\begin{proof}
		It follows from that an $S_4$-graph $G$ contains a maximal clique $\mathcal{Q}$ such that the partial order on the connected components of $G-\mathcal{Q}$ can be covered by four chains each forming an interval graph \cite{zemanWG}.
		
		Formally, since $G$ is chordal, it has linearly many maximal cliques which can be listed in linear time \cite{recogChordaLinear}. Then, one can identify all maximal cliques which adapts the setting of Lemma~\ref{lem:CHOR4into4unitINTS} or Lemma~\ref{lem:CHOR4noMin} depending on whether the minimum 4-cycle exists in some $\Sigma(G)$ or not. Among them, any maximal clique $\mathcal{Q}$ which results in at most four unit interval graphs can be placed on the center of $S_4$ and those at most four unit interval graphs can be placed on four rays of $S_4$ if the attachments of each of them forms a chain by inclusion on $\mathcal{Q}$. Since $G$ is an $\APUD(1,1)$, there exists at least one such $\mathcal{Q}$ by Lemma~\ref{lem:CHOR4into4unitINTS} and Lemma~\ref{lem:CHOR4noMin}.\qed
	\end{proof}

	Since any induced $C_4$ in an $\APUD(1,1)$ is on four disks belonging to distinct sets in $\{\XX^+,\YY^+,\XX^-,\YY^-\}$ by Lemma~\ref{lem:4cycle}, Lemma~\ref{lem:Sd} immediately gives the following.
	
	\begin{corl}\label{corl:S3}
		If a given graph $G \in \APUD(1,1)$ is chordal and has an embedding $\Sigma(G)$ such that every disk belongs to  $\XX^+$, $\YY^+$ or $\XX^-$ (up to symmetry) in $\Sigma(G)$, then $G \in S_3$-graph.
	\end{corl}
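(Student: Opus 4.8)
The plan is to feed the $S_4$-graph certificate of Lemma~\ref{lem:Sd} into the $S_d$-graph characterization of~\cite{zemanWG} with $d=3$, after checking that the absence of one of the four sets of disks forces one fewer component. In detail: since $G$ is a chordal $\APUD(1,1)$, Lemma~\ref{lem:Sd} (through Lemma~\ref{lem:CHOR4into4unitINTS} and Lemma~\ref{lem:CHOR4noMin}) provides a maximal clique $\mathcal{Q}$ of $G$ such that every connected component of $G-\mathcal{Q}$ induces a unit interval graph and its attachment on $\mathcal{Q}$ forms a chain by inclusion; this is exactly a witness that $G$ is an $S_4$-graph, with $\mathcal{Q}$ placed at the centre and the components on the rays. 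It therefore only remains to bound by three the number $q$ of connected components of $G-\mathcal{Q}$.

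The key step is a direct appeal to Lemma~\ref{lem:components}: it gives $q\le 4$, and in the proof of its item~(ii) the case $q=4$ can occur only when the four components are exactly $\XX^+\setminus\mathcal{Q}$, $\YY^+\setminus\mathcal{Q}$, $\XX^-\setminus\mathcal{Q}$ and $\YY^-\setminus\mathcal{Q}$. Under the hypothesis that every disk of $\Sigma(G)$ lies in $\XX^+$, $\YY^+$ or $\XX^-$ we have $\YY^-=\emptyset$, so $\YY^-\setminus\mathcal{Q}=\emptyset$ cannot be a component and hence $q\le 3$. This is the structural reason behind the remark preceding the statement: by Lemma~\ref{lem:4cycle} a genuine $C_4$-type obstruction requires a disk in each of the four sets, and here one set is empty, so only three ``rays'' can ever be occupied; no new geometric estimate is needed.

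Combining the two, $G-\mathcal{Q}$ has at most three connected components, each a unit interval graph, each attaching on $\mathcal{Q}$ as a chain by inclusion, so placing $\mathcal{Q}$ at the centre of $S_3$ and these components on its three rays and applying the characterization of~\cite{zemanWG} with $d=3$ yields $G\in S_3$-graph. I expect the only point worth a second look to be the rigidity of the $q=4$ configuration of Lemma~\ref{lem:components} --- namely that the four components must coincide with $\XX^+\setminus\mathcal{Q},\YY^+\setminus\mathcal{Q},\XX^-\setminus\mathcal{Q},\YY^-\setminus\mathcal{Q}$ --- but this is already proved there, so the corollary really is an immediate consequence of Lemma~\ref{lem:Sd}, Lemma~\ref{lem:components} and Lemma~\ref{lem:4cycle}.
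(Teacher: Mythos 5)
Your proposal is correct and follows essentially the same route as the paper: the paper also treats the corollary as an immediate specialization of Lemma~\ref{lem:Sd}, observing (via Lemma~\ref{lem:4cycle}) that with one of the four sets $\XX^+,\YY^+,\XX^-,\YY^-$ empty only three rays of the star can be occupied. Your extra detour through Lemma~\ref{lem:components} to pin down that $q\le 3$ when $\YY^-=\emptyset$ merely makes explicit what the paper leaves implicit, so there is nothing substantively different to flag.
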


	A \emph{$4$-sun} is a graph on eight vertices such that four of them form a $K_4$, and each of the other vertices is adjacent to a distinct pair of vertices from that $K_4$.

	\begin{rem}\label{lem:nosubclass}
		The graph class $\APUD(1,1)$ is neither a subclass nor a superclass of $S_d$-graphs. This directly follows from the characterization item \textbf{A1}. A $4$-sun is a forbidden induced subgraph for an $\APUD(1,1)$ while $S_d$-graphs for $d \geq 4$ can contain an induced $4$-sun. On the other side, all $S_d$-graphs are chordal meaning that they do not contain any induced cycles while an $\APUD(1,1)$ can contain induced cycles of length $4$.
	\end{rem}

	By Lemma~\ref{lem:Sd} and Corollary~\ref{corl:S3}, a chordal $\APUD(1,1)$ is an $S_4$-graph since $S_3$-graphs are a subclass of $S_4$-graphs \cite{zemanWG}. However, by Lemma~\ref{lem:nosubclass}, there exist $S_4$-graphs which are not $\APUD(1,1)$. Therefore, we may not simply use the $S_d$-graph recognition algorithm which works in polynomial time independently from the value of $d \leq n$ for a graph of order $n$. We show how we recognize a chordal $\APUD(1,1)$ next.
	
	\begin{algorithm}
		\KwIn{A chordal graph $G$}
		\KwOut{Whether $G \in \APUD(1,1)$ holds}
		\lIf{$G$ contains a 4-sun}{\Return{\textsc{false}}}
		\ForEach{maximal clique $Q \in G$}{
			Remove $Q$ from $G$\;
			\lIf{$G$ contains at most four unit interval graphs}{\Return{\textsc{true}}}
		}
		\caption{Testing whether a given chordal graph is an $\APUD(1,1)$}
		\label{alg:chordalapud}
	\end{algorithm}
	
	
	\begin{corl}
		Since the recognition problem can be solved in linear time on unit interval graphs \cite{DBLP:journals/ipl/Keil85} and chordal graphs have linearly many cliques which can be listed in linear time \cite{recogChordaLinear}, Algorithm~\ref{alg:chordalapud} runs in polynomial time with respect to the number of vertices of $G$ by\cite{zemanWG}, and correctly determines whether a given chordal graph can be embedded onto two perpendicular lines as unit disks by Lemma~\ref{lem:Sd}.
	\end{corl}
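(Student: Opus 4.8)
The plan is to split the statement into its three ingredients and treat them in increasing order of difficulty: the polynomial running time, the \emph{completeness} of Algorithm~\ref{alg:chordalapud} (it outputs \textsc{true} on every chordal $\APUD(1,1)$), and its \emph{soundness} (whenever it outputs \textsc{true} on a chordal $G$, that $G$ is an $\APUD(1,1)$). For the running time I would simply chain the cited bounds: a chordal graph on $n$ vertices and $m$ edges has at most $n$ maximal cliques, all of which can be listed in $\mathcal{O}(n+m)$ time~\cite{recogChordaLinear}; testing for an induced $4$-sun is polynomial since it is a fixed eight-vertex pattern; and inside the loop, deleting a maximal clique $\mathcal{Q}$, computing the connected components of $G-\mathcal{Q}$, and testing each component for being a unit interval graph all take linear time, the last step by~\cite{DBLP:journals/ipl/Keil85}. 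Hence the loop costs $\mathcal{O}(n(n+m))$ and the whole procedure is polynomial, consistent with the bound coming from~\cite{zemanWG}.

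For completeness I would assume $G$ is a chordal $\APUD(1,1)$ and fix any embedding $\Sigma(G)$. By Remark~\ref{lem:nosubclass} no $\APUD(1,1)$ contains an induced $4$-sun, so the first line does not reject $G$. If the minimum $4$-cycle exists in $\Sigma(G)$, Lemma~\ref{lem:CHOR4into4unitINTS} produces a maximal clique $\mathcal{Q}$ for which every connected component of $G-\mathcal{Q}$ is a unit interval graph; otherwise $G$ is still chordal by Lemma~\ref{lem:noMinThenChordal} and Lemma~\ref{lem:CHOR4noMin} produces such a $\mathcal{Q}$. Since $G-\mathcal{Q}$ has at most four components by Lemma~\ref{lem:components}, the iteration reaching this $\mathcal{Q}$ sees ``at most four unit interval graphs'' and returns \textsc{true}.

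Soundness is the substantive direction. Here I would take a chordal $G$ on which the algorithm returned \textsc{true}: $G$ has no induced $4$-sun, and there is a maximal clique $\mathcal{Q}$ such that $G-\mathcal{Q}$ is a disjoint union of at most four unit interval graphs $\Delta_1,\dots,\Delta_q$. I would then build an $\APUD(1,1)$ embedding: place the disks of $\mathcal{Q}$ in a tiny neighbourhood of the origin so that they pairwise intersect, distributing them among $\XX^+$, $\YY^+$, $\XX^-$, $\YY^-$; and route each $\Delta_t$ outward along one of the four half-axes, realizing it by a linear unit interval representation (via~\cite{DBLP:journals/ipl/Keil85}) lifted onto that half-axis with the vertices nearest $\mathcal{Q}$ closest to the origin, possibly letting a single $\Delta_t$ span two opposite half-axes when $q<4$, exactly as in the cases enumerated in the proofs of Lemma~\ref{lem:components} and Lemma~\ref{lem:CHOR4noMin}. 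The point that must be proved is that the attachment of each $\Delta_t$ on $\mathcal{Q}$ forms a chain under inclusion, ordered compatibly with the unit interval order of $\Delta_t$, so that each disk of $\Delta_t$ intersects exactly an ``origin-side suffix'' of the clique along its ray; this is precisely the $S_4$-graph structure isolated by Lemma~\ref{lem:Sd}, Corollary~\ref{corl:S3} and~\cite{zemanWG}, and I would argue that chordality together with the exclusion of induced $4$-suns forces it (the $4$-sun being the canonical $S_4$-graph that fails to be an $\APUD(1,1)$, cf.\ Remark~\ref{lem:nosubclass}). One finishes by scaling so that disks of different $\Delta_t$, and disks of a $\Delta_t$ together with disks of $\mathcal{Q}$ outside its attachment, stay more than two units apart and hence do not intersect.

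The hard part will be exactly this converse construction: upgrading the purely combinatorial certificate ``a maximal clique whose removal leaves at most four unit interval components, and no induced $4$-sun'' into an actual placement of unit disks on two perpendicular lines. Concretely one must (i) verify that the attachments are nested chains and are mutually compatible across the four rays, and that this compatibility is guaranteed once $4$-suns are forbidden; and (ii) exhibit explicit radius-normalized coordinates on each half-axis, together with a scaling of $\mathcal{Q}$, under which every edge of $G$ becomes an intersection and every non-edge a disjointness. All the combinatorial ingredients — Corollary~\ref{corl:uigcomponents}, Lemma~\ref{lem:components}, Lemma~\ref{lem:CHOR4into4unitINTS}, Lemma~\ref{lem:CHOR4noMin} and Lemma~\ref{lem:Sd} — are already available from Sections~\ref{sec:property} and~\ref{sec:chordalpoly}, so what remains is the coordinate bookkeeping for this embedding, plus the verification that the no-$4$-sun test captures the only obstruction beyond the $S_4$-graph property.
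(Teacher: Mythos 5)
Your treatment of the running time and of completeness is sound and matches what the paper (implicitly) relies on: the clique listing and UIG tests give the polynomial bound, and Lemmas~\ref{lem:CHOR4into4unitINTS}, \ref{lem:noMinThenChordal} and \ref{lem:CHOR4noMin} guarantee that a chordal $\APUD(1,1)$ always admits a maximal clique whose removal leaves at most four unit interval components, so the algorithm accepts. The paper itself offers no proof beyond the citations in the statement, so up to this point you are, if anything, more explicit than the source.

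The genuine gap is the soundness direction, and your proposal does not close it --- you correctly flag it as ``the hard part'' but the construction you sketch would not go through as described. The specific failure point is that realizing the certificate (``maximal clique $\mathcal{Q}$, at most four UIG components, no induced $4$-sun'') as unit disks on two perpendicular lines is \emph{not} the same as realizing it as an $S_4$-graph: in an $S_4$-graph the four rays are combinatorially independent, whereas on two perpendicular axes the adjacency between a disk at $(x,0)$ and a disk at $(0,y)$ is governed by $x^2+y^2\le 4$, which couples the coordinates chosen on different rays. Lemma~\ref{lem:someMandatoryEdges}(iii) makes this concrete: if $x_{c_k}\ge x_{a_i}$ and $A_i$ intersects $C_k$, then $A_i$ must intersect every $B_j$ that $C_k$ intersects. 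A graph can satisfy your nested-chain attachment condition on each ray separately and still violate such cross-ray implications, and excluding the $4$-sun does not obviously rule this out --- indeed Remark~\ref{lem:nosubclass} in the paper explicitly warns that there are $S_4$-graphs that are not $\APUD(1,1)$, which is precisely why ``$G\in S_4$-graph'' (Lemma~\ref{lem:Sd}) cannot by itself certify correctness of the algorithm. To repair this you would need either to prove that, for chordal graphs, the no-$4$-sun condition together with the four-UIG decomposition forces all the metric implications of Lemma~\ref{lem:someMandatoryEdges} to be consistent with some coordinate assignment, or to exhibit the explicit coordinates and verify every edge and non-edge against the $x^2+y^2\le 4$ and $x+x'\le 2$ constraints. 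Neither your proposal nor the paper does this, so the corollary as justified remains incomplete in the same place; you have located the hole correctly but not filled it.
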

	
	\section{Recognizing an $\boldsymbol{\APUD(1,1)}$ in polynomial time} \label{sec:poly}
	
	In this section, we consider general $\APUD(1,1)$ and give a polynomial time recognition algorithm. We first start with the following.
	
	\begin{rem}\label{rem:chordalInducedC4}
		Every $\APUD(1,1)$ $G$ which is not chordal contains an induced cycle of length at least four, and by Corollary~\ref{corl:combine}, $G$ contains an induced cycle of length at most four. Then, $G$ must have a minimum cycle by Lemma~\ref{lem:4cycle}.
	\end{rem}

	%
	%


	\begin{theorem}\label{theo:linearlyMany}
		If $G$ is an $\APUD(1,1)$, then $G$ has $\mathcal{O}(n^3)$ maximal cliques.
	\end{theorem}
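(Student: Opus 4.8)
The plan is to fix any $\APUD(1,1)$ embedding $\Sigma(G)$, assume without loss of generality that $G$ has no twins, and classify each maximal clique of $G$ as \emph{Helly} or \emph{non-Helly} in $\Sigma(G)$, bounding the two classes separately. For a point $p$ of the plane let $\mathcal{D}_p$ be the set of disks of $\Sigma(G)$ that contain $p$; this set is always a clique. If $\mathcal{Q}$ is a maximal Helly clique with common point $p$, then $\mathcal{Q}\subseteq\mathcal{D}_p$, and since $\mathcal{D}_p$ is itself a clique, maximality forces $\mathcal{Q}=\mathcal{D}_p$. The set $\mathcal{D}_p$ depends only on the cell (face, edge, or vertex) of the arrangement of the $n$ bounding unit circles that contains $p$, and this arrangement has $\mathcal{O}(n^2)$ cells because any two unit circles meet in at most two points. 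Hence there are at most $\mathcal{O}(n^2)$ maximal Helly cliques.

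For the non-Helly maximal cliques I would first use Lemma~\ref{lem:ONLYnonHelly} together with Lemma~\ref{lem:3clique}: such a clique $\mathcal{Q}$ contains a non-Helly clique on three disks, and those three disks lie in three distinct sets among $\XX^+,\YY^+,\XX^-,\YY^-$, so $\mathcal{Q}$ meets at least three of these four sets. Consider first the case that $\mathcal{Q}$ meets exactly three of them, say $\XX^+,\YY^+,\XX^-$ (the other three choices are symmetric), and let $A^*,B^*,C^*\in\mathcal{Q}$ be the disks of $\mathcal{Q}$ in $\XX^+,\YY^+,\XX^-$ whose centers are farthest from the origin. The key step is a threshold characterization: a routine chain of triangle inequalities — using that $A^*,B^*,C^*$ pairwise intersect, so in particular $x_{a^*}+x_{c^*}\le 2$, $x_{a^*}^2+y_{b^*}^2\le 4$, $y_{b^*}^2+x_{c^*}^2\le 4$ — shows that every disk of $\XX^+$ with center $x$-coordinate at most $x_{a^*}$ intersects all of $\mathcal{Q}$, and likewise for $\YY^+$ and $\XX^-$; combined with maximality of $\mathcal{Q}$ this gives $\mathcal{Q}=\{I\in\XX^+:x_i\le x_{a^*}\}\cup\{I\in\YY^+:y_i\le y_{b^*}\}\cup\{I\in\XX^-:|x_i|\le x_{c^*}\}$. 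Thus $\mathcal{Q}$ is determined by the triple of coordinates $(x_{a^*},y_{b^*},x_{c^*})$, each occurring among the $n$ disks, so there are $\mathcal{O}(n^3)$ such cliques.

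If instead $\mathcal{Q}$ meets all four sets, with origin-farthest disks $A^*,B^*,C^*,D^*$, the same triangle-inequality argument yields the analogous four-threshold description $\mathcal{Q}=\{I\in\XX^+:x_i\le x_{a^*}\}\cup\{I\in\YY^+:y_i\le y_{b^*}\}\cup\{I\in\XX^-:|x_i|\le x_{c^*}\}\cup\{I\in\YY^-:|y_i|\le y_{d^*}\}$. To bring the bound down to $\mathcal{O}(n^3)$ I would invoke maximality in the $\YY^-$ direction: with $m:=\min\{\sqrt{4-x_{a^*}^2},\sqrt{4-x_{c^*}^2},2-y_{b^*}\}$, the membership of $D^*$ in the clique with $A^*,B^*,C^*$ forces $y_{d^*}\le m$, whereas if some $\YY^-$-disk had center $y$-coordinate in $(y_{d^*},m]$ it could be added to $\mathcal{Q}$, contradicting maximality; hence $y_{d^*}$ is the largest $y$-coordinate of a $\YY^-$-disk not exceeding $m$, a quantity determined by $(x_{a^*},y_{b^*},x_{c^*})$ and the input. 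So again $\mathcal{Q}$ is determined by a triple of coordinates, giving $\mathcal{O}(n^3)$ cliques. Adding the $\mathcal{O}(n^2)$ Helly cliques, the $\mathcal{O}(n^3)$ three-direction non-Helly cliques (over the four symmetric cases), and the $\mathcal{O}(n^3)$ four-direction ones proves the theorem.

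I expect the main obstacle to be the threshold characterization itself: although every inequality is elementary, they must be assembled carefully, and one crucially needs maximality to turn ``$I$ intersects every disk of $\mathcal{Q}$'' into ``$I\in\mathcal{Q}$'', while also disposing of the degenerate configurations (a disk centered at the origin, coincident centers, or the common point of a Helly clique lying on an edge or vertex of the circle arrangement rather than in the interior of a face). A secondary delicate point is the redundancy of the fourth threshold in the four-direction case, including the boundary subcase in which $\mathcal{Q}$ already contains every $\YY^-$-disk, where $y_{d^*}$ is simply the largest $y$-coordinate occurring in $\YY^-$.
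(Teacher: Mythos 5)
Your proof is correct, and it takes a genuinely different route from the paper's on both halves of the argument, although the overall skeleton is shared: both proofs split the maximal cliques into Helly and non-Helly ones in a fixed embedding and both invoke Lemma~\ref{lem:ONLYnonHelly} together with Lemma~\ref{lem:3clique} to force a non-Helly maximal clique to meet at least three of $\XX^+,\YY^+,\XX^-,\YY^-$. For the Helly cliques the paper simply cites that Helly circular-arc and chordal graphs have linearly many maximal cliques, giving $\mathcal{O}(n)$; your identification of a maximal Helly clique with the depth set $\mathcal{D}_p$ of a cell of the circle arrangement gives only $\mathcal{O}(n^2)$, but it is self-contained, purely geometric, and handles the boundary cases (common point on an edge or vertex of the arrangement) by counting cells of every dimension. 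For the non-Helly cliques the paper charges them to the $\mathcal{O}(n^3)$ non-Helly triples they contain via Claims~\ref{clm:onxx}, \ref{clm:onyy} and~\ref{clm:C4together}, an assignment that is never shown to be injective; your threshold characterization repairs exactly this weakness by exhibiting an explicit injection from maximal cliques meeting three (respectively four) of the half-axes to triples of extremal coordinates, with the fourth threshold recovered from the other three by maximality. The prefix structure you derive---each such maximal clique is a union of initial segments along the half-axes, by the same monotonicity of the intersection conditions that underlies Lemma~\ref{lem:someMandatoryEdges}---is the substantive new ingredient, and it makes the $\mathcal{O}(n^3)$ bound airtight where the paper's count is only sketched.
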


	\begin{proof}
		If $G$ is chordal or a Helly graph, then the theorem holds since they have linearly many maximal cliques~\cite{zemanWG,recogChordaLinear}. Otherwise, we claim that there is a polynomial number of non-Helly cliques on the number of disks witnessing that $G$ has polynomially many maximal cliques. By Lemma~\ref{lem:ONLYnonHelly}, every non-Helly clique in $G$ without disks with the same neighborhood contains a non-Helly clique on three disks. Let $\mathcal{C}$ be a non-Helly clique on three distinct disks $A,B,C$ such that $A \in \XX^+$, $B  \in \YY^+$ and $C \in \XX^-$ by Lemma~\ref{lem:3clique}. We give Claim~\ref{clm:onxx}, Claim~\ref{clm:onyy}, and Claim~\ref{clm:C4together} in this setting to prove Theorem~\ref{theo:linearlyMany}.
		
		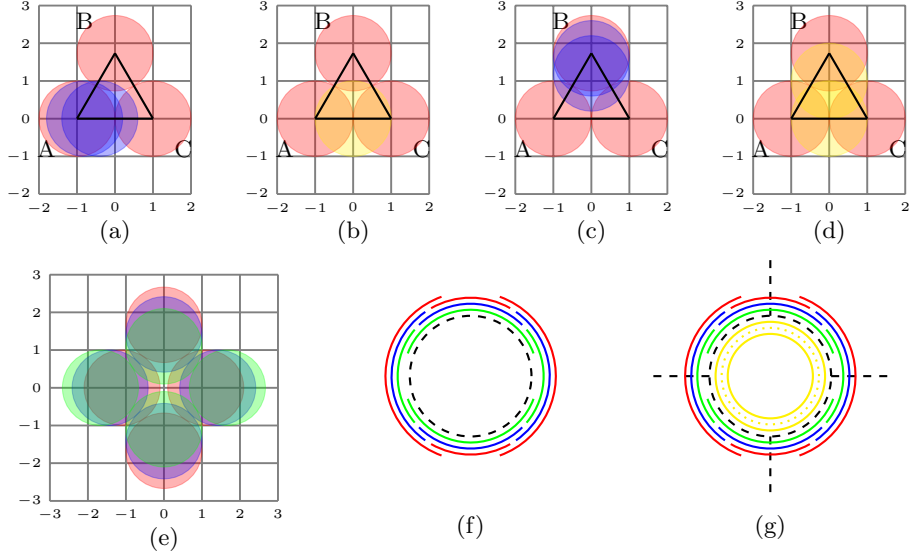
\begin{figure} [h!]
			\captionsetup[subfigure]{position=b}	
			\centering
			
			\begin{subfigure}[b]{0.23\linewidth}
				\centering
				\begin{tikzpicture}[scale=0.5]
					\draw (-2,-2) to[grid with coordinates] (2,3);
					
					\coordinate (A) at (-1,0);
					\node[] at (-1.8,-0.8) {A};
					
					\coordinate (B) at (0,1.735);
					\node[] at (-0.8,2.6) {B};
					
					\coordinate (C) at (1,0);
					\node[] at (1.8,-0.8) {C};
					
					\coordinate (D1) at (-0.8,0);
					\coordinate (D3) at (-0.4,0);
					
					\filldraw[color=red, fill=red, opacity=0.3] (A) circle (1);
					
					\filldraw[color=red, fill=red, opacity=0.3] (B) circle (1);
					
					\filldraw[color=red, fill=red, opacity=0.3] (C) circle (1);
					
					\filldraw[color=blue, fill=blue, opacity=0.3] (D1) circle (1);
					
					
					\filldraw[color=blue, fill=blue, opacity=0.3] (D3) circle (1);
					
					\draw[thick] (A) -- (B);
					\draw[thick] (A) -- (C);
					\draw[thick] (B) -- (C);
					
					\node (xa) at (0,-3) {(a)};
					
				\end{tikzpicture}
				\label{fig:nonHelly1}
			\end{subfigure}
			~
			\begin{subfigure}[b]{0.23\linewidth}
				\centering
				\begin{tikzpicture}[scale=0.5]
					\draw (-2,-2) to[grid with coordinates] (2,3);
					
					\coordinate (A) at (-1,0);
					\node[] at (-1.8,-0.8) {A};
					
					\coordinate (B) at (0,1.735);
					\node[] at (-0.8,2.6) {B};
					
					\coordinate (C) at (1,0);
					\node[] at (1.8,-0.8) {C};
					
					\coordinate (D1) at (0,0);
					
					\filldraw[color=red, fill=red, opacity=0.3] (A) circle (1);
					
					\filldraw[color=red, fill=red, opacity=0.3] (B) circle (1);
					
					\filldraw[color=red, fill=red, opacity=0.3] (C) circle (1);
					
					\filldraw[color=yellow, fill=yellow, opacity=0.3] (D1) circle (1);
					
					\draw[thick] (A) -- (B);
					\draw[thick] (A) -- (C);
					\draw[thick] (B) -- (C);
					
					\node (xb) at (0,-3) {(b)};
				\end{tikzpicture}
				\label{fig:Helly1}
			\end{subfigure}
			~
			\begin{subfigure}[b]{0.23\linewidth}
				\centering
				\begin{tikzpicture}[scale=0.5]
					\draw (-2,-2) to[grid with coordinates] (2,3);
					
					\coordinate (A) at (-1,0);
					\node[] at (-1.8,-0.8) {A};
					
					\coordinate (B) at (0,1.735);
					\node[] at (-0.8,2.6) {B};
					
					\coordinate (C) at (1,0);
					\node[] at (1.8,-0.8) {C};
					
					\coordinate (D1) at (0,1.6);
					\coordinate (D3) at (0,1.2);
					
					\filldraw[color=red, fill=red, opacity=0.3] (A) circle (1);
					
					\filldraw[color=red, fill=red, opacity=0.3] (B) circle (1);
					
					\filldraw[color=red, fill=red, opacity=0.3] (C) circle (1);
					
					\filldraw[color=blue, fill=blue, opacity=0.3] (D1) circle (1);
					
					
					\filldraw[color=blue, fill=blue, opacity=0.3] (D3) circle (1);
					
					\draw[thick] (A) -- (B);
					\draw[thick] (A) -- (C);
					\draw[thick] (B) -- (C);
					
					\node (x) at (0,-3) {(c)};
					
				\end{tikzpicture}
				\label{fig:nonHelly2}
			\end{subfigure}
			~
			\begin{subfigure}[b]{0.23\linewidth}
				\centering
				\begin{tikzpicture}[scale=0.5]
					\draw (-2,-2) to[grid with coordinates] (2,3);
					
					\coordinate (A) at (-1,0);
					\node[] at (-1.8,-0.8) {A};
					
					\coordinate (B) at (0,1.735);
					\node[] at (-0.8,2.6) {B};
					
					\coordinate (C) at (1,0);
					\node[] at (1.8,-0.8) {C};
					
					\coordinate (D4) at (0,1);
					\coordinate (D6) at (0,0);
					
					\filldraw[color=red, fill=red, opacity=0.3] (A) circle (1);
					
					\filldraw[color=red, fill=red, opacity=0.3] (B) circle (1);
					
					\filldraw[color=red, fill=red, opacity=0.3] (C) circle (1);
					
					\filldraw[color=yellow, fill=yellow, opacity=0.3] (D4) circle (1);
					
					\filldraw[color=yellow, fill=yellow, opacity=0.3] (D6) circle (1);
					
					\draw[thick] (A) -- (B);
					\draw[thick] (A) -- (C);
					\draw[thick] (B) -- (C);
					
					\node (xd) at (0,-3) {(d)};
					
				\end{tikzpicture}
				\label{fig:Helly2}
			\end{subfigure}
			
			\begin{subfigure}[b]{0.31\linewidth}
				\centering
				\begin{tikzpicture}[scale=0.5]
					\draw (-3,-3) to[grid with coordinates] (3,3);
					
					
					\coordinate (A) at (-1.1,0);
					\coordinate (B) at (1.1,0);
					\coordinate (C) at (0,1.67);
					\coordinate (D) at (0,-1.67);
					\coordinate (A2) at (-1.415,0);
					\coordinate (B2) at (1.415,0);
					\coordinate (C2) at (0,1.415);
					\coordinate (D2) at (0,-1.415);	
					\coordinate (A3) at (-1.67,0);
					\coordinate (B3) at (1.67,0);
					\coordinate (C3) at (0,1.1);
					\coordinate (D3) at (0,-1.1);
					
					\filldraw[color=red, fill=red, opacity=0.3] (A) circle (1);
					
					\filldraw[color=red, fill=red, opacity=0.3] (B) circle (1);
					
					\filldraw[color=red, fill=red, opacity=0.3] (C) circle (1);
					
					\filldraw[color=red, fill=red, opacity=0.3] (D) circle (1);
					
					\filldraw[color=blue, fill=blue, opacity=0.3] (A2) circle (1);
					
					\filldraw[color=blue, fill=blue, opacity=0.3] (B2) circle (1);
					
					\filldraw[color=blue, fill=blue, opacity=0.3] (C2) circle (1);
					
					\filldraw[color=blue, fill=blue, opacity=0.3] (D2) circle (1);
					
					\filldraw[color=green, fill=green, opacity=0.3] (A3) circle (1);
					
					\filldraw[color=green, fill=green, opacity=0.3] (B3) circle (1);
					
					\filldraw[color=green, fill=green, opacity=0.3] (C3) circle (1);
					
					\filldraw[color=green, fill=green, opacity=0.3] (D3) circle (1);
					
					\node (xe) at (0,-4) {(e)};
					
				\end{tikzpicture}
				\label{fig:nonHelly13}
			\end{subfigure}
			~
			\begin{subfigure}[b]{0.31\linewidth}
				\centering
				\begin{tikzpicture}[scale=0.4]
					
					\draw[white] (2,-4) rectangle (9,-6);
					
					\centerarc[black,thick,dashed](6,1)(0:360:2);
					
					\centerarc[green,thick](6,1)(20:160:2.2);
					\centerarc[green,thick](6,1)(140:220:2.4);
					\centerarc[green,thick](6,1)(200:340:2.2);
					\centerarc[green,thick](6,1)(320:400:2.4);
					
					\centerarc[blue,thick](6,1)(45:135:2.4);
					\centerarc[blue,thick](6,1)(125:235:2.6);
					\centerarc[blue,thick](6,1)(225:315:2.4);
					\centerarc[blue,thick](6,1)(305:415:2.6);
					
					\centerarc[red,thick](6,1)(60:120:2.6);
					\centerarc[red,thick](6,1)(110:250:2.8);
					\centerarc[red,thick](6,1)(240:300:2.6);
					\centerarc[red,thick](6,1)(290:430:2.8);
					
					
					\node (xf) at (6,-4) {(f)};
					
				\end{tikzpicture}
				\label{fig:rectsCARC}
			\end{subfigure}
			~
			\begin{subfigure}[b]{0.31\linewidth}
				\centering
				\begin{tikzpicture}[scale=0.4]
					
					\draw[white] (2,-4) rectangle (9,-6);
					
					\centerarc[black,thick,dashed](6,1)(0:360:2);
					
					\draw[thick,dashed] (8,1) -- (10,1);
					
					\draw[thick,dashed] (4,1) -- (2,1);
					
					\draw[thick,dashed] (6,3) -- (6,5);
					
					\draw[thick,dashed] (6,-1) -- (6,-3);
					
					\centerarc[yellow,thick](6,1)(0:360:1.8);
					\centerarc[yellow,thick,dotted](6,1)(0:360:1.6);
					\centerarc[yellow,thick](6,1)(0:360:1.4);
					
					\centerarc[green,thick](6,1)(20:160:2.2);
					\centerarc[green,thick](6,1)(140:220:2.4);
					\centerarc[green,thick](6,1)(200:340:2.2);
					\centerarc[green,thick](6,1)(320:400:2.4);
					
					\centerarc[blue,thick](6,1)(45:135:2.4);
					\centerarc[blue,thick](6,1)(125:235:2.6);
					\centerarc[blue,thick](6,1)(225:315:2.4);
					\centerarc[blue,thick](6,1)(305:415:2.6);
					
					\centerarc[red,thick](6,1)(60:120:2.6);
					\centerarc[red,thick](6,1)(110:250:2.8);
					\centerarc[red,thick](6,1)(240:300:2.6);
					\centerarc[red,thick](6,1)(290:430:2.8);
					
					\node (xg) at (6,-4) {(g)};
					
				\end{tikzpicture}
				\label{fig:rectsCARCextra}
			\end{subfigure}
			
			\caption{The cases investigated in Claim~\ref{clm:onxx}, Claim~\ref{clm:onyy} and Claim~\ref{clm:C4together} to prove Theorem~\ref{theo:linearlyMany}.}
			\label{fig:nonHellyExx}
		\end{figure}
		
		\begin{clm}\label{clm:onxx}
			There is a linear number of non-Helly cliques which can occur by two disks from $A,B,C$ and another disk centered between the centers of $A$ and $C$.  
		\end{clm}
		
		\begin{proof}
			Observe that any such non-Helly clique contains either $A$ and $B$, or $C$ and $B$ by Lemma~\ref{lem:helCl}. Since non-Helly cliques are due to the cliques of size three by Lemma~\ref{lem:ONLYnonHelly}, we obtain at most linearly many non-Helly cliques on the order of $G$. Figure~\ref{fig:nonHellyExx} (a) and (b) show where we do and do not obtain non-Helly cliques other than the red disks forming a non-Helly clique.\hfill  $\triangleleft$
		\end{proof}
		
		\begin{clm}\label{clm:onyy}
			There is a linear number of non-Helly cliques which can occur by two disks from $A,B,C$ and another disk centered between the point $(0,0)$ and the center of $B$. 
		\end{clm}
		
		\begin{proof}
			Observe that any such non-Helly clique contains either $A$, $C$ and the new disk, or $A$, $B$ and the new disk (analogously $C$, $B$ and the new disk) by Lemma~\ref{lem:helCl}. However, $A$, $B$ and the new disk can not form a non-Helly clique since the new disk contains the intersection of $A$ and $B$. Thus, we consider the possible non-Helly cliques which may be formed by $A$, $C$ and the new disk. Since non-Helly cliques are due to the cliques of size three by Lemma~\ref{lem:ONLYnonHelly}, we obtain at most linearly many non-Helly cliques on the order of $G$. Figure~\ref{fig:nonHellyExx} (c) and (d) show where we do and do not obtain any non-Helly cliques except the red disks forming a non-Helly clique.\hfill $\triangleleft$
		\end{proof}

		\begin{clm}\label{clm:C4together}
			Nested induced $C_4$s result in $\mathcal{O}(n^3)$ non-Helly cliques. Moreover, nested induced $C_4$s without induced $W_4$s (thus, without induced diamonds) contain no non-Helly cliques.
		\end{clm}
		
		\begin{proof}
			
			Considering all pairs of disks in nested induced $C_4$s, the former argument follows from Claim~\ref{clm:onxx} and Claim~\ref{clm:onyy}. For the latter, since considered nested induced $C_4$s contain no induced $W_4$ or diamond, they do not contain a $C_3$ on disks from distinct sides from $\{\XX^+,\YY^+,\XX^-,\YY-\}$, and therefore, they do not contain a non-Helly clique.
			\hfill $\triangleleft$
		\end{proof}
		
		By Claim~\ref{clm:onxx}, Claim~\ref{clm:onyy}, Claim~\ref{clm:C4together}, there can be $\mathcal{O}(n^3)$ non-Helly cliques on the size of the input graph. Since there is a linear number of Helly maximal cliques~\cite{zemanWG,recogChordaLinear}, an $\APUD(1,1)$ has $\mathcal{O}(n^3)$ maximal cliques.\qed
	\end{proof}
	
	Figure~\ref{fig:nonHellyExx} shows the cases investigated in Theorem~\ref{theo:linearlyMany}. In each (a), (b), (c), (d) and (e), we have an $\APUD(1,1)$ where the red disks labelled $A,B,C$ form a non-Helly clique. 	
	In (a), $A,B$ and any blue disk, and $A,C$ and any blue disk form a Helly clique while $B,C$ and any blue disk forms a non-Helly clique. 	
	In (b), the yellow disk and any two disks from $A,B,C$ form a Helly clique. 	
	In (c), $A,C$ and any blue disk form a non-Helly clique.
	In (d), $A,C$ and any disk placed between and including each yellow disk form a Helly clique. In (e), three induced $C_4$s on four red, four blue and four green disks form only Helly cliques, and in (f), the corresponding circular-arc graph without a non-Helly clique is shown. In (g), nested induced $C_4$s with yellow arcs without a non-Helly clique are illustrated.
	
	\begin{lem}\label{lem:connectedInduced}
		Let $G$ be a connected $\APUD(1,1)$ which has at least two disjoint 
		induced $C_4$s $\mathcal{L}_1$ and $\mathcal{L}_2$. Then, each disk of 
		$\mathcal{L}_1$ is adjacent to at least one disk of $\mathcal{L}_2$. 
		Moreover, if there is exactly one such adjacency for each disk, then it 
		is between the disks belonging to the same set from 
		$\{\XX^+,\YY^+,\XX^-,\YY^-\}$.
	\end{lem}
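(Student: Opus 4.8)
The plan is to pass to a concrete geometry: fix an arbitrary $\APUD(1,1)$ embedding $\Sigma(G)$ and establish the claimed adjacencies there, which is legitimate since adjacency in $G$ is a graph‑theoretic property certified by any single realisation. First I would pin down notation. By Lemma~\ref{lem:4cycle}, in $\Sigma(G)$ the four disks of the induced $C_4$ $\mathcal{L}_1$ lie one in each of $\XX^+$, $\YY^+$, $\XX^-$, $\YY^-$; call them $A_1\in\XX^+$, $B_1\in\YY^+$, $C_1\in\XX^-$, $D_1\in\YY^-$, with centres $(a_1,0)$, $(0,b_1)$, $(-c_1,0)$, $(0,-d_1)$, and likewise write $\mathcal{L}_2=\{A_2,B_2,C_2,D_2\}$ with centres $(a_2,0)$, $(0,b_2)$, $(-c_2,0)$, $(0,-d_2)$. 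All eight of these quantities are non‑negative by the definitions of $\XX^{\pm}$ and $\YY^{\pm}$, and applying Lemma~\ref{lem:atLeastThree} to each of the induced $C_4$s $\mathcal{L}_1$ and $\mathcal{L}_2$ gives $a_i,b_i,c_i,d_i\le 2$ for $i\in\{1,2\}$.

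The next step is to exploit that same‑set partners share a ray. The centres of $A_1$ and $A_2$ both lie on the segment of the positive $x$-axis between the origin and $(2,0)$, so the distance between them is $|a_1-a_2|\le 2$, and therefore the unit disks $A_1$ and $A_2$ intersect; that is, $A_1$ is adjacent to $A_2$ in $G$. The same one‑line argument on $\YY^+$, $\XX^-$ and $\YY^-$ yields $B_1\sim B_2$, $C_1\sim C_2$ and $D_1\sim D_2$. Hence every disk of $\mathcal{L}_1$ has a neighbour in $\mathcal{L}_2$, namely the disk of $\mathcal{L}_2$ in the same set of $\{\XX^+,\YY^+,\XX^-,\YY^-\}$, which proves the first assertion and simultaneously exhibits a same‑set neighbour. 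For the ``moreover'' part, suppose a disk $v$ of $\mathcal{L}_1$ has exactly one neighbour in $\mathcal{L}_2$; its same‑set partner has just been shown to be such a neighbour, so uniqueness forces the unique neighbour of $v$ to be that partner, which by construction lies in the same set as $v$. Carrying this out for all four disks of $\mathcal{L}_1$ finishes the argument.

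I expect no serious obstacle in the computations; the only real content is recognising that Lemma~\ref{lem:atLeastThree}'s uniform bound $x_a,y_b,x_c,y_d\le 2$ on every induced $C_4$ is exactly what keeps the centres of two same‑ray disks coming from two induced $C_4$s within distance $2$ of each other, after which ``two unit disks whose centres are at distance at most $2$ intersect'' does the rest. The one point I would state carefully is that the orientation from Lemma~\ref{lem:4cycle} must be applied to $\mathcal{L}_1$ and $\mathcal{L}_2$ inside the \emph{same} fixed embedding $\Sigma(G)$, so that ``$A_1,A_2\in\XX^+$'' is a genuine statement about $\Sigma(G)$ rather than two independent appeals to ``up to symmetry''.
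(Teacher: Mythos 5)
Your proof is correct, and it is in fact more informative than the paper's own proof, which is a one-line citation of connectedness, Lemma~\ref{lem:4cycle} and Lemma~\ref{lem:someMandatoryEdges}. The geometric core is the same in both cases --- the coordinates of the four disks of any induced $C_4$ are bounded by $2$ --- but you extract that bound from Lemma~\ref{lem:atLeastThree} and use it directly, whereas the paper routes it through the intersection-propagation statements of Lemma~\ref{lem:someMandatoryEdges}. Your version buys two things: it never needs the connectedness hypothesis (the bound $0<a_1,a_2\le 2$ already forces $|a_1-a_2|<2$, hence intersection, for the two $\XX^+$ disks, and likewise on the other three rays), and it proves the strictly stronger statement that each disk of $\mathcal{L}_1$ is adjacent to its same-set partner in $\mathcal{L}_2$, from which the ``moreover'' clause is immediate rather than requiring a separate case analysis. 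Your closing caveat --- that Lemma~\ref{lem:4cycle} must be applied to both cycles inside one fixed embedding $\Sigma(G)$ so the set membership is consistent --- is exactly the right point to make explicit.
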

	
	\begin{proof}
		It follows from the connectedness of $G$, 
		Lemma~\ref{lem:4cycle} and Lemma~\ref{lem:someMandatoryEdges}.\qed
	\end{proof}

	Lemma~\ref{lem:connectedInduced} also applies to any two disjoint $C_4$s, each inducing a $C_4$, diamond or $K_4$ if each of those cycles has disks belonging to distinct sets from $\{\XX^+,\YY^+,\XX^-,\YY^-\}$. Recall that $W_4$ denotes the wheel graph on five vertices. We get the following.

	\begin{lem}\label{lem:largerC4smallerK4}
		Let $G$ be an $\APUD(1,1)$ containing an induced $C_4$ $\mathcal{L}$ and a $K_4$ $\mathcal{S}$ disjoint from $\mathcal{L}$, and $\Sigma(G)$ be an $\APUD(1,1)$ embedding of $G$ where the disks in $\mathcal{S}$ belong to at least three distinct sets from $\{\XX^+,\YY^+,\XX^-,\YY^-\}$. Then, the following hold:
		\begin{itemize}
			\item $\mathcal{L} \cup \mathcal{S}$ accepts a partitioning into an induced $C_4$ $\mathcal{L}'$ and a $K_4$ $\mathcal{S}'$ in $\Sigma(G)$ such that at most one disk $U' \in \mathcal{L}'$ is closer to the point $(0,0)$ than a disk $U \in \mathcal{S}'$ belonging to the same set from $\{\XX^+,\YY^+,\XX^-,\YY^-\}$ that $U'$ belongs.
			\item $\mathcal{L} \cup \mathcal{S}$ contains an induced $W_4$ having a disk of $\mathcal{S}$ as its universal disk.
			\item $\mathcal{L} \cup \mathcal{S}$ contains an induced diamond $\mathcal{D}$ on disks belonging to distinct sets from $\{\XX^+,\YY^+,\XX^-,\YY^-\}$ with its unique chord incident to a disk in $\mathcal{S} \cap \mathcal{D}$.	
		\end{itemize} 
	\end{lem}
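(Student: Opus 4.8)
The plan is to normalize once and then treat the three bullets in order, leaning throughout on two facts: a $K_4$ with disks in $\XX^+$ and $\XX^-$ has its two $x$-extents summing to at most $2$ (and symmetrically for $\YY^+,\YY^-$), whereas an induced $C_4$ satisfies the opposite strict inequality; and the absolute bound $x_a,y_b,x_c,y_d\le 2$ of Lemma~\ref{lem:atLeastThree} together with the monotone-edge implications of Lemma~\ref{lem:someMandatoryEdges}. By Lemma~\ref{lem:4cycle} and the reflective symmetries of $\APUD(1,1)$ I may write $\mathcal{L}=\{A,B,C,D\}$ with $A\in\XX^+$, $B\in\YY^+$, $C\in\XX^-$, $D\in\YY^-$ and centers $(x_a,0),(0,y_b),(-x_c,0),(0,-y_d)$, where $A,C$ and $B,D$ are the non-adjacent pairs; and I record that a $K_4$ meeting at least three of the four sets has disks in both of $\{\XX^+,\XX^-\}$ or in both of $\{\YY^+,\YY^-\}$, and in all four sets in the four-set case.

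\emph{First bullet.} Call a disk of $\mathcal{L}$ \emph{inner} if its set contains a disk of $\mathcal{S}$ strictly farther from the origin. Comparing $x_a+x_c>2$ with the $K_4$-inequality of $\mathcal{S}$ shows at most one of $A,C$ is inner and at most one of $B,D$ is inner, so $\mathcal{L}$ has at most two inner disks, adjacent on the cycle when there are two. If there is at most one, I take $(\mathcal{L}',\mathcal{S}')=(\mathcal{L},\mathcal{S})$. Otherwise I choose the labelling so the inner disks are $A,B$ and try replacing $A$ by the outermost $\XX^+$-disk $A^{*}$ of $\mathcal{S}$ (moving $A$ into the complementary part): the new non-edges are free since $x_{a^{*}}+x_c>x_a+x_c>2$, the edge $A^{*}B$ comes from the ``far--far'' edge $A^{*}B^{*}$ of the $K_4$ $\mathcal{S}$ (with $B^{*}$ the outermost $\YY^+$-disk of $\mathcal{S}$, whence $y_b\le y_{b^{*}}$ and ${x_{a^{*}}}^2+{y_b}^2\le{x_{a^{*}}}^2+{y_{b^{*}}}^2\le 4$), and $B,C$ and $C,D$ are inherited from $\mathcal{L}$. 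Only $A^{*}D$ can fail, and if it does then ${y_d}^2>4-{x_{a^{*}}}^2\ge{y_{b^{*}}}^2$, so ${x_c}^2<4-{y_{b^{*}}}^2$ and the \emph{mirror} replacement (swap $B$ for $B^{*}$ instead) yields the induced $C_4$ $\{A,B^{*},C,D\}$. The same inequalities show the complementary quadruple is a $K_4$, and the swapped-in outermost disk is not inner, so the new partition has exactly one inner disk.

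\emph{Second and third bullets.} For the second bullet I would produce an induced $C_4$ $\mathcal{K}$ and a disk $U\in\mathcal{S}\setminus\mathcal{K}$ adjacent to all of $\mathcal{K}$ by an ``innermost--outermost'' pairing: $U$ is taken innermost among $\mathcal{S}$'s disks in a direction where $\mathcal{S}$ has a disk, while $\mathcal{K}$'s disk in that direction is outermost, so that the $K_4$-inequalities of $\mathcal{S}$ still let $U$ reach the opposite side of $\mathcal{K}$ (e.g. $U$ the innermost $\XX^+$-disk and an $\XX^-$-disk $R$ with $x_u+x_r\le 2<x_a+x_r$). That $\mathcal{K}$ is an induced $C_4$ and that $U$ is universal to it reduce, via $x_a,y_b,x_c,y_d\le 2$ and Lemma~\ref{lem:someMandatoryEdges} for the residual cross-edges, to the estimates already used for the swaps. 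Granting such a $W_4$ with $C_4$ $\{P,Q,R,T\}$ (four distinct sets, non-edges $PR,QT$) and universal disk $U\in\mathcal{S}$, the third bullet follows at once: deleting the disk of $\{P,Q,R,T\}$ in $U$'s set, say $P$, the set $\{U,Q,R,T\}$ lies in four distinct sets and misses exactly $QT$, hence is a diamond $\mathcal{D}$ whose degree-$3$ disks are $U$ and $R$, so its unique chord $UR$ is incident to $U\in\mathcal{S}\cap\mathcal{D}$.

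The hard part is the bookkeeping hidden in ``some direction'' and ``mirror replacement''. When $\mathcal{S}$ meets only three of the four sets, one of $A^{*},B^{*}$ and one of $\mathcal{S}$'s $\XX^-$- or $\YY^-$-disks is absent, the first-bullet swap must be aimed at a set $\mathcal{S}$ actually populates, and the universal disk of the $W_4$ must be chosen from whichever populated set admits the innermost--outermost pairing. Each resulting sub-case is a single one-line estimate, but there are several of them and the strict/non-strict status of every inequality matters; I expect Lemma~\ref{lem:someMandatoryEdges} to be exactly what bridges the cases where a bare distance bound is too weak.
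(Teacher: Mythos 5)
Your core mechanism is the same as the paper's: play the $K_4$ inequalities of $\mathcal{S}$ (opposite-ray extents summing to at most $2$, cross-terms at most $4$) against the induced-$C_4$ inequalities of $\mathcal{L}$ (opposite-ray sums exceeding $2$), and use the monotonicity that a disk closer to the origin inherits every intersection of a farther disk on its ray. For the first bullet your ``inner disk'' count (at most one per opposite pair, hence at most two in total and adjacent on the cycle) followed by a single swap with a mirror fallback is a cleaner packaging of the paper's explicit sign-pattern case analysis, and it checks out: the new non-edges, the edge $A^{*}B$ obtained through the outermost $B^{*}$, the fallback inequality $y_d>y_{b^{*}}$ yielding $CB^{*}$, the fact that the complementary quadruple stays a $K_4$, and the count of one remaining inner disk are all correct. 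Your derivation of the third bullet from the second is verbatim the paper's closing paragraph.

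The second bullet is where the paper spends essentially all of its effort, and there your proposal is a recipe rather than a proof. The innermost--outermost pairing leaves exactly one spoke unverified in each configuration (for $U$ innermost on $\XX^+$, the edge to the $C_4$'s $\XX^-$ disk when $\mathcal{S}\cap\XX^-=\emptyset$, or to a $\YY$ disk when the relevant coordinate comparison goes the wrong way), and whether \emph{some} choice of direction and candidate closes all four spokes is precisely what has to be checked. The paper does this by an explicit dichotomy in every subcase: either the candidate $U\in\mathcal{S}$ is already universal to the current induced $C_4$, or the one disk $V$ it fails to reach is replaced by the $\mathcal{S}$-disk on $V$'s ray (the failed spoke becoming the new cycle non-edge) and the $\mathcal{S}$-disk opposite $V$ becomes universal via a $K_4$ edge. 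That dichotomy does appear to complete your sketch in each of the three-set and four-set configurations, so I see no step that would fail; but as written the claim that ``each resulting sub-case is a single one-line estimate'' is asserted, not demonstrated, and the deferred bookkeeping is the proof.
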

	\begin{proof}
		Let $\mathcal{S}=\{A,B,C,D\}$ and $\mathcal{L}=\{A',B',C',D'\}$. $A' \in \XX^+$, $B' \in \YY^+$, $C' \in \XX^-$, and $D'  \in \YY^-$ by Lemma~\ref{lem:4cycle}. Let the center of $U \in \mathcal{S} \cup \mathcal{L}$ be $(x_u,0)$,  $(0,y_u)$, $(-x_u,0)$, or $(0,-y_u)$ with respect to the axis it has its center on. Lemma~\ref{lem:connectedInduced} is clearly applicable here, thus, any pair of disks in $\mathcal{S} \cup \mathcal{L}$ intersect if they belong to the same set from $\{\XX^+,\YY^+,\XX^-,\YY^-\}$. 
		
		\begin{enumerate}[(i)]
			\item If the disks in $\mathcal{S}$ belong to the distinct sets from $\{\XX^+,\YY^+,\XX^-,\YY^-\}$, let $A \in \XX^+$, $B \in \YY^+$, $C \in \XX^-$ and $D \in \YY^-$ hold. Then, the following hold:
			\begin{align}
				0 \leq {x_a}+{x_c}, {y_b}+{y_d} &\leq 2\\
				0 \leq {x_a}^2+{y_b}^2, {x_a}^2+{y_d}^2, {y_b}^2+{x_c}^2, {x_c}^2+{y_d}^2 &\leq 4\\
				0 \leq {x_{a'}}^2+{y_{b'}}^2, {y_{b'}}^2+{x_{c'}}^2, {x_{c'}}^2+{y_{d'}}^2, {y_{d'}}^2+{x_{a'}}^2 &\leq 4\\
				{x_{a'}}+{x_{c'}}, {y_{b'}}+{y_{d'}} &> 2\\
				\therefore 0 < x_a+x_c \leq 2 < x_{a'}+x_{c'}  \text{ and } 0 < y_b+y_d \leq 2 &< y_{b'}+y_{d'}
			\end{align}
			
			Then, $x_{a'} > x_a$ or $x_{c'} > x_c$ holds true. Similarly, $y_{b'} > y_b$ or $y_{d'} > y_d$ holds true. Assume that $x_{a'} > x_a$ and $y_{b'} > y_b$ hold up to symmetry. 
			
			\begin{enumerate}[a)]
				\item If $x_{c'} < x_c$ and $y_{d'} < y_d$ hold \remove{true}, then both $\{A,B,C',D'\}$ and $\{A,B,C,D'\}$ form a $K_4$.
				Furthermore, if ${x_{a'}}^2 + {y_d}^2 > 4$, then $x_{a'} > x_c$ since ${x_c}^2 + {y_d}^2 \leq 4$. However, now ${x_{c}}^2 + {y_{b'}}^2 \leq 4$ since ${x_{a'}}^2 + {y_{b'}}^2 \leq 4$. Then, we know that at least one of ${x_{a'}}^2 + {y_d}^2 \leq 4$ and ${y_{b'}}^2 + {x_c}^2 \leq 4$ holds.
				
				\begin{itemize}
					\item If ${x_{a'}}^2 + {y_d}^2 \leq 4$ and ${y_{b'}}^2 + {x_c}^2 \leq 4$ hold, then  $\mathcal{L}'=\{C,D,A',B'\}$ and $\mathcal{S}'=\{A,B,C',D'\}$ is the prescribed partitioning of $\mathcal{L} \cup \mathcal{S}$. Moreover, $\mathcal{L}' \cup A$ forms an induced $W_4$ with $A$ as its universal disk.
					
					\item Otherwise, ${x_{a'}}^2 + {y_d}^2 \leq 4$ and ${y_{b'}}^2 + {x_c}^2 > 4$ hold (up to symmetry), and $\mathcal{L}'=\{A',B',C',D\}$ and  $\mathcal{S}'=\{A,B,C,D'\}$ the prescribed partitioning of $\mathcal{L} \cup \mathcal{S}$ where only $x_{c'} < x_c$. Moreover, $\mathcal{L}' \cup A$ forms an induced $W_4$ with $A$ as its universal disk.
				\end{itemize}
				
				\item Else if $x_{c'} < x_c$ and $y_{d'} > y_d$ hold \remove{true}, $\mathcal{L}'=\mathcal{L}$ and $\mathcal{S}'=\mathcal{S}$ is the prescribed partitioning of $\mathcal{L} \cup \mathcal{S}$ where only $x_{c'} < x_c$. Moreover, $\mathcal{L}' \cup A$ forms an induced $W_4$ with $A$ as its universal disk.
				
				\item Otherwise, $x_{c'} > x_c$ and $y_{d'} > y_d$ hold true, and $\mathcal{L}'=\mathcal{L}$ and $\mathcal{S}'=\mathcal{S}$ is the prescribed partitioning of $\mathcal{L} \cup \mathcal{S}$. Moreover, if $A'$ intersects $C$, then $\mathcal{L}' \cup C$ forms an induced $W_4$ with $C$ as its universal disk, and otherwise, $\{A,C,A',B',D'\}$ forms an induced $W_4$ with $A$ as its universal disk.
			\end{enumerate}
			
			\item Otherwise, the disks in $\mathcal{S}$ belong to three distinct sets from $\{\XX^+,\YY^+,\XX^-,\YY^-\}$, and let $A,C \in \XX^+$, $B \in \YY^+$ and $D \in \YY^-$, and $x_a < x_c$ hold (up to symmetry). Then, in addition to (2), (3) and (4), the following hold:
			\begin{align*}
				0 \leq {x_c}-{x_a}, {y_b}+{y_d} &\leq 2\\
				\therefore 0 < x_c-x_a \leq 2 < x_{a'}+x_{c'}  \text{ and } 0 < y_b+y_d \leq 2 &< y_{b'}+y_{d'}
			\end{align*}
			
			Also, at least one of $y_{b'} > y_b$ and $y_{d'} > y_d$ holds since otherwise, $y_{b'} + y_{d'} < y_{b} + y_{d} \leq 2$ which contradicts that $\mathcal{L}$ is an induced $C_4$.
			
			\begin{enumerate}[a)]
				\item If $x_{a'} < x_a < x_c$ holds, and also $y_{b'} > y_b$ and $y_{d'} > y_d$ hold, then $\mathcal{L}'=\mathcal{L}$ and $\mathcal{S}'=\mathcal{S}$ is the prescribed partitioning of $\mathcal{L} \cup \mathcal{S}$ where only $x_{a'} < x_a < x_c$. Moreover, if $D'$ intersects $B$, then $\mathcal{L}' \cup B$ forms an induced $W_4$ with $B$ as its universal disk, and otherwise, $\{B,D,A',C',D'\}$ forms an induced $W_4$ with $D$ as its universal disk.
				Otherwise, $y_{b'} < y_b$ and $y_{d'} > y_d$ hold (up to symmetry), and we consider the disks $B$ and $C'$.
				\begin{itemize}
					\item  If ${y_b}^2 + {x_{c'}}^2 > 4$, then $x_{c'} > x_c$ since ${y_b}^2 + {x_c}^2 \leq 4$. However, now ${x_c}^2 + {y_{d'}}^2 \leq 4$ since ${x_{c'}}^2 + {y_{d'}}^2 \leq 4$. Furthermore, ${x_c}+{x_{c'}} > 2$ since ${x_{a'}}+{x_{c'}} > 2$ and $x_{a'} < x_c$, and ${y_{b'}}^2 + {x_c}^2 \leq 4$ since ${y_{b}}^2 + {x_c}^2 \leq 4$ and $y_{b'} < y_b$. Then, $\mathcal{L}'=\{C,B',C',D'\}$ and $\mathcal{S}'=\{A,B,D,A'\}$ is the prescribed partitioning of $\mathcal{L} \cup \mathcal{S}$ where only $y_{b'} < y_b$. Moreover, $\mathcal{L}' \cup D$ forms an induced $W_4$ with $D$ as its universal disk.
					
					\item  Otherwise, ${y_b}^2 + {x_{c'}}^2 \leq 4$. If ${x_c}^2 + {y_{d'}}^2 \leq 4$, then $\mathcal{L}'=\{B,C,C',D'\}$ and $\mathcal{S}'=\{A,D,A',B'\}$ is the prescribed partitioning of $\mathcal{L} \cup \mathcal{S}$. Moreover, $\mathcal{L}' \cup D$ forms an induced $W_4$ with $D$ as its universal disk. Otherwise, ${x_c}^2 + {y_{d'}}^2 > 4$, and $x_c > x_{c'}$ since ${x_{c'}}^2+ {y_{d'}}^2 \leq 4$. However, now ${y_{d'}} > y_b$ since ${y_b}^2 + {x_c}^2 \leq 4$ but ${x_c}^2 + {y_{d'}}^2 > 4$. Then, $\mathcal{L}'=\{B,A',C',D'\}$ and $\mathcal{S}'=\{A,C,D,B'\}$ is the prescribed partitioning of $\mathcal{L} \cup \mathcal{S}$ where only $x_{a'} < x_a < x_c$. Moreover, $\mathcal{L}' \cup D$ forms an induced $W_4$ with $D$ as its universal disk.
				\end{itemize}
				
				\item Else if $x_a < x_{a'} < x_c$ holds, and also $y_{b'} > y_b$ and $y_{d'} > y_d$ hold, then $\mathcal{L}'=\mathcal{L}$ and $\mathcal{S}'=\mathcal{S}$ is the prescribed partitioning of $\mathcal{L} \cup \mathcal{S}$ where only $x_{a'} < x_c$. Moreover, if $D'$ intersects $B$, then $\mathcal{L}' \cup C$ forms an induced $W_4$ with $B$ as its universal disk, and otherwise, $\{B,D,A',C',D'\}$ forms an induced $W_4$ with $D$ as its universal disk. Otherwise, $y_{b'} < y_b$ and $y_{d'} > y_d$ hold (up to symmetry), and we consider the disks $C$ and $D'$. 
				\begin{itemize}
					\item If ${x_c}^2 + {y_{d'}}^2 > 4$, then $x_c > x_{c'}$ since ${x_{c'}}^2+{y_{d'}}^2 \leq 4$. However, now ${y_b}^2+{x_{c'}}^2 \leq 4$ since ${y_b}^2+{x_{c}}^2 \leq 4$. Furthermore, $y_b+y_{d'}>y_{b'}+y_{d'}>2$ since $y_b>y_{b'}$, and ${x_{a'}}^2+{y_b}^2 \leq 4$ since $x_{a'}<x_c$ and ${y_b}^2+{x_c}^2 \leq 4$. Then, $\mathcal{L}'=\{B,A',C'D'\}$ and $\mathcal{S}'=\{A,C,D,B'\}$ is the prescribed partitioning of $\mathcal{L} \cup \mathcal{S}$ where only $x_{a'} < x_c$. Moreover, $\mathcal{L}' \cup D$ forms an induced $W_4$ with $D$ as its universal disk.
					
					\item Otherwise, ${x_c}^2 + {y_{d'}}^2 \leq 4$. Furthermore, $x_c+ x_{c'} > 2$ since $x_{a'}+ x_{c'} > 2$, and ${y_{b'}}^2+{x_c}^2 \leq 4$ since ${y_{b}}^2+{x_c}^2 \leq 4$ and $y_{b'} < y_b$. Then, $\mathcal{L}'=\{C,B',C',D'\}$ and $\mathcal{S}'=\{A,B,D,A'\}$ is the prescribed partitioning of $\mathcal{L} \cup \mathcal{S}$ where only $x_{b'} < x_b$. Moreover, $\mathcal{L}' \cup D$ forms an induced $W_4$ with $D$ as its universal disk.
				\end{itemize}
				
				\item Otherwise, $x_a < x_c < x_{a'}$ holds, and also $y_{b'} > y_b$ and $y_{d'} > y_d$ hold, then $\mathcal{L}'=\mathcal{L}$ and $\mathcal{S}'=\mathcal{S}$ is the prescribed partitioning of $\mathcal{L} \cup \mathcal{S}$. Moreover, if $B'$ intersects $D$, then $\mathcal{L}' \cup D$ forms an induced $W_4$ with $D$ as its universal disk, and otherwise, $\{B,D,A',B',C'\}$ forms an induced $W_4$ with $B$ as its universal disk. Otherwise, $y_{b'} < y_b$ and $y_{d'} > y_d$ hold (up to symmetry), and $\mathcal{L}'=\mathcal{L}$ and $\mathcal{S}'=\mathcal{S}$ is the prescribed partitioning of $\mathcal{L} \cup \mathcal{S}$ where only $y_{b'} < y_b$. Moreover, $\mathcal{L}' \cup D$ forms an induced $W_4$ with $D$ as its universal disk.	
			\end{enumerate}
		\end{enumerate}
		
		Finally, since a diamond is an induced subgraph of a $W_4$, the universal vertex of an induced $W_4$ is adjacent to all other vertices in a $W_4$, an induced $W_4$ without its universal vertex forms an induced $C_4$, all universal disks mentioned above are in $\mathcal{S}$, and by Lemma~\ref{lem:4cycle}, the last claim holds.\qed
	\end{proof}
	
	\begin{figure} [h!]
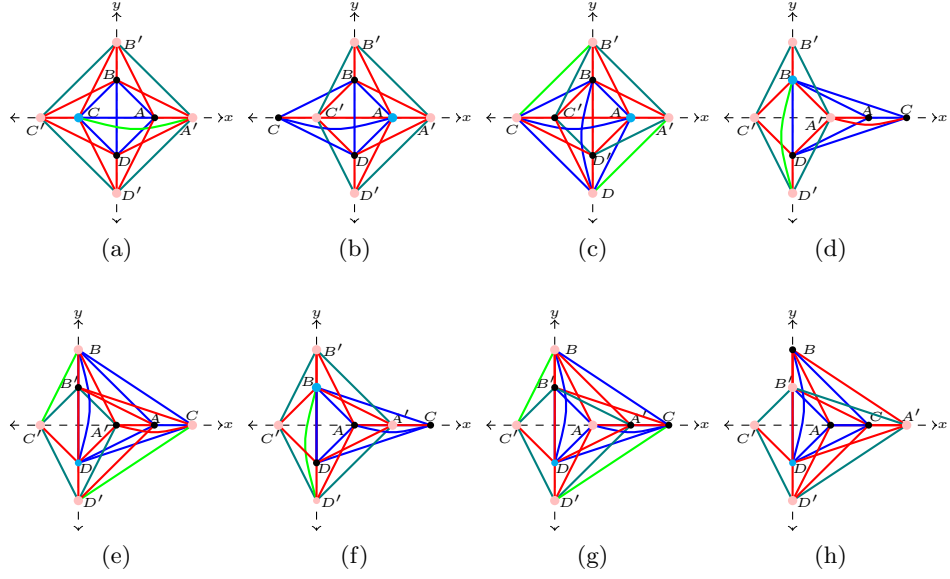

		\centering
		
		\begin{subfigure}[t]{0.23\linewidth}
			\centering

			\label{fig:ii.c}
		\end{subfigure}

		\caption{The possible cases in the setting of Lemma~\ref{lem:largerC4smallerK4}.}
		\label{fig:lem25}
	\end{figure}

	Figure~\ref{fig:lem25} shows the possible cases investigated in the proof of Lemma~\ref{lem:largerC4smallerK4} where the nodes correspond to disk centers, the blue edges exist in the induced $K_4$, the teal edges exist in the induced $C_4$, the orange edges correspond to edges appearing both in the induced $K_4$ and in the induced $C_4$, the red edges must exist due to triangle inequalities by the teal edges, the green edges illustrate the additional edges due to the prescribed subcases, and the pink and cyan nodes correspond to disks of induced $W_4$ with cyan as the universal disk.

	\remove{before starting general section, say that there is at least one induced C4 since if chordal use algo fro previous section and C4 is around (0,0)}

	\begin{clm}\label{clm:NONDISJOINTinducedK4diamond}
		Lemma~\ref{lem:largerC4smallerK4} also holds when the sets $\mathcal{L}$ and $\mathcal{S}$ have at most two disks in common. Moreover, they can not have more than two disks in common.
	\end{clm}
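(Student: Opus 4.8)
The plan is to settle the cardinality bound first and then reduce the remaining assertions to Lemma~\ref{lem:largerC4smallerK4}. If $\mathcal{L}$ and $\mathcal{S}$ shared three disks, those three would at once form a triangle (being a subset of the clique $\mathcal{S}$) and lie inside the induced $C_4$ $\mathcal{L}$, which is triangle-free; hence $|\mathcal{L}\cap\mathcal{S}|\le 2$. For the rest, write $\mathcal{L}=\{A',B',C',D'\}$ with $A'\in\XX^+$, $B'\in\YY^+$, $C'\in\XX^-$, $D'\in\YY^-$ by Lemma~\ref{lem:4cycle}. Since $\mathcal{S}$ is a clique, two common disks are adjacent, hence consecutive on the cycle $\mathcal{L}$, so up to symmetry we may assume $\mathcal{L}\cap\mathcal{S}\subseteq\{A',B'\}$, i.e.\ $\mathcal{L}\cap\mathcal{S}$ is $\{A'\}$ or $\{A',B'\}$; in either case $\mathcal{S}$, spanning at least three of $\{\XX^+,\YY^+,\XX^-,\YY^-\}$, contains a disk $C\in\XX^-$ or a disk $D\in\YY^-$.

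The key point is that, in the overlapping case, some disk $U\in\mathcal{S}\setminus\mathcal{L}$ is adjacent to \emph{all} of $A',B',C',D'$; then $\mathcal{L}\cup\{U\}$ is the induced $W_4$ demanded by the second bullet of Lemma~\ref{lem:largerC4smallerK4}, with universal disk $U\in\mathcal{S}$, and the diamond of the third bullet is $\{U\}\cup(\mathcal{L}\setminus\{P\})$, where $P$ is the rim disk of $\mathcal{L}$ lying in the same axis-set as $U$: its single chord joins $U$ to the middle disk of the path $\mathcal{L}\setminus\{P\}$, lies on four distinct axis-sets, and is incident to $U\in\mathcal{S}$. To find $U$: if $\mathcal{S}$ contains $C\in\XX^-$, then $A'\in\mathcal{S}$ and $A',C$ lie opposite on the $x$-axis, so $x_{a'}+x_{c}\le 2<x_{a'}+x_{c'}$ (the last inequality because $\mathcal{L}$ is an induced $C_4$), whence $x_{c}<x_{c'}$ and therefore $C$ meets $B'$ and $D'$ as well as $A'$ and $C'$; symmetrically a disk $D\in\YY^-$ works via $y_{d}<y_{d'}$. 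The only mildly delicate configuration is $|\mathcal{L}\cap\mathcal{S}|=1$ with $\mathcal{S}$ having no $\XX^-$ disk, so that its third and fourth sets are $\YY^+$ and $\YY^-$; there one uses $y_{b}+y_{d}\le 2<y_{b'}+y_{d'}$ to see that whichever of the $\YY^+$-disk and the $\YY^-$-disk of $\mathcal{S}$ is not ``shadowed'' by its counterpart in $\mathcal{L}$ is adjacent to all of $\mathcal{L}$ and may be taken as $U$.

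Finally, the first bullet of Lemma~\ref{lem:largerC4smallerK4} --- the partition of $\mathcal{L}\cup\mathcal{S}$ into an induced $C_4$ $\mathcal{L}'$ and a $K_4$ $\mathcal{S}'$ with at most one disk of $\mathcal{L}'$ strictly nearer the origin than its same-set counterpart in $\mathcal{S}'$ --- I would obtain by replaying the case analysis in the proof of Lemma~\ref{lem:largerC4smallerK4} essentially verbatim: every coordinate relation used there still holds, and the only effect of the overlap is to force an equality such as $x_{a'}=x_{a}$ (and possibly $y_{b'}=y_{b}$), which places us on the boundary of two of its branches; in the relevant branch, $\mathcal{L}'=\mathcal{L}$ and $\mathcal{S}'=\mathcal{S}$ is already the prescribed pair, with the single allowed exception ``only $x_{a'}<x_{c}$''. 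When $|\mathcal{L}\cup\mathcal{S}|<8$ the word ``partition'' is read as ``$\mathcal{L}'\cup\mathcal{S}'=\mathcal{L}\cup\mathcal{S}$ with $\mathcal{L}'\cap\mathcal{S}'$ equal to the common disks'', which suffices for all later uses. I expect the main obstacle to be exactly this bookkeeping: checking, over the few overlapping configurations (one common disk versus a common consecutive pair; $\mathcal{S}$ spanning three versus four sets), that the $W_4$ built above really has five distinct disks, and that no branch of the Lemma~\ref{lem:largerC4smallerK4} analysis is pushed into producing a second ``nearer the origin'' exception --- precisely the situations in which the alternative choices of $\mathcal{L}'$ in that proof were needed already in the disjoint case.
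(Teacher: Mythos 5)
Your cardinality bound and your treatment of the cases in which $\mathcal{S}$ contains a disk on the half-axis opposite to a shared disk coincide with the paper's argument (shared disks are adjacent, hence consecutive on $\mathcal{L}$; an opposite-pair inequality such as $x_{a'}+x_c\le 2<x_{a'}+x_{c'}$ forces $x_c<x_{c'}$ and makes $C$ universal over $\mathcal{L}$). The problem is the configuration you yourself flag as ``mildly delicate'': $\vert\mathcal{L}\cap\mathcal{S}\vert=1$ with the shared disk in, say, $\XX^+$ and $\mathcal{S}$ meeting only $\XX^+$, $\YY^+$, $\YY^-$. There your inference is wrong. From $y_b+y_d\le 2<y_{b'}+y_{d'}$ you do get $y_b<y_{b'}$ or $y_d<y_{d'}$, and the ``unshadowed'' disk, say $B$, is then adjacent to $A'$, $B'$ and $C'$; but adjacency to $D'$ requires $y_b+y_{d'}\le 2$, which does not follow from $y_b<y_{b'}$ together with $y_{b'}+y_{d'}>2$ --- those two inequalities point in the wrong direction. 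Concretely, take $A'=A=(1.2,0)$, $B'=(0,1.5)$, $C'=(-1.2,0)$, $D'=(0,-1.5)$, and $\mathcal{S}=\{A,A_2,B,D\}$ with $A_2=(1.3,0)$, $B=(0,0.9)$, $D=(0,-1.1)$. Both $B$ and $D$ are unshadowed, yet $y_b+y_{d'}=2.4>2$ and $y_d+y_{b'}=2.6>2$, so neither is adjacent to all of $\mathcal{L}$; in fact no disk of $\mathcal{S}$ is universal over $\mathcal{L}$, because every disk of $\mathcal{S}\cap\XX^+$ misses $C'$.

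This is not just a local slip: your whole strategy of always exhibiting the $W_4$ with rim equal to $\mathcal{L}$ cannot be repaired in that configuration. The induced $W_4$ promised by the second bullet of Lemma~\ref{lem:largerC4smallerK4} does exist in the example above, but its rim mixes the two sets --- $\{A,B',C',D\}$ is an induced $C_4$ and $B\in\mathcal{S}$ is universal for it --- so the argument must be allowed to re-choose the rim from $\mathcal{L}\cup\mathcal{S}$, exactly as the partition/relabelling step in the proof of Lemma~\ref{lem:largerC4smallerK4} does in its case (ii). (The paper's own proof of this claim only spells out the configurations in which a disk of $\mathcal{S}$ on the opposite half-axis is available, which is the part you also handle correctly.) Separately, your closing paragraph for the first bullet --- ``replay the case analysis essentially verbatim'' --- is a promissory note rather than a proof, though it is no vaguer than the paper's own one-line justification of that bullet; the substantive defect is the universal-disk claim above.
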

	
		\begin{proof}
		First of all, if $\mathcal{L}$ and $\mathcal{S}$ have three disks in common, then either $\mathcal{L}$ is not an induced $C_4$ since it has a chord, or $\mathcal{S}$ is not a $K_4$ since it has a missing chord. Thus, they have at most two disks in common. The partitioning described in Lemma~\ref{lem:largerC4smallerK4} clearly exists since the newly considered disks will be at the same distance to the point $(0,0)$. 
		
		Assume that they have two disks in common. If $A'=A$ and $B'=B$ (up to symmetry to the cases when $A'=A$ and $D'=D$, $B'=B$ and $C'=C$, $C'$ = $C$ and $D'=D$), then $x_c < x_{c'}$ since otherwise, $A$ and $C$ do not intersect, which means that $\{C,A',B',C',D'\}$ forms an induced $W_4$ with $C$ as its universal disk. Otherwise, $A'=A$ and $C'=C$ (up to symmetry to the case when $B'=B$ and $D'=D$) both of which contradict that $\mathcal{L}$ is an induced $C_4$ and $\mathcal{S}$ is a $K_4$.  
		
		Assume that they have one disk in common, say $A'=A$ up to symmetry. Then, $x_c < x_{c'}$ since otherwise, $A$ and $C$ do not intersect, which means that $\{C,A',B',C',D'\}$ forms an induced $W_4$ with $C$ as its universal disk. 
		
		The mentioned induced diamond exists by the proof of Lemma~\ref{lem:largerC4smallerK4}. \qed
	\end{proof}

	\begin{lem}\label{lem:inducedCyclesHellyCARC}
		Let $G$ be a connected $\APUD(1,1)$ and $\mathcal{L}^*$ be the set of 
		all vertices appearing in induced $C_4$s of $G$ identified in 
		polynomial time by Remark~\ref{rem:allInducedinPoly}. If 
		$\mathcal{L}^*$ contains no induced diamond or an induced $W_4$, then 
		$\mathcal{L}^*$ forms a Helly circular-arc graph, and the addition of any 
		universal vertex to $\mathcal{L}^*$ is also a Helly circular-arc graph.
	\end{lem}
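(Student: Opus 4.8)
If $G$ has no induced $C_4$ then $\mathcal{L}^*=\emptyset$, and the empty graph --- and $K_1$, obtained after adding a universal vertex --- is trivially a Helly circular-arc graph; so assume otherwise and fix an embedding $\Sigma(G)$. The plan has three stages: reduce $G[\mathcal{L}^*]$ to a rigid combinatorial shape, read off a Helly circular-arc model from that shape, and then handle the extra universal vertex.

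\emph{Stage 1 (four clique parts).} By Lemma~\ref{lem:4cycle} every induced $C_4$ of $G$ meets each of $\XX^+,\YY^+,\XX^-,\YY^-$ in exactly one disk, and in this cyclic order; hence the four sets $\mathcal{L}^*\cap\XX^+$, $\mathcal{L}^*\cap\YY^+$, $\mathcal{L}^*\cap\XX^-$, $\mathcal{L}^*\cap\YY^-$ partition $\mathcal{L}^*$. Since every disk of $\mathcal{L}^*$ lies in some induced $C_4$, Lemma~\ref{lem:atLeastThree} bounds its axis-coordinate by $2$, so within each part any two disks intersect; thus each part induces a clique, indeed a Helly clique, being a unit interval graph. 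Order the disks of each part by increasing distance of their centre from the origin; by the monotonicity of the intersection conditions recorded in Lemma~\ref{lem:someMandatoryEdges}, for any two parts $P,Q$ and any disk $J\in Q$ the set $N(J)\cap P$ is a prefix of $P$ in that order. Finally, since $\mathcal{L}^*$ contains no induced diamond and no induced $W_4$, Claim~\ref{clm:C4together} gives that $G[\mathcal{L}^*]$ has no non-Helly clique; in particular every clique of $G[\mathcal{L}^*]$ has a common point in $\Sigma(G)$.

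\emph{Stage 2 (rigidity).} The key step I would prove is that for all parts $P,Q$ and all $J\in Q$,
\[
  N(J)\cap P \in \{\emptyset,\ \{P_1\},\ P\},
\]
where $P_1$ is the innermost disk of $P$: if the prefix $N(J)\cap P$ contained the two innermost disks $P_1,P_2$ but not all of $P$, then, letting $P_k$ be the first disk of $P$ not adjacent to $J$, the set $\{P_1,P_2,P_k,J\}$ carries exactly the five edges $P_1P_2,P_1P_k,P_2P_k,JP_1,JP_2$ and hence induces a diamond, a contradiction (the cases $|P|\le2$ being vacuous). Together with the symmetric statement this forces $G[\mathcal{L}^*]$ into the shape drawn in Figure~\ref{fig:nonHellyExx}(e) and (g): the disks fall into concentric ``shells'', each shell being (a subset of) the vertex set of an induced $C_4$ with one disk per part, consecutive shells joined exactly as the edges of nested induced $C_4$'s, and the only remaining edges running from an outer disk to the innermost disk of a neighbouring or opposite part. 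A further short diamond/$W_4$ argument of the same flavour --- using in addition that $\mathcal{L}^*$ contains no $K_4$ meeting three parts, since such a $K_4$ together with any induced $C_4$ through one of its disks would contain an induced $W_4$ or diamond by Lemma~\ref{lem:largerC4smallerK4} and Claim~\ref{clm:NONDISJOINTinducedK4diamond} --- rules out the remaining interactions among these ``innermost spokes''. The upshot is that the maximal cliques of $G[\mathcal{L}^*]$ are all realised at (a small neighbourhood of) one of four fixed cardinal directions, and they admit a circular ordering in which, for every vertex, the cliques containing it form a contiguous block. Concretely this is exactly the nested-arc model of Figure~\ref{fig:nonHellyExx}(f),(g): place four equally spaced points on a circle, one per part in the cyclic order $\XX^+,\YY^+,\XX^-,\YY^-$, and assign to each disk an arc centred at its part's point whose length is chosen to reproduce the (now monotone, by the displayed equation) adjacency pattern; feasibility of this choice is precisely the rigidity just described, and every clique of the model is witnessed at one of the four points, so the model is Helly. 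Hence $G[\mathcal{L}^*]$ is a Helly circular-arc graph.

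\emph{Stage 3 (universal vertex) and the obstacle.} By the well-known characterisation, a graph is a Helly circular-arc graph iff its maximal cliques admit a circular ordering in which each vertex lies in a block of consecutive cliques; Stage 2 supplies such an ordering for $G[\mathcal{L}^*]$. Adding a vertex $u$ adjacent to all of $\mathcal{L}^*$ replaces each maximal clique $\mathcal{Q}$ by $\mathcal{Q}\cup\{u\}$ and introduces no new maximal clique; $u$ then lies in every clique of the order and every original vertex still occupies a consecutive block, so $\mathcal{L}^*\cup\{u\}$ is again a Helly circular-arc graph. The delicate point --- the main obstacle --- is Stage 2: the pairwise statement $N(J)\cap P\in\{\emptyset,\{P_1\},P\}$ is immediate, but converting ``no induced diamond, no induced $W_4$'' into the full rigid shape requires a careful, somewhat tedious case analysis of how several ``innermost-spoke'' edges can coexist, in particular excluding the configurations that would produce an induced $K_{2,3}$ (which is not a circular-arc graph at all); this analysis leans on Lemma~\ref{lem:largerC4smallerK4} and Claim~\ref{clm:NONDISJOINTinducedK4diamond}, together with Remark~\ref{rem:allInducedinPoly} to keep the relevant induced $C_4$'s under control.
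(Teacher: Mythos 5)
Your overall strategy coincides with the paper's: both rest on the characterization of Helly circular-arc graphs via a cyclic ordering of maximal cliques in which the cliques containing each vertex are consecutive, both place the four cliques $\mathcal{L}^*\cap\XX^+,\mathcal{L}^*\cap\YY^+,\mathcal{L}^*\cap\XX^-,\mathcal{L}^*\cap\YY^-$ at four cardinal points of the circle, both obtain the Helly property of all cliques from Claim~\ref{clm:C4together}, and both handle the universal vertex by an arc covering the whole circle. Your Stages~1 and~3 are sound, and the rigidity observation opening Stage~2 --- that the no-diamond hypothesis forces $N(J)\cap P\in\{\emptyset,\{P_1\},P\}$ --- is correct and in fact stronger than anything the paper uses: the paper's construction works with arbitrary prefix neighbourhoods and never needs this collapse.

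Nevertheless Stage~2, which is the heart of the proof, contains a genuine gap that you yourself flag. Two things are asserted rather than proved. First, the cyclic ordering itself: you claim the maximal cliques ``admit a circular ordering'' realised near the four cardinal directions, but you never construct the ordering of the maximal cliques mixing two adjacent parts (say $\XX^+$ and $\YY^+$); the paper does this with an explicit staircase induction using items (i) and (ii) of Lemma~\ref{lem:someMandatoryEdges}, repeatedly taking the outermost $A$-disk with a $B$-neighbour and the outermost $B$-disk it meets. Second, and more critically, one must exclude maximal cliques that meet two \emph{opposite} parts together with a third, and cliques meeting all four parts --- precisely the $K_{2,3}$-type obstructions you name and then defer --- since such a clique cannot occupy a consecutive block in any cyclic order of cliques anchored at the four cardinal points. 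The paper proves this exclusion: a clique with disks from each of $\XX^+$, $\XX^-$ and $\YY^-$ would, by Lemma~\ref{lem:someMandatoryEdges}(iii), make the nearer of the two $x$-axis disks universal for an induced $C_4$ through the farther one, yielding an induced $W_4$ inside $\mathcal{L}^*$, and a $K_4$ meeting three or four parts is excluded via Lemma~\ref{lem:largerC4smallerK4} and Claim~\ref{clm:NONDISJOINTinducedK4diamond}. Your rigidity claim does not substitute for either of these steps, so until that ``tedious case analysis'' is actually carried out the proposal is an outline rather than a proof.
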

	
	\begin{proof}
		Since $\mathcal{L}^*$ contains no induced $W_4$, it contains no $K_4$ 
		prescribed as in Lemma~\ref{lem:largerC4smallerK4}. It is known that a graph 
		is a Helly circular arc graph if its maximal cliques can assigned a 
		cyclic order such that the maximal cliques containing each vertex 
		appears consecutively in this cyclic order \cite{HellyCarcLinerRecog}. 
		First of all, by Claim~\ref{clm:C4together}, the cliques of 
		$\mathcal{L}^*$ satisfy the Helly property. Therefore, we show here 
		that there exists a cyclic ordering on $\mathcal{L}^*$ resulting in a 
		circular-arc graph representation. 
		
		For each induced $C_4$ on the disks $A \in \XX^+$,$B \in 
		\YY^+$, $C \in \XX^-$ and $D \in \YY^-$, we 
		already have a fixed cyclic ordering. Thus, if there is exactly one 
		induced $C_4$ in $\mathcal{L}^*$, the lemma holds. Similarly, if there 
		exists no disjoint induced $C_4$s in $\mathcal{L}^*$, we again have a 
		cyclic ordering on such induced cycles since the fixed ordering of one 
		of them fixes the ordering of the other non-disjoint induced $C_4$s. 
		Otherwise,
		for each pair of disjoint induced $C_4$s $\mathcal{L}_1$ and 
		$\mathcal{L}_2$, each disk of $\mathcal{L}_1$ is adjacent to at least 
		one disk of $\mathcal{L}_2$ and if there is exactly one such adjacency, 
		then it is between the disks placed on the same side from 
		$\{\XX^+,\YY^+,\XX^-,\YY^-\}$ by 
		Lemma~\ref{lem:connectedInduced}. Since $G$ is an $\APUD(1,1)$ and has 
		an $\APUD(1,1)$ representation, all disks contained in distinct induced 
		$C_4$s and centered on the same side from 
		$\{\XX^+,\YY^+,\XX^-,\YY^-\}$ mutually 
		intersect at a common point, i.e. each of $\mathcal{L}^* \cap 
		\XX^+$, $\mathcal{L}^* \cap \YY^+$, $\mathcal{L}^* \cap 
		\XX^-$ and $\mathcal{L}^* \cap \YY^-$ forms a clique. 
		To obtain a circular-arc representation of $\mathcal{L}^*$, we first 
		place those four cliques on the rightmost, topmost, leftmost and 
		bottommost points of the circle regarding the centers of the disks 
		belonging to $\XX^+$, $\YY^+$, $\XX^-$ and 
		$\YY^-$, respectively.

		If $\mathcal{L}^* \cap (\XX^+ \cup \YY^+)$ forms a 
		clique, then place this clique between the topmost and the rightmost 
		points of the circle, and we are done. Otherwise, there exists at least 
		one clique in $\mathcal{L}^*$ containing disks from both 
		$\XX^+$ and $\YY^+$ since $\mathcal{L}^*$ contains at 
		least one induced $C_4$. We first prove that such cliques of 
		$\mathcal{L}^*$ can be linearly ordered between the rightmost and the 
		topmost points of the circle. Let $\{A_1, \dots, A_i\} = \mathcal{L}^* 
		\cap \XX^+$ and $\{B_1, \dots, B_j\} = \mathcal{L}^* \cap 
		\YY^+$ be the disks contained in $\mathcal{L}^*$, belonging to 
		$\XX^+$ and $\YY^+$, respectively. Since $\mathcal{L}^* 
		\cap (\XX^+ \cup \YY^+)$ does not form a clique, 
		$\{A_1, \dots, A_i, B_1, \dots, B_j\}$ does not form a clique. Identify 
		the disk $A_k$ with maximum $k \leq i$ intersecting some disk from 
		$B_1, \dots, B_j$. Let $B_l$ be such a disk with maximum $l \leq j$. 
		Then, by Lemma~\ref{lem:someMandatoryEdges} item (i) and (ii), $\{ A_1, 
		\dots, A_k, B_1, \dots, B_l \}$ forms a clique. Place this clique 
		between the topmost and the rightmost points of the circle next to the 
		clique $\mathcal{L}^* \cap \XX^+$ by prolonging the arcs 
		corresponding to $\{A_1, \dots, A_k\}$ placed on the rightmost point of 
		the circle through the topmost point, and  $\{B_1, \dots, B_l\}$ placed 
		on the topmost point of the circle through the rightmost point of the 
		circle. Then, identify the disk $A_{k'}$ with maximum $k'< k$ 
		intersecting some disk from $B_1, \dots, B_j$. Let $B_{l'}$ be such a 
		disk with maximum $l' > l$. Again by Lemma~\ref{lem:someMandatoryEdges} 
		item (i) and (ii), $A_k$ intersects $A_{k'}$ and $A_{k'}$ intersects 
		$B_l$. Then, $\{ A_1, \dots, A_{k'}, B_1, \dots, B_l, B_{l'} \}$ forms 
		a clique. Place this clique between the clique $\{ A_1, \dots, A_k, 
		B_1, \dots, B_l \}$ and the topmost point of the circle where the 
		clique $\mathcal{L}^* \cap \YY^+$ is placed by prolonging the 
		arcs corresponding to $\{A_1, \dots, A_{k'}\}$ placed on $\{ A_1, 
		\dots, A_k, B_1, \dots, B_l \}$ through the topmost point, and  $\{B_1, 
		\dots, B_{l'}\}$ placed on the topmost point of the circle through the 
		rightmost point of the circle. This way, we order all maximal cliques 
		appearing in $\mathcal{L}^* \cap (\XX^+ \cup \YY^+)$, 
		and analogously all maximal cliques appearing in $\mathcal{L}^* \cap 
		(\XX^+ \cup \YY^-)$, $\mathcal{L}^* \cap (\XX^- 
		\cup \YY^+)$ and $\mathcal{L}^* \cap (\XX^- \cup 
		\YY^-)$.
		
		By Lemma~\ref{lem:largerC4smallerK4}, if $G$ contains a $K_4$ on four disks 
		belonging to distinct sides from 
		$\{\XX^+,\YY^+,\XX^-,\YY^-\}$, then $G$ 
		also contains an induced $W_4$ (thus, an induced $C_4$). However, since $\mathcal{L}^*$ contains 
		no induced $W_4$, it contains no such $K_4$. Thus, there is no clique 
		containing disks from four sides of 
		$\{\XX^+,\YY^+,\XX^-,\YY^-\}$. Also, 
		$\mathcal{L}^* \cap (\XX^+ \cup 
		\YY^- \cup \XX^-)$ contains no clique having at least 
		one disk from each of $\XX^+$, $\YY^-$ and 
		$\XX^-$ (up to symmetry) since if some $A_i$ intersects some 
		$C_k$ such that $x_{a_i} \geq x_{c_k}$, then any induced $C_4$ of 
		$\mathcal{L}^*$ containing $A_i$ forms an induced $W_4$ with $C_k$ as 
		its universal vertex since $C_k$ intersects all $B_j$ and $D_l$ 
		intersected by $A_i$ by Lemma~\ref{lem:someMandatoryEdges} 
		item (iii). Finally, the addition of a universal vertex to 
		$\mathcal{L}^*$ is also a Helly circular-arc graph since it can be 
		represented with an arc covering the whole circle. Thus, the lemma 
		holds.\qed
	\end{proof}

	Since the class of Helly circular-arc graphs are hereditary~\cite{HellyCarcLinerRecog}, the deletion 
	of any vertex from $\mathcal{L}^*$ results in a Helly circular-arc graph if 
	$\mathcal{L}^*$ is a Helly circular-arc graph. An \emph{$H$-graph} is the intersection graph on some fixed graph $H$ where each vertex is represented with a connected subgraph of $H$. We also obtain the following.
	
	\begin{lem}\label{lem:inducedCyclesH-GRAPH}
		Let $G$ be a connected $\APUD(1,1)$ and $\mathcal{L}^*$ be the set of 
		all vertices appearing in induced $C_4$s of $G$ identified in 
		polynomial time by Remark~\ref{rem:allInducedinPoly}. Then, 
		$\mathcal{L}^*$ forms an $H$-graph where $H$ consists of one cycle and 
		$4$ rays attached to it. Moreover, in an $H$-representation, the 
		induced subgraph of $G$ on the vertices placed on the cycle forms a 
		Helly circular-arc graph \add{if $\mathcal{L}^*$ contains no induced $W_4$}.
	\end{lem}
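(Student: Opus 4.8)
The plan is to read off the structure of $\mathcal{L}^*$ from a fixed $\APUD(1,1)$ embedding $\Sigma(G)$ and then construct the $H$-representation by hand, letting the cycle of $H$ carry a circular-arc ``skeleton'' of $\mathcal{L}^*$ and letting each of the four rays carry the disks of one side of $\{\XX^+,\YY^+,\XX^-,\YY^-\}$ that are responsible for $\mathcal{L}^*$ failing to be circular-arc. First I would pin down the structural facts. Since every disk of $\mathcal{L}^*$ is a corner of an induced $C_4$, Lemma~\ref{lem:atLeastThree} gives that each such disk has its nonzero coordinate in $(0,2]$; hence each of $\mathcal{L}^*\cap\XX^+$, $\mathcal{L}^*\cap\YY^+$, $\mathcal{L}^*\cap\XX^-$, $\mathcal{L}^*\cap\YY^-$ is a clique, linearly ordered by the distance of its centres from $(0,0)$. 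For $A\in\mathcal{L}^*\cap\XX^+$ at $(x_a,0)$, the neighbourhood of $A$ in $\mathcal{L}^*$ meets each of the other three sides in a prefix of that side's order (from $x_a^2+y_b^2\le 4$, $x_a^2+y_d^2\le 4$, $x_a+x_c\le 2$), and these prefixes grow monotonically as $x_a$ decreases; the analogous statement holds on the other three sides. Thus $\mathcal{L}^*$ is exactly four cliques joined pairwise by ``threshold'' bipartite graphs with nested cross-neighbourhoods along each side; this is the combinatorial backbone, and it is where Lemma~\ref{lem:someMandatoryEdges}(i)--(iii) is used to make the monotonicities precise.

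Next I would isolate, on each side $\mathcal{I}$, a prefix $U_{\mathcal{I}}\subseteq\mathcal{L}^*\cap\mathcal{I}$ of the ``central enough'' disks, chosen so that $U_{\mathcal{I}}$ contains every disk of $\mathcal{L}^*\cap\mathcal{I}$ that is the universal vertex of an induced $W_4$ of $\mathcal{L}^*$; by the triangle inequality one may take $U_{\mathcal{I}}$ to be a down-set of $\mathcal{I}$'s order, and it is a clique since it lies in $(0,2]$ coordinate-wise. Put $\mathcal{L}^*_{\circ}=\mathcal{L}^*\setminus(U_{\XX^+}\cup U_{\YY^+}\cup U_{\XX^-}\cup U_{\YY^-})$. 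The universal vertex of any induced $W_4$ contained in $\mathcal{L}^*_{\circ}$ would lie in some $U_{\mathcal{I}}$, so $\mathcal{L}^*_{\circ}$ has no induced $W_4$, and therefore no induced diamond by Claim~\ref{clm:C4together}. I would then apply the construction behind Lemma~\ref{lem:inducedCyclesHellyCARC} (which uses only the $\APUD(1,1)$ embedding and the absence of an induced $W_4$ and of an induced diamond) to $\mathcal{L}^*_{\circ}$, obtaining a Helly circular-arc representation on a circle $\mathcal{C}$ in which the four cliques $\mathcal{L}^*_{\circ}\cap\XX^+,\dots,\mathcal{L}^*_{\circ}\cap\YY^-$ sit at four poles $p_{\XX^+},p_{\YY^+},p_{\XX^-},p_{\YY^-}$ in this cyclic order and in which the arc of each disk reaches from its pole towards the two neighbouring poles exactly over its cross-neighbours. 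Take $H$ to be $\mathcal{C}$ with a ray $R_{\mathcal{I}}$ attached at $p_{\mathcal{I}}$ for each side, work in a fine enough subdivision, keep the arcs of $\mathcal{L}^*_{\circ}$, and represent each $v\in U_{\mathcal{I}}$ by an initial segment of $R_{\mathcal{I}}$ together with the arc $\alpha_v$ of $\mathcal{C}$ spanning exactly the $\mathcal{L}^*_{\circ}$-neighbours of $v$ (by the prefix structure and Lemma~\ref{lem:someMandatoryEdges}(iii) this is one connected arc straddling $p_{\mathcal{I}}$), ordering $U_{\mathcal{I}}$ along $R_{\mathcal{I}}$ by decreasing depth so that deeper disks receive longer telescoping ray segments and longer arcs.

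It then remains to check that the representation is faithful. Two disks inside a single $U_{\mathcal{I}}$ share the $p_{\mathcal{I}}$-end of $R_{\mathcal{I}}$ and are adjacent (as $U_{\mathcal{I}}$ is a clique); a pair $v\in U_{\mathcal{I}}$, $w\in\mathcal{L}^*_{\circ}$ meet iff $w$'s arc lies inside $\alpha_v$, i.e.\ iff $w\in N(v)$; and a pair $v\in U_{\mathcal{I}}$, $v'\in U_{\mathcal{J}}$ with $\mathcal{I}\ne\mathcal{J}$ meet iff $\alpha_v\cap\alpha_{v'}\ne\emptyset$. The main obstacle is this last case: one must choose the lengths of the arcs leaving the four poles, and for the opposite pairs $\XX^+$--$\XX^-$ and $\YY^+$--$\YY^-$ which side of the cycle they reach through, so that two deep disks on different sides have overlapping arcs exactly when the relevant distance inequality ($x_a+x_c\le 2$, resp.\ $\sqrt{x_a^2+y_b^2}\le 2$) holds. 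This is exactly where Lemma~\ref{lem:largerC4smallerK4} and the ``domination'' statement of Lemma~\ref{lem:someMandatoryEdges}(iii) are needed (the same mechanism that forces a $W_4$ whenever a $K_4$ lies across the axes), and where four rays, one per side, rather than fewer turn out to be necessary; care is also needed so that adding the $\alpha_v$ to $\mathcal{C}$ creates no spurious intersections among the cycle arcs. Granting this, $\mathcal{L}^*$ is an $H$-graph. Finally, if $\mathcal{L}^*$ contains no induced $W_4$ then every $U_{\mathcal{I}}$ may be taken empty, so $\mathcal{L}^*_{\circ}=\mathcal{L}^*$ and all vertices are placed on $\mathcal{C}$; the induced subgraph of $G$ on the cycle-vertices is then $\mathcal{L}^*$ itself, which is a Helly circular-arc graph by Lemma~\ref{lem:inducedCyclesHellyCARC} (and in general the cycle-vertices induce a subgraph of $\mathcal{L}^*$, hence a Helly circular-arc graph by heredity~\cite{HellyCarcLinerRecog} whenever $\mathcal{L}^*$ has no induced $W_4$). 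This yields the lemma.
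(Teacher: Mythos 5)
Your structural groundwork is correct and coincides with the paper's: each of $\mathcal{L}^*\cap\XX^+,\dots,\mathcal{L}^*\cap\YY^-$ is a clique by Lemma~\ref{lem:atLeastThree}, the cross-neighbourhoods are nested prefixes by Lemma~\ref{lem:someMandatoryEdges}, the circular-arc core is supplied by Lemma~\ref{lem:inducedCyclesHellyCARC}, and your treatment of the ``no induced $W_4$'' clause is fine. The divergence is in which vertices you send to the rays, and this is where the proposal has a genuine gap. The paper evicts the \emph{peripheral} disks: for a crossing pair $A\in\XX^+$, $C\in\XX^-$ with $A\cap C\neq\emptyset$, it lets $A$ and $C$ cover essentially the whole circle and moves the far-out disks $B''\in\YY^+$ (those with $y$-coordinate too large to meet $A$ and $C$) onto the ray $Y^+_H$, prolonging onto that ray the disks that $B''$ still meets. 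You instead keep all of $\mathcal{L}^*_\circ$ on the circle and give each deep disk $v\in U_{\mathcal I}$ a ray segment plus an arc $\alpha_v$ that is supposed to span exactly its $\mathcal{L}^*_\circ$-neighbours.

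Such an $\alpha_v$ need not exist. Take $v\in U_{\XX^+}$ at $(x_v,0)$ meeting some $w\in\XX^-$, together with far-out disks $B''\in\YY^+$ and $D''\in\YY^-$ satisfying $x_v^2+y_{b''}^2>4$ and $x_v^2+y_{d''}^2>4$ (for instance $x_v=1.5$, $x_w=0.5$, $y_{b''}=y_{d''}=1.5$; all four disks are easily placed in induced $C_4$s). In your representation $w$'s presence on $\mathcal{C}$ lies in the left half of the circle, while every disk of $\YY^+$ (resp.\ $\YY^-$) — whether it sits in $\mathcal{L}^*_\circ$ or in $U_{\YY^\pm}$ — retains an arc through the top (resp.\ bottom) pole. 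Any connected subset of $\mathcal{C}$ meeting both the right and the left half contains the top or the bottom pole, so $\alpha_v$ is forced to intersect the arc of $B''$ or of $D''$, a spurious adjacency; and the ray segment of $v$ cannot substitute, since $v$ and $w$ can only meet on the cycle. This is precisely the step you flag as ``the main obstacle'' and leave open, and it is not resolved by tuning arc lengths or routing sides: the partition itself must be changed so that the far-out disks of $\YY^+$ and $\YY^-$ vacate the poles (moving to the rays, with their neighbours prolonged after them), which is what the paper's proof does. A secondary issue is that you apply Lemma~\ref{lem:inducedCyclesHellyCARC} to $\mathcal{L}^*_\circ$, whose hypothesis is stated for the set of \emph{all} vertices lying in induced $C_4$s; after deleting the sets $U_{\mathcal I}$ some vertices of $\mathcal{L}^*_\circ$ may no longer lie in any induced $C_4$ of $\mathcal{L}^*_\circ$, so that application needs separate justification.
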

	
	\begin{proof}
		If $\mathcal{L}^*$ contains no induced diamond nor  induced $W_4$, it 
		follows from Lemma~\ref{lem:inducedCyclesHellyCARC}. Thus, we assume 
		that $\mathcal{L}^*$ contains an induced diamond or an induced $W_4$. 
		Let the rays of $H$ be labeled as $X^+_H,Y^+_H,X^-_H,Y^-_H$ respecting 
		the axes. By Lemma~\ref{lem:middleClique}, if there is a set 
		$\mathcal{F}$ of disks that are centered in $\Gamma_{L}$ for an induced 
		$C_4$ $L$, $\mathcal{F}$ forms a clique that satisfies the Helly 
		property. Since the cliques formed by disks placed on $\XX$ and 
		$\YY$ can be placed around a circle by 
		Lemma~\ref{lem:inducedCyclesHellyCARC}, here we consider a disk $A$ 
		with its center on $\XX^+$ intersecting a disk $C$ with its 
		center on $\XX^-$ (and, a disk $B$ with its center on 
		$\YY^+$ intersecting a disk $D$ with its center on 
		$\YY^-$ analogously follows). Then, $A$ and $C$, and 
		potentially some other disks, intersect mutually since all their 
		centers are inside $\Gamma_{L}$ for an induced $C_4$ $L$. Then, all 
		these disks can cover the whole circle except some disks placed on 
		$\YY^+$ and $\YY^-$ which do not intersect them. 
		Consider the disk $B$ with its center on $\YY^+$ both $A$ and 
		$C$ intersect which has the greatest $y$-coordinate. Now, $A$ and $C$ 
		intersect all other disks $B'$ with their centers on $\YY^+$ 
		having a smaller $y$-coordinates. Then, all disks $B''$ with their 
		centers on $\YY^+$ having a greater $y$-coordinate than $B$ can 
		be placed on the ray $Y^+_H$ when $B$, together with all disks $A'$ 
		with their centers on $\XX^+$ intersecting $B$, thus having a 
		smaller $x$-coordinate than $A$, and all disks $C'$ with their centers 
		on $\XX^-$ intersecting $B$, thus having a greater 
		$x$-coordinate than $C$ are prolonged true $Y^+_H$. This holds true 
		also considering the disk $D$ with its center on $\YY^-$ both 
		$A$ and $C$ intersect which has the greatest $y$-coordinate. Thus, the 
		lemma holds.\qed
	\end{proof}

	
	\begin{corl}\label{corl:W4UNI}
		Let $\{A,B,C,D,U\}$ be an induced $W_4$ in an $\APUD(1,1)$ such that $U$ is the universal disk and $\{A,B,C,D\}$ is the induced $C_4$. Let $(x_a,0)$, $(0,y_b)$, $(-x_c,0)$, $(0,-y_d)$, $(x_u,0)$ denote the centers of $\{A,B,C,D,U\}$, respectively where $x_a,y_b,x_c,y_d,x_u \in \mathbb{R}^+$. Then, $0 < x_a-x_u \leq 2$ and $0 < x_c+x_u \leq 2$, thus $2 < x_a+x_c \leq 4$.
	\end{corl}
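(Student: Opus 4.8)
The plan is to read off all three bounds directly from the edges and non-edges forced by the configuration, using only Lemma~\ref{lem:4cycle} and elementary distance comparisons between unit disks whose centers lie on a common axis. Throughout, recall that in the induced $W_4$ the disk $U$ is adjacent to each of $A$, $B$, $C$ and $D$, while $\{A,B,C,D\}$ being an induced $C_4$ forces, via Lemma~\ref{lem:4cycle}, that $A$ and $C$ do not intersect (and likewise $B$ and $D$). Since $A$ and $C$ are both centered on the $x$-axis at Euclidean distance $x_a+x_c$, the non-edge between them immediately gives $x_a+x_c>2$, which is the lower bound in the last claim.

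Next I would locate $U$ relative to $A$ on the positive $x$-axis. As $U$ is centered at $(x_u,0)$ and $C$ at $(-x_c,0)$, their centers are at distance $x_u+x_c$, so the edge $UC$ yields $x_u+x_c\le 2$; combined with $x_a+x_c>2$ this forces $x_u<x_a$, i.e.\ $0<x_a-x_u$. With the ordering $x_u<x_a$ in hand, the centers of $A$ and $U$ (both on the positive $x$-axis) are at distance $x_a-x_u$, and the edge $AU$ gives $x_a-x_u\le 2$; this establishes $0<x_a-x_u\le 2$. The inequality $0<x_c+x_u\le 2$ is then immediate: positivity is trivial since $x_c,x_u\in\mathbb{R}^+$, and the upper bound is exactly the edge $UC$ used above. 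Finally, adding $x_a-x_u\le 2$ and $x_c+x_u\le 2$ gives $x_a+x_c\le 4$, and together with $x_a+x_c>2$ this yields $2<x_a+x_c\le 4$.

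The only step that calls for an actual argument rather than a one-line distance inequality is showing that $U$ lies strictly between the origin and the center of $A$, that is, $x_u<x_a$; this is where the non-edge $AC$ from Lemma~\ref{lem:4cycle} is essential, since it is precisely what rules out $U$ being at least as far from the origin as $A$. Beyond being careful that $U$ is taken on the same ray as $A$ (as fixed by the hypotheses), so that the relevant distance is a difference rather than a sum on the positive $x$-axis, I do not anticipate any genuine obstacle.
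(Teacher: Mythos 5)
Your argument is correct: the non-edge $AC$ from Lemma~\ref{lem:4cycle} gives $x_a+x_c>2$, the edge $UC$ gives $x_u+x_c\leq 2$ (hence $x_u<x_a$), the edge $UA$ then gives $x_a-x_u\leq 2$, and summing yields $x_a+x_c\leq 4$. The paper states this corollary without any proof, so there is nothing to compare against; your elementary distance computation supplies exactly the justification the authors leave implicit, and I see no gap.
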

	
	\begin{clm}\label{clm:inducedW4sinCenter}
		If $G$ is an $\APUD(1,1)$ containing two disjoint induced $W_4$s, then the universal disks of those $W_4$s are adjacent to each other if they belong to the same set from $\{\XX^+,\YY^+,\XX^-,\YY^-\}$. 
	\end{clm}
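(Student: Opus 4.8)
The plan is to obtain this claim as a short consequence of Corollary~\ref{corl:W4UNI}. First I would fix an $\APUD(1,1)$ embedding $\Sigma(G)$ and name the two disjoint induced $W_4$s as $\{A_1,B_1,C_1,D_1,U_1\}$ and $\{A_2,B_2,C_2,D_2,U_2\}$, where $U_i$ is the universal disk of the $i$-th wheel and $\{A_i,B_i,C_i,D_i\}$ is its induced $C_4$. Under the hypothesis that $U_1$ and $U_2$ belong to the same set from $\{\XX^+,\YY^+,\XX^-,\YY^-\}$, I would apply one global symmetry of the embedding to reduce to the case $U_1,U_2\in\XX^+$, so that their centers are $(x_{u_1},0)$ and $(x_{u_2},0)$ with $x_{u_1},x_{u_2}\in\mathbb{R}^+$.

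Next I would use Lemma~\ref{lem:4cycle}: in any induced $C_4$ of an $\APUD(1,1)$ the four disks lie in the four distinct sets $\XX^+,\YY^+,\XX^-,\YY^-$, one on each ray; in particular, each $\{A_i,B_i,C_i,D_i\}$ contains a disk centered on $x^-$, say at $(-x_{c_i},0)$ with $x_{c_i}\in\mathbb{R}^+$. Applying Corollary~\ref{corl:W4UNI} to each wheel, with $U_i$ in the role of the universal disk on $x^+$ and the $x^-$-disk in the role of $C$, yields $x_{c_i}+x_{u_i}\le 2$ for $i\in\{1,2\}$; since $x_{c_i}>0$, this gives $x_{u_i}<2$.

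Finally, $U_1$ and $U_2$ are centered on the same ray $x^+$ at distances $x_{u_1},x_{u_2}<2$ from the origin, so the Euclidean distance between their centers is $|x_{u_1}-x_{u_2}|\le\max\{x_{u_1},x_{u_2}\}<2$, hence the unit disks $U_1$ and $U_2$ intersect and the universal disks are adjacent in $G$. The only point that needs care is the orientation required by Corollary~\ref{corl:W4UNI}, namely that after the symmetry reduction each $C_4$ genuinely has a disk playing the role of $C$ on $x^-$; this is immediate from Lemma~\ref{lem:4cycle}, since each induced $C_4$ always has exactly one disk on each of the four rays. I do not expect a substantive obstacle here; in particular, disjointness of the two $W_4$s is not actually used in the argument and serves only to isolate the interesting case.
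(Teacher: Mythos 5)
Your proposal is correct and follows essentially the same route as the paper: both arguments reduce to the case $U_1,U_2\in\XX^+$ and invoke Corollary~\ref{corl:W4UNI} (via the intersection of each universal disk with the $\XX^-$-disk of its own $C_4$) to force both centers within distance $2$ of the origin on the same ray, whence the two universal disks intersect. Your observation that disjointness of the two $W_4$s plays no role in the argument is also consistent with the paper's proof, which likewise never uses it beyond setting up the configuration.
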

	\begin{proof}
		Let $\mathcal{L}_1$ and $\mathcal{L}_2$ denote the induced $C_4$s appearing in two disjoint $W_4$s. By Lemma~\ref{lem:4cycle}, the centers of the four disks in $\mathcal{L}_1$ (and also $\mathcal{L}_2$) are on $\XX^+$, $\YY^+$, $\XX^-$ and $\YY^-$, respectively. By Lemma~\ref{lem:connectedInduced},
		each disk of $\mathcal{L}_1$ is adjacent to at least one disk of $\mathcal{L}_2$, and such adjacencies are at least between the disks with their centers on the same side from $\{\XX^+,\YY^+,\XX^-,\YY^-\}$. Let those disks be $A_1,B_1,C_1,D_1 \in \mathcal{L}_1$ and $A_2,B_2,C_2,D_2 \in \mathcal{L}_2$ with their centers on $\{\XX^+,\YY^+,\XX^-,\YY^-\}$, respectively, such that $V_1$ is adjacent to $V_2$ for $V \in \{A,B,C,D\}$, and let $U_1$ and $U_2$ be those universal disks of such $W_4$s. Assume that the center of $U_1$ is on $\XX^+$, i.e. it is $(x_{u_1},0)$ for $x_{u_1} \in \mathbb{R}^+$. Let the center of $U_2$ be also on $\XX^+$. If $x_{u_2} < x_{u_1}$, then the claim holds since $U_1$ and $U_2$ must intersect each other so that $U_1$ can intersect $C_1$ by Corollary~\ref{corl:W4UNI}.
		Otherwise, $x_{u_2} > x_{u_1}$ and the claim holds since $U_1$ and $U_2$ must intersect each other so that $U_2$ can intersect $C_2$ by Corollary~\ref{corl:W4UNI}.\qed
	\end{proof}
	
	\begin{clm}\label{clm:inducedW4sinCenterClique}
		If $G$ is an $\APUD(1,1)$, all universal disks forming an induced $W_4$ with the same induced $C_4$ form a clique.
	\end{clm}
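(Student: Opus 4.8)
The plan is to work in a fixed $\APUD(1,1)$ embedding $\Sigma(G)$ and, after pinning down where a universal disk of such a $W_4$ can lie, to verify that any two of them intersect by a short case analysis on which of the four sides $\XX^+,\YY^+,\XX^-,\YY^-$ they occupy. By Lemma~\ref{lem:4cycle} one may assume, up to symmetry, that the induced $C_4$ $\{A,B,C,D\}$ has $A\in\XX^+$, $B\in\YY^+$, $C\in\XX^-$, $D\in\YY^-$ with centers $(x_a,0),(0,y_b),(-x_c,0),(0,-y_d)$ for $x_a,y_b,x_c,y_d\in\mathbb{R}^+$. Since $\{A,B,C,D\}$ is an induced $C_4$, $A$ and $C$ do not intersect and $B$ and $D$ do not intersect, so $x_a+x_c>2$ and $y_b+y_d>2$; moreover $x_a,y_b,x_c,y_d\le 2$ by Lemma~\ref{lem:atLeastThree}, and $A\sim B$, $B\sim C$, $C\sim D$, $D\sim A$ give $x_a^2+y_b^2\le 4$, $x_c^2+y_b^2\le 4$, $x_c^2+y_d^2\le 4$, $x_a^2+y_d^2\le 4$. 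Any universal disk $U$ of a $W_4$ on $\{A,B,C,D\}$ has its center on one of the four sides; if $U\in\XX^+$ with center $(x_u,0)$, then $U\sim C$ forces $x_u+x_c\le 2$, which together with $x_a+x_c>2$ gives $0\le x_u<x_a$ (this bound is also Corollary~\ref{corl:W4UNI}), and symmetrically for the other three sides. So the first step is just to record these facts; then I would take two distinct universal disks $U_1,U_2$ of $W_4$s on $\{A,B,C,D\}$ and split into three types of pairs according to their sides.

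If $U_1$ and $U_2$ lie on the same side, say both on $\XX^+$ with centers $(x_{u_1},0)$ and $(x_{u_2},0)$, then $x_{u_1},x_{u_2}\in[0,x_a)\subseteq[0,2)$, so their centers are at distance less than $2$ and the disks intersect; the other three same-side cases are identical by symmetry. If $U_1$ and $U_2$ lie on adjacent sides, say $U_1\in\XX^+$ with center $(x_{u_1},0)$ and $U_2\in\YY^+$ with center $(0,y_{u_2})$, then $0\le x_{u_1}<x_a$ and $0\le y_{u_2}<y_b$, hence the squared distance between their centers is $x_{u_1}^2+y_{u_2}^2<x_a^2+y_b^2\le 4$ (using $A\sim B$), so $U_1$ and $U_2$ intersect; the remaining adjacent pairs $\{\YY^+,\XX^-\}$, $\{\XX^-,\YY^-\}$, $\{\YY^-,\XX^+\}$ are symmetric.

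The only case needing an extra idea is when $U_1$ and $U_2$ lie on opposite sides, say $U_1\in\XX^+$ with center $(x_{u_1},0)$ and $U_2\in\XX^-$ with center $(-x_{u_2},0)$, $x_{u_1},x_{u_2}\ge 0$. Here the bounds $x_{u_1}<x_a$, $x_{u_2}<x_c$ only yield $x_{u_1}+x_{u_2}<x_a+x_c$, which need not be at most $2$; instead I would use the ``crossing'' intersections $U_1\sim C$ and $U_2\sim A$, giving $x_{u_1}+x_c\le 2$ and $x_{u_2}+x_a\le 2$, and then add these and subtract $x_a+x_c>2$ to get $x_{u_1}+x_{u_2}<2$, so the centers are within distance $2$ and the disks intersect. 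The case $U_1\in\YY^+$, $U_2\in\YY^-$ is symmetric. Since every unordered pair of the four sides falls under one of these three types, any two universal disks of $W_4$s on the same induced $C_4$ intersect, so they form a clique. The main (and only mild) obstacle is this opposite-side case, handled above; the degenerate situations where some coordinate equals $0$ are absorbed by the non-strict inequalities, and connectedness of $G$ is not required.
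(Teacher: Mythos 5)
Your proof is correct and follows essentially the same route as the paper: both rest on the positional constraints of Corollary~\ref{corl:W4UNI} (a universal disk is centered strictly between the origin and the $C_4$-disk on its axis) and then check the same-side, adjacent-side, and opposite-side pairs by short distance computations. The only cosmetic difference is that for the adjacent and opposite cases the paper bounds the distance via the universal disk's own adjacencies ($U\sim B$, $U\sim C$) rather than via $A\sim B$ and the summed crossing inequalities, but the content is identical.
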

	
	\begin{proof}
		Let $\mathcal{L}=\{A,B,C,D\}$ be an induced $C_4$ such that the addition of any disk $U \in \mathcal{F} \subseteq V(G) \setminus \mathcal{L}$ results in an induced $W_4$. Note that $U$ must be the universal disk of the formed $W_4$. Let the centers of $\{A,B,C,D,U\}$ be as in Corollary~\ref{corl:W4UNI}. Assume another disk $V \neq U \in \mathcal{F}$. Let $x_v$ and $y_v$ be two positive real numbers. If the center of $V$ is $(x_v,0)$, then the claim follows trivially by the similar arguments used to prove Claim~\ref{clm:inducedW4sinCenter}. Else if the center of $V$ is $(-x_v,0)$, then the claim holds since $U$ cannot intersect $C$ without intersecting $V$ as both of them are centered between $A$ and $C$ by Corollary~\ref{corl:W4UNI}. Else, the center of $V$ is $(0,y_v)$ (and the case $(0,-y_v)$ is analogous). Now, by Corollary~\ref{corl:W4UNI}, $0 < y_b-y_v \leq 2$ and $0 < y_d+y_v \leq 2$, and $U$ cannot intersect $B$ without intersection $V$, thus the claim holds.\qed
	\end{proof}


	\begin{clm}\label{clm:inducedC4diamond}
		If $G$ is an $\APUD(1,1)$ which contains at least one induced $C_4$ $\mathcal{L}$, and an induced diamond $\mathcal{S}$, disjoint from $\mathcal{L}$, formed by disks belonging to at least three distinct sets from $\{\XX^+,\YY^+,\XX^-,\YY^-\}$, then $G$ contains an induced $W_4$ formed by five of those disks.
	\end{clm}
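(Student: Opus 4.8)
The plan is to derive Claim~\ref{clm:inducedC4diamond} from Lemma~\ref{lem:largerC4smallerK4} and Claim~\ref{clm:NONDISJOINTinducedK4diamond}, which already turn an induced $C_4$ together with a $K_4$ spanning at least three sides into an induced $W_4$, even when the $C_4$ and the $K_4$ share up to two disks. First I would fix notation as usual: by Lemma~\ref{lem:4cycle}, write $\mathcal{L}=\{A',B',C',D'\}$ with $A'\in\XX^+$, $B'\in\YY^+$, $C'\in\XX^-$, $D'\in\YY^-$, so that $A'$ and $C'$ do not intersect and $B'$ and $D'$ do not intersect; I would also recall the fact used throughout the proof of Lemma~\ref{lem:largerC4smallerK4} that, $G$ being an $\APUD(1,1)$, any two disks of $\mathcal{L}\cup\mathcal{S}$ lying on the same half-axis intersect. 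A diamond is $K_4$ with one edge $e$ deleted, i.e.\ two triangles sharing a common edge (its spine), so $\mathcal{S}$ has exactly two triangles, each obtained by deleting one disk of $\mathcal{S}$, and deleting the right disk always leaves a triangle spanning at least three sides unless the spine of $\mathcal{S}$ lies entirely on a single half-axis.

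The core of the argument is to exhibit inside $\mathcal{L}\cup\mathcal{S}$ a $K_4$ $\mathcal{S}'$ on disks from at least three distinct sides that shares at most two disks with an induced $C_4$ $\mathcal{L}'\subseteq\mathcal{L}\cup\mathcal{S}$; then Lemma~\ref{lem:largerC4smallerK4} together with Claim~\ref{clm:NONDISJOINTinducedK4diamond} yields an induced $W_4$ whose five disks lie in $\mathcal{L}'\cup\mathcal{S}'\subseteq\mathcal{L}\cup\mathcal{S}$, which is exactly the conclusion. To build $\mathcal{S}'$ I would split on the number $t\in\{3,4\}$ of sides occupied by $\mathcal{S}$ and on which edge $e$ is deleted. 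When $t=4$ the convention of the paper lets us write $\mathcal{S}=\{A,B,C,D\}$ with $A\in\XX^+$, $B\in\YY^+$, $C\in\XX^-$, $D\in\YY^-$, and up to symmetry $e$ is either the ``diagonal'' $BD$ (so $A$ and $C$ intersect) or a ``side'' such as $AB$; in both cases each of the two triangles of $\mathcal{S}$ spans three sides, and I would complete the triangle whose missing side is available by adjoining a disk of $\mathcal{L}$ on that side, checking --- by the side and diagonal inequalities forced by ``$\mathcal{S}$ is a diamond'' and ``$\mathcal{L}$ is an induced $C_4$'' together with the triangle inequality, exactly in the style of the case analysis of Lemma~\ref{lem:largerC4smallerK4} and Figure~\ref{fig:lem25} --- that the four disks form a $K_4$; here $\mathcal{L}'=\mathcal{L}$ and $|\mathcal{S}'\cap\mathcal{L}'|\le 1$. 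When $t=3$ one side carries two disks of $\mathcal{S}$; deleting one of them leaves a triangle spanning three sides except in the single sub-case where the spine of $\mathcal{S}$ is monochromatic and the two leaves lie on two different half-axes, and in that sub-case I would instead read $\mathcal{S}$ as a $4$-cycle with one chord and combine it with the disks of $\mathcal{L}$ on those two half-axes to locate the $C_4$ and its universal disk directly.

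A useful organising observation is that the $C_4$ of the produced $W_4$ must contain two non-adjacent pairs, whereas no triangle of $\mathcal{S}$ has any and $\mathcal{S}$ itself has only one; so the target $C_4$ is forced to recycle the non-edge of $\mathcal{S}$ together with one of the two non-edges $A'C'$, $B'D'$ of $\mathcal{L}$ (or else to coincide with $\mathcal{L}$ and take a disk of $\mathcal{S}$ as the universal disk). The routine part is the inequality bookkeeping, which is parallel to the eight sub-cases summarised in Figure~\ref{fig:lem25}. The main obstacle I expect is showing that in every configuration one of the finitely many candidate completions really is a clique (equivalently, that in the monochromatic-spine sub-case the direct $C_4$-plus-universal-disk construction always goes through): the key leverage is that, since $A'$ and $C'$ (and $B'$ and $D'$) do not intersect while at least one diagonal of $\mathcal{S}$ is present, one of the disks of $\mathcal{S}$ on the half-axis carrying that diagonal is centred at least as close to the origin as $A'$ or $C'$ (respectively $B'$ or $D'$), and this closeness, propagated by the triangle inequality, pins down the edges needed to close the required clique; making this uniform across the cases is where the real work lies.
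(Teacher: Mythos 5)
Your strategy---manufacture a $K_4$ on at least three sides inside $\mathcal{L}\cup\mathcal{S}$ and then feed it to Lemma~\ref{lem:largerC4smallerK4} and Claim~\ref{clm:NONDISJOINTinducedK4diamond}---is not the paper's route, and the step it rests on fails. The paper argues directly: writing $\mathcal{S}=\{A,B,C,D\}$ with missing edge $v_bv_d$ (so the spine is $AC$), it observes that $x_a<x_{a'}$ or $x_c<x_{c'}$ must hold because $A,C$ intersect while $A',C'$ do not, and then runs an eight-way case split on the signs of $y_b-y_{b'}$, $x_c-x_{c'}$, $y_d-y_{d'}$, exhibiting in each case an explicit induced $W_4$ with $A$ or $C$ as the universal disk; no $K_4$ is ever produced (indeed a $W_4$ contains no $K_4$, so the existence of the $W_4$ gives you no $K_4$ for free either).

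The concrete gap is your completion step ``adjoin to a triangle of $\mathcal{S}$ the disk of $\mathcal{L}$ on its missing side.'' Take $\mathcal{L}$ with centers $(\pm 1.4,0)$, $(0,\pm 1.4)$ and $\mathcal{S}$ with centers $A=(0.1,0)$, $C=(-0.1,0)$, $B=(0,1.99)$, $D=(0,-1.99)$: both hypotheses hold, but $\{A,B,C\}$ cannot be completed by $D'$ since $y_b+y_{d'}=3.39>2$, and symmetrically $\{A,C,D\}$ cannot be completed by $B'$. A $K_4$ on three sides does exist here (e.g.\ $\{A,B,B',C\}$, i.e.\ the spine plus two $\YY^+$ disks), and one can show some such $K_4$ always exists; but it is not of the form you describe, and the resulting pair---a $K_4$ with two disks on one ray, sharing one or two disks with the induced $C_4$ in roles not anticipated by the lemmas---is not among the configurations actually treated in the case analyses of Lemma~\ref{lem:largerC4smallerK4}(ii) and Claim~\ref{clm:NONDISJOINTinducedK4diamond}, which fix both the side-distribution of the $K_4$ and which coordinates coincide. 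So the reduction cannot be invoked as a black box; patching it would require a case analysis of essentially the same size as the paper's direct one. Your closing structural observation is also not exhaustive: in the paper's case $\{A,C,A',B',D'\}$ the produced $C_4$ is $\{C,A',B',D'\}$, whose non-adjacent pairs are $\{C,A'\}$ and $\{B',D'\}$---it neither recycles the non-edge $v_bv_d$ of $\mathcal{S}$ nor coincides with $\mathcal{L}$, since one of its non-edges is a ``new'' one between a diamond disk and an $\mathcal{L}$-disk.
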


	\begin{proof}
		Let $\mathcal{L}=\{A',B',C',D'\}$ and $\mathcal{S}=\{A,B,C,D\}$ s.t. $A \neq A' \in \XX^+$, $B \neq B' \in \YY^+$, $C \neq C' \in \XX^-$, and $D \neq D' \in \YY^-$, $U$ and $U'$ intersect for each $U \in \{A,B,C,D\}$ clearly by Lemma~\ref{lem:connectedInduced}, and $B$ and $D$ do not intersect, i.e. $v_bv_d$ is the missing edge of $S$. Then, at least one of $x_a < x_{a'}$ or $x_c < x_{c'}$ holds since otherwise, $A'$ and $C'$ are also adjacent. Assume that $x_a < x_{a'}$. Now, at least one of the following is an induced $W_4$ having $A$ or $C$ as its universal disk listed first in the corresponding induced $W_4$s.
		
		\begin{itemize}
			\item \textbf{$\boldsymbol{y_b <y_{b'}}$, $\boldsymbol{x_c < x_{c'}}$, $\boldsymbol{y_d < y_{d'}}$:} If $x_{a'}+x_c >2$, $\{A,B,C,D,A'\}$, and otherwise, $\{C,A',B',C',D'\}$.
			
			\item \textbf{$\boldsymbol{y_b <y_{b'}}$, $\boldsymbol{x_c > x_{c'}}$, $\boldsymbol{y_d < y_{d'}}$:} $\{A,A',B',C',D'\}$.
			
			\item \textbf{$\boldsymbol{y_b >y_{b'}}$, $\boldsymbol{x_c < x_{c'}}$, $\boldsymbol{y_d < y_{d'}}$:} If $x_{a'}+x_c >2$, $\{A,C,A',B',D'\}$, and otherwise, $\{C,A',B',C',D'\}$. 
			
			\item \textbf{$\boldsymbol{y_b <y_{b'}}$, $\boldsymbol{x_c < x_{c'}}$, $\boldsymbol{y_d > y_{d'}}$:} Analogous to the previous, up to symmetry.
			
			\item \textbf{$\boldsymbol{y_b >y_{b'}}$, $\boldsymbol{x_c < x_{c'}}$, $\boldsymbol{y_d > y_{d'}}$:} If $x_{a'}+x_c >2$, $\{A,C,A',B',D'\}$, and otherwise, $\{C,A',B',C',D'\}$.
			
			\item \textbf{$\boldsymbol{y_b >y_{b'}}$, $\boldsymbol{x_c > x_{c'}}$, $\boldsymbol{y_d < y_{d'}}$:} $\{A,A',B',C',D'\}$.
			
			\item \textbf{$\boldsymbol{y_b <y_{b'}}$, $\boldsymbol{x_c > x_{c'}}$, $\boldsymbol{y_d > y_{d'}}$:} Analogous to the previous, up to symmetry. 
			
			\item \textbf{$\boldsymbol{y_b >y_{b'}}$, $\boldsymbol{x_c > x_{c'}}$, $\boldsymbol{y_d > y_{d'}}$:} $\{A,A',B',C',D'\}$.

		\end{itemize}
		
		Thus, the claim holds. Note also that, if the universal vertex of such a $W_4$ is removed, then the mentioned diamond disappears.\qed
	\end{proof}

	\begin{figure} [h!]
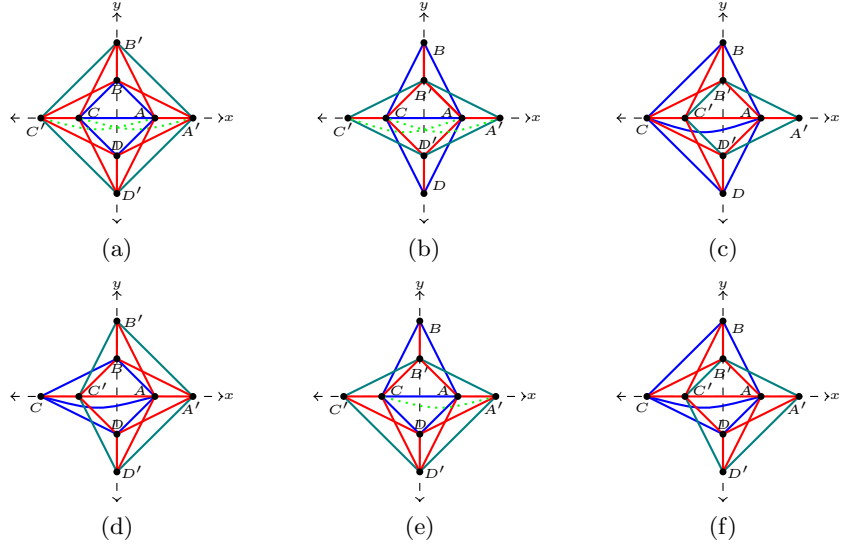

		\captionsetup[subfigure]{position=b}	
		\centering
		
		\begin{subfigure}[t]{0.30\linewidth}
			\centering

			\label{fig:case6C4diamondW4}
		\end{subfigure}
		
		\caption{The possible cases in the setting of Claim~\ref{clm:inducedC4diamond}. }
		\label{fig:chordalS4x}
	\end{figure}

	Figure~\ref{fig:chordalS4x} shows the possible cases investigated in the proof of Claim~\ref{clm:inducedC4diamond} where the black nodes correspond to disk centers, the blue edges exist in the induced diamond, the teal edges exist in the induced $C_4$,  the red edges must exist due to triangle inequalities by the teal edges, and the dotted green edges illustrate the possible additional edges.

	\begin{clm}\label{clm:NONDISJOINTinducedC4diamond}
		Claim~\ref{clm:inducedC4diamond} also holds when the sets $\mathcal{L}$ and $\mathcal{S}$ have at most three disks in common. Moreover, they can not have more than three disks in common.
	\end{clm}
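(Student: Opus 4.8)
The plan is to follow the pattern of the proof of Claim~\ref{clm:NONDISJOINTinducedK4diamond}, separating the combinatorial question of how $\mathcal{L}$ and $\mathcal{S}$ can overlap from the geometric core, which is already carried out in the proof of Claim~\ref{clm:inducedC4diamond}. First I would settle the bound: if $\mathcal{L}$ and $\mathcal{S}$ had all four disks in common, the induced subgraph of $G$ on those disks would be simultaneously a chordless $C_4$ (since $\mathcal{L}$ is an induced $C_4$) and a diamond $K_4-e$ (since $\mathcal{S}$ is an induced diamond), which is impossible because a diamond has a chord; hence $|\mathcal{L}\cap\mathcal{S}|\le 3$.

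For the remaining cases $1\le|\mathcal{L}\cap\mathcal{S}|\le 3$ I would adopt the labeling of the proof of Claim~\ref{clm:inducedC4diamond}: $\mathcal{L}=\{A',B',C',D'\}$ with $A'\in\XX^+$, $B'\in\YY^+$, $C'\in\XX^-$, $D'\in\YY^-$ by Lemma~\ref{lem:4cycle}, and $\mathcal{S}=\{A,B,C,D\}$ with missing edge $v_bv_d$ up to symmetry. The point is that any disk common to $\mathcal{L}$ and $\mathcal{S}$ lies on the same side from $\{\XX^+,\YY^+,\XX^-,\YY^-\}$ and at the same center in both configurations, so such a disk merely collapses one of the coordinate comparisons ($x_a$ vs.\ $x_{a'}$, $y_b$ vs.\ $y_{b'}$, $x_c$ vs.\ $x_{c'}$, $y_d$ vs.\ $y_{d'}$) that drive the case analysis in the proof of Claim~\ref{clm:inducedC4diamond} into an equality; moreover, some overlap patterns are already inconsistent (for instance, $\mathcal{L}$ and $\mathcal{S}$ cannot simultaneously share both endpoints of $\mathcal{L}$'s non-edge $A'C'$, since that would force $A\not\sim C$ while $A\sim C$ in the diamond $\mathcal{S}$), which trims the cases further. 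Running the very same case analysis with $\mathcal{L}\cup\mathcal{S}$ — now a configuration of five to seven disks — in place of the eight-disk configuration there, each branch still produces an induced $W_4$ on five disks of $\mathcal{L}\cup\mathcal{S}$ whose universal disk is $A$ or $C$; and, exactly as in Claim~\ref{clm:NONDISJOINTinducedK4diamond}, whenever $A=A'$ is shared and the comparison $x_{c'}>x_c$ fails, $A$ cannot intersect $C$, so $\{C,A',B',C',D'\}$ is already an induced $W_4$ with $C$ as its universal disk, dispatching the degenerate branches for free.

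The hard part will be bookkeeping rather than new geometry: I must check, in each overlap pattern, that the five disks handed back by the case analysis are genuinely distinct and span an induced $W_4$ in $G$ — in particular that the universal disk is not forced to coincide with one of the other four and that the chosen $C_4$ is chordless. Here the hypothesis that $\mathcal{S}$ uses at least three distinct sets, together with $\mathcal{S}\setminus\mathcal{L}\ne\emptyset$ and $\mathcal{L}\setminus\mathcal{S}\ne\emptyset$, is what keeps things clean: the non-shared disks either sit on a side not covered by the shared ones or on a shared side but at a strictly different distance from the origin, so the produced $W_4$ necessarily uses a disk of $\mathcal{L}$ and a disk of $\mathcal{S}$ and keeps its non-edges intact. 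Finally, I would record, as at the end of the proof of Claim~\ref{clm:inducedC4diamond}, that deleting the universal disk of this $W_4$ destroys the diamond, so the conclusion of Claim~\ref{clm:inducedC4diamond} transfers to the overlapping case verbatim.
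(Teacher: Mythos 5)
Your proposal matches the paper's proof in all essentials: the same argument that four common disks would make $\{A,B,C,D\}$ simultaneously a chordless $C_4$ and a diamond, the same observation that $\mathcal{L}$ and $\mathcal{S}$ cannot share both $A$ and $C$ because $v_av_c$ would be a chord of the induced $C_4$, and the same use of the fact that $A'=A$ forces $x_c<x_{c'}$ (else $A$ could not meet $C$) so that $\{C,A',B',C',D'\}$ is an induced $W_4$, with the symmetric cases ($B'=B$ and $D'=D$ shared, etc.) handled by re-running the coordinate comparisons of Claim~\ref{clm:inducedC4diamond}. The paper simply executes the bookkeeping you defer, reading off an explicit $W_4$ in each overlap pattern rather than re-entering the full case analysis; no substantive difference.
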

	
	\begin{proof}
		First of all, if $\mathcal{L}$ and $\mathcal{S}$ have four disks in common, then $\mathcal{L}=\mathcal{S}$ which can not be both an induced $C_4$ and $K_4$. Thus, they have at most three disks in common. Since $v_bv_d$ is the missing edge of $\mathcal{S}$, they can not have both $A$ and $C$ in common since otherwise $v_av_c$ is a chord in $\mathcal{L}$ which contradicts that $\mathcal{L}$ is an induced $C_4$. Moreover, if $A'=A$, then $C' \neq C$ and $x_c < x_{c'}$ since otherwise, $A$ and $C$ do not intersect, which means that $\{C,A',B',C',D'\}$ forms an induced $W_4$ with $C$ as its universal vertex. Thus, we only consider the following cases.
		
		Assume that they have three disks in common.  Then, up to symmetry, $A'=A$, $B'=B$, $D'=D$, and $\{C,A',B',C',D'\}$ is the induced $W_4$ mentioned above. 
		
		Assume that they have two disks in common. Then, up to symmetry, $A'=A$ and $B'=B$, or $B'=B$ and $D'=D$. If $A'=A$ and $B'=B$,  $\{C,A',B',C',D'\}$ is the induced $W_4$ mentioned above. Otherwise, $B'=B$ and $D'=D$, and at least one of $x_a < x_{a'}$ or $x_c < x_{c'}$, say $x_a < x_{a'}$, holds since $A$ and $C$ intersect, thus $\{A,A',B'C',D'\}$ forms an induced $W_4$ with $A$ as its universal vertex.
		
		Assume that they have one disk in common. Then, up to symmetry, $A'=A$ or $B'=B$. If $A'=A$,  $\{C,A',B',C',D'\}$ is the induced $W_4$ mentioned above. Otherwise, $B'=B$ and at least one of $x_a < x_{a'}$ or $x_c < x_{c'}$, say $x_a < x_{a'}$, holds since $A$ and $C$ intersect, thus $\{A,A',B'C',D'\}$ forms an induced $W_4$ with $A$ as its universal vertex.\qed 
	\end{proof}

	\begin{corl}\label{corl:inducedW4C4}
		By Claim~\ref{clm:inducedC4diamond}, if $G$ is an $\APUD(1,1)$ which contains at least three disjoint induced $C_4$s, at least one induced diamond in those $C_4$s forms an induced $W_4$ with the addition of a disk from those $C_4$s.
	\end{corl}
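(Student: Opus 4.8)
The plan is to reduce the statement to Claim~\ref{clm:inducedC4diamond}: it suffices to produce, using disks of only two of the three given induced $C_4$s, a single induced diamond on disks from at least three of the sets $\XX^+,\YY^+,\XX^-,\YY^-$ that is disjoint from the third $C_4$. Claim~\ref{clm:inducedC4diamond} then turns this diamond, together with the third $C_4$, into an induced $W_4$ on disks of the three cycles, and deleting a suitable rim disk of that $W_4$ exhibits the induced diamond (and the one extra disk) promised by the corollary. The one idea needed is that such a diamond always sits inside the union of the two cycles whose $\XX^+$-disks are innermost.

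First I would fix notation. Let $\mathcal{L}^{(1)},\mathcal{L}^{(2)},\mathcal{L}^{(3)}$ be three pairwise disjoint induced $C_4$s of $G$. By Lemma~\ref{lem:4cycle} write $\mathcal{L}^{(t)}=\{a_t,b_t,c_t,d_t\}$ with $a_t\in\XX^+$, $b_t\in\YY^+$, $c_t\in\XX^-$, $d_t\in\YY^-$, the centers being $(x_{a_t},0)$, $(0,y_{b_t})$, $(-x_{c_t},0)$, $(0,-y_{d_t})$ for positive reals. Relabel the cycles so that $x_{a_1}=\min_{t}x_{a_t}$ and let $\mathcal{L}^{(2)}$ be one of the remaining two. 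The edges $a_2b_2$ and $a_2d_2$ of $\mathcal{L}^{(2)}$ give $x_{a_2}^2+y_{b_2}^2\le 4$ and $x_{a_2}^2+y_{d_2}^2\le 4$, hence $x_{a_2}\le 2$; combined with $x_{a_1}\le x_{a_2}$ this yields $x_{a_1}^2+y_{b_2}^2\le 4$ and $x_{a_1}^2+y_{d_2}^2\le 4$, so $a_1$ intersects both $b_2$ and $d_2$, and since $a_1$ and $a_2$ lie on $x^+$ at distance $x_{a_2}-x_{a_1}<2$ they also intersect.

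Next I would read off the diamond. Together with the edges $a_2b_2$, $a_2d_2$ and the non-edge $b_2d_2$ of the induced $C_4$ $\mathcal{L}^{(2)}$, the three adjacencies just established show that $\mathcal{S}=\{a_1,a_2,b_2,d_2\}$ induces a diamond whose only missing edge is $b_2d_2$ (in particular it is a diamond, not a $K_4$), with its disks in the three distinct sets $\XX^+,\YY^+,\YY^-$. As $\mathcal{S}\subseteq\mathcal{L}^{(1)}\cup\mathcal{L}^{(2)}$, it is disjoint from $\mathcal{L}^{(3)}$, so $\mathcal{L}^{(3)}$ and $\mathcal{S}$ meet the hypotheses of Claim~\ref{clm:inducedC4diamond}, which produces an induced $W_4$ on five disks of $\mathcal{L}^{(3)}\cup\mathcal{S}\subseteq\mathcal{L}^{(1)}\cup\mathcal{L}^{(2)}\cup\mathcal{L}^{(3)}$.

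Finally I would match this to the corollary's phrasing. Writing the induced $W_4$ as an induced $C_4$ $(p,q,r,s)$ in cyclic order together with its universal disk $u$, the set $\{u,p,q,r\}$ induces a diamond (its unique non-edge being $pr$) lying among the disks of the three $C_4$s, and adding $s$ — again one of those disks — recovers the $W_4$, which is exactly what is claimed. I do not expect a real obstacle: the construction of $\mathcal{S}$ is elementary and local to $\mathcal{L}^{(1)}\cup\mathcal{L}^{(2)}$, needing no connectivity assumption or appeal to Lemma~\ref{lem:connectedInduced}. The only points needing a line of care — and the place where the argument could go wrong if done carelessly — are checking that $\mathcal{S}$ is genuinely a diamond (guaranteed by $b_2d_2$ being a non-edge of the induced $C_4$ $\mathcal{L}^{(2)}$), that it uses disks from at least three of the four sides (it uses $\XX^+$ twice, $\YY^+$, and $\YY^-$), and that it is disjoint from $\mathcal{L}^{(3)}$, these being exactly the three inputs Claim~\ref{clm:inducedC4diamond} requires.
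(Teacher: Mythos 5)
Your proposal is correct and follows the route the paper intends: the corollary is stated as an immediate consequence of Claim~\ref{clm:inducedC4diamond}, and your explicit construction of the diamond $\{a_1,a_2,b_2,d_2\}$ inside the two cycles whose $\XX^+$-disks are innermost, which is automatically disjoint from the third cycle, is precisely the reduction the paper leaves implicit. One caveat worth recording: your diamond meets only three of the four half-axes (using $\XX^+$ twice), which is permitted by the \emph{statement} of Claim~\ref{clm:inducedC4diamond} but is not the configuration treated in that claim's written proof (which fixes one diamond disk per half-axis), so any remaining gap here is inherited from the claim rather than introduced by your argument.
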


	\begin{lem}\label{lem:NOW4NOdiamond}
		If $G$ is a non-chordal $\APUD(1,1)$ which contains no induced $W_4$, then $G$ contains no induced diamond $\{A,B,C,D\}$ where $A \in \mathcal{X}^+, B \in \mathcal{Y}^+, C \in \mathcal{X}^-, D \in \mathcal{Y}^-$.
	\end{lem}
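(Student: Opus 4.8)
The plan is to assume such a diamond exists and extract a forbidden induced $W_4$ from it together with an induced $C_4$ that non-chordality must supply. So I would suppose for contradiction that $G$ contains an induced diamond $\mathcal{S}=\{A,B,C,D\}$ with $A\in\XX^+$, $B\in\YY^+$, $C\in\XX^-$ and $D\in\YY^-$. Following the paper's convention for four-cycles labelled on the four sides, $v_av_bv_cv_d$ is a $C_4$ and the missing chord of $\mathcal{S}$ is a diagonal; using the symmetry that exchanges the two axes we may assume it is $v_bv_d$, so that $A$ and $C$ intersect. In particular the disks of $\mathcal{S}$ lie on four, hence at least three, distinct sets from $\{\XX^+,\YY^+,\XX^-,\YY^-\}$, which is exactly the hypothesis required by Claim~\ref{clm:inducedC4diamond}.

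Next I would produce the induced $C_4$. Since $G$ is non-chordal it contains an induced cycle of length at least $4$, and by item \textbf{A1} of Corollary~\ref{corl:combine} every induced cycle of $G$ has length at most $4$; hence $G$ contains an induced $C_4$ $\mathcal{L}$ (this is essentially Remark~\ref{rem:chordalInducedC4}). By Lemma~\ref{lem:atLeastThree} each disk of $\mathcal{L}$ has both coordinates at most $2$, and the same bound holds for the disks of $\mathcal{S}$ because $A$ intersects $C$ forces $x_a+x_c\le 2$ while $A$ intersects $B$ and $C$ intersects $D$ force $x_a^2+y_b^2\le 4$ and $x_c^2+y_d^2\le 4$. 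Consequently any two disks of $\mathcal{L}\cup\mathcal{S}$ lying on the same side intersect; since $\mathcal{L}$ meets every side by Lemma~\ref{lem:4cycle} and so does $\mathcal{S}$, the sets $\mathcal{L}$ and $\mathcal{S}$ lie in a common connected component of $G$, so we may assume $G$ connected and the connectivity hypothesis hidden inside Lemma~\ref{lem:connectedInduced} — and therefore the claims below — is legitimately available.

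Finally I would split on whether $\mathcal{L}$ and $\mathcal{S}$ are vertex-disjoint. If they are disjoint, Claim~\ref{clm:inducedC4diamond} applies verbatim and produces an induced $W_4$ in $G$; if they share vertices, then by Claim~\ref{clm:NONDISJOINTinducedC4diamond} they share at most three disks and the conclusion of Claim~\ref{clm:inducedC4diamond} still holds, again producing an induced $W_4$. Either outcome contradicts the hypothesis that $G$ has no induced $W_4$, so no such diamond can exist, which proves the lemma. I expect the only delicate points to be the two pieces of bookkeeping already flagged — the reduction to the ``diagonal missing edge'' normal form via the axis symmetry, and checking that the chosen $C_4$ can be taken in the component of the diamond so Lemma~\ref{lem:connectedInduced} is applicable — since all of the genuinely geometric case analysis has been discharged inside Claims~\ref{clm:inducedC4diamond} and \ref{clm:NONDISJOINTinducedC4diamond}.
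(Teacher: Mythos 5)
Your proof is correct and follows essentially the same route as the paper: invoke non-chordality together with item \textbf{A1} of Corollary~\ref{corl:combine} to obtain an induced $C_4$, then apply Claim~\ref{clm:inducedC4diamond} (disjoint case) and Claim~\ref{clm:NONDISJOINTinducedC4diamond} (overlapping case) to extract a forbidden induced $W_4$. Your additional bookkeeping — normalising the missing chord to a diagonal and checking that the $C_4$ and the diamond share a connected component so that Lemma~\ref{lem:connectedInduced} applies — is a welcome tightening of details the paper leaves implicit, but it does not change the argument.
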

	\begin{proof}
		By Claim~\ref{clm:inducedC4diamond} and Claim~\ref{clm:NONDISJOINTinducedC4diamond}, if $G$ contains an induced $C_4$ which is on four disks belonging to distinct sets from $\{\XX^+,\YY^+,\XX^-,\YY^-\}$ by Lemma~\ref{lem:4cycle} and an induced diamond on disks belonging to at least three distinct sets from $\{\XX^+,\YY^+,\XX^-,\YY^-\}$, it contains an induced $W_4$ which is on disks belonging to distinct sets from $\{\XX^+,\YY^+,\XX^-,\YY^-\}$ since an induced $W_4$ contains an induced $C_4$.
		Since $G$ is not chordal, it contains an induced $C_4$. Then, $G$ does not contain an induced diamond $\{A,B,C,D\}$ where $A \in \mathcal{X}^+, B \in \mathcal{Y}^+, C \in \mathcal{X}^-, D \in \mathcal{Y}^-$ since otherwise, $G$ must contain an induced $W_4$.\qed
	\end{proof}

	On the other hand, if a graph does not contain an induced diamond, it does not contain an induced $W_4$ since a diamond is an induced subgraph of a $W_4$. Thus, we get the following.
	
	\begin{thm}\label{thm:nonCHOR-W4-iff-diamond}
		A non-chordal graph $G \in \APUD(1,1)$ contains an induced $W_4$ if and only if it contains an induced diamond on disks belonging to distinct sets from $\{\XX^+,\YY^+,\XX^-,\YY^-\}$.
	\end{thm}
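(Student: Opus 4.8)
The plan is to obtain Theorem~\ref{thm:nonCHOR-W4-iff-diamond} by combining two short arguments, one for each implication, with essentially all of the geometric content already supplied by the preceding lemmas and claims.

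For the forward direction, suppose $G$ contains an induced $W_4$ with rim $4$-cycle $\{A,B,C,D\}$ and universal disk $U$; this already forces $G$ to be non-chordal, since a $W_4$ contains an induced $C_4$. First I would apply Lemma~\ref{lem:4cycle} to the rim cycle, so that up to symmetry $A\in\XX^+$, $B\in\YY^+$, $C\in\XX^-$, $D\in\YY^-$, and the only non-edges among the rim disks are $\{A,C\}$ and $\{B,D\}$. Since $\{\XX^+,\YY^+,\XX^-,\YY^-\}$ partitions $V(G)$ in the given embedding and the four rim disks already occupy all four parts, $U$ shares its part with exactly one rim disk; by the $90^\circ$ rotational symmetry of the two axes we may take this to be $A$, i.e.\ $U\in\XX^+$, which is the setting of Corollary~\ref{corl:W4UNI}. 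Now delete that rim disk: in $G[\{U,B,C,D\}]$ the disk $U$ is adjacent to each of $B,C,D$, the pairs $\{B,C\}$ and $\{C,D\}$ are rim edges, and $\{B,D\}$ is the unique remaining non-edge, so $G[\{U,B,C,D\}]$ is an induced diamond whose four disks lie in the four pairwise-distinct sets $\XX^+,\YY^+,\XX^-,\YY^-$.

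For the reverse direction, suppose $G$ is non-chordal and contains an induced diamond $\{A,B,C,D\}$ with $A\in\XX^+$, $B\in\YY^+$, $C\in\XX^-$, $D\in\YY^-$. This is exactly the configuration excluded by Lemma~\ref{lem:NOW4NOdiamond} under the hypothesis that $G$ has no induced $W_4$; taking the contrapositive of that lemma shows that $G$ must contain an induced $W_4$. Combining the two implications gives the stated equivalence.

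I do not anticipate a genuine obstacle, since the heavy lifting has already been done: Lemma~\ref{lem:NOW4NOdiamond} handles the reverse direction (through Claim~\ref{clm:inducedC4diamond} and Claim~\ref{clm:NONDISJOINTinducedC4diamond}), and Corollary~\ref{corl:W4UNI} together with Lemma~\ref{lem:4cycle} pin down the positions of the $W_4$ disks for the forward direction. The only point worth spelling out carefully is the choice of which rim vertex of the $W_4$ to delete: one must delete the rim disk sharing a set with the universal disk, since otherwise the resulting diamond would repeat a set and fail the ``four distinct sets'' requirement, and the existence and uniqueness of that rim disk is immediate from the fact that the four rim disks already lie in the four distinct parts of the partition.
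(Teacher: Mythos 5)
Your proof is correct and follows essentially the same route as the paper: the reverse direction is the contrapositive of Lemma~\ref{lem:NOW4NOdiamond}, and the forward direction extracts an induced diamond from the $W_4$. You are in fact slightly more careful than the paper, which only remarks that a diamond is an induced subgraph of a $W_4$; your step of deleting the rim disk that shares its part of $\{\XX^+,\YY^+,\XX^-,\YY^-\}$ with the universal disk is exactly what is needed to guarantee that the resulting diamond lies on four distinct sets.
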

	
	\begin{lem}\label{thm:nonCHOR-noW4/diamond-noK4}
		If $G$ is a non-chordal $\APUD(1,1)$ which contains no induced $W_4$ or an induced diamond, then it contains no $K_4$ on disks belonging to at least three distinct sets from $\{\XX^+,\YY^+,\XX^-,\YY^-\}$.
	\end{lem}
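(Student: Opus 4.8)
The plan is to argue by contradiction, reducing everything to the structure theorem for an induced $C_4$ sitting next to a $K_4$. Fix an $\APUD(1,1)$ embedding $\Sigma(G)$; with respect to it the sets $\XX^+,\YY^+,\XX^-,\YY^-$ are defined, and throughout I read quadrant membership off this one fixed $\Sigma(G)$. As in the rest of this section I take $G$ connected (otherwise apply the statement to the component carrying the hypothetical $K_4$). Since $G$ is non-chordal, item \textbf{A1} of Corollary~\ref{corl:combine} together with Remark~\ref{rem:chordalInducedC4} gives an induced $C_4$ in $G$; fix one and call it $\mathcal{L}$. Now suppose, toward a contradiction, that $G$ also contains a $K_4$ $\mathcal{S}$ whose disks lie in at least three of the sets $\XX^+,\YY^+,\XX^-,\YY^-$.

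The first step is to bound the overlap $|\mathcal{L}\cap\mathcal{S}|$. They cannot share three or four disks: sharing all four would force $\mathcal{L}=\mathcal{S}$, impossible since one is an induced $C_4$ and the other a $K_4$; sharing exactly three would require those three disks to induce simultaneously a $K_3$ (inside the $K_4$ $\mathcal{S}$) and a $P_3$ (any three vertices of an induced $C_4$ induce a path), which is absurd. This is precisely the final clause of Claim~\ref{clm:NONDISJOINTinducedK4diamond}, so $|\mathcal{L}\cap\mathcal{S}|\le 2$.

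The conclusion is then immediate. Whether $\mathcal{L}$ and $\mathcal{S}$ are disjoint or meet in one or two disks, the hypotheses of Lemma~\ref{lem:largerC4smallerK4} are met (an induced $C_4$, a $K_4$ on at least three distinct sets, in an $\APUD(1,1)$ embedding), and — extended to overlaps of size at most two by Claim~\ref{clm:NONDISJOINTinducedK4diamond} — it produces an induced $W_4$ inside $\mathcal{L}\cup\mathcal{S}$, with a disk of $\mathcal{S}$ as its universal disk. This contradicts the assumption that $G$ has no induced $W_4$, so no such $K_4$ can exist, which is the lemma. I would also remark that the ``no induced diamond'' hypothesis is not strictly needed here, since ``no induced $W_4$'' already does the work; it is the equivalent formulation guaranteed by Theorem~\ref{thm:nonCHOR-W4-iff-diamond}.

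The main obstacle is not a genuine difficulty but a matter of making sure the cited results are literally applicable: one must check that the induced $C_4$ furnished by non-chordality and the $K_4$ furnished by the contradiction hypothesis lie in the same connected component, so that Lemma~\ref{lem:connectedInduced} (invoked inside the proof of Lemma~\ref{lem:largerC4smallerK4}) is available, and that quadrant membership is taken consistently from a single embedding. Once these points are pinned down, the argument collapses to a one-line contradiction.
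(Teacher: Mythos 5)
Your proof is correct and follows essentially the same route as the paper's: both reduce to the ``no induced $W_4$'' hypothesis, obtain an induced $C_4$ from non-chordality, and then invoke Lemma~\ref{lem:largerC4smallerK4} together with Claim~\ref{clm:NONDISJOINTinducedK4diamond} to manufacture an induced $W_4$ from the hypothetical $K_4$, yielding the contradiction. Your added care about the overlap bound and about working in a single connected component and embedding only makes explicit what the paper leaves implicit.
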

	
		\begin{proof}
		By Theorem~\ref{thm:nonCHOR-W4-iff-diamond}, $G$ contains an induced $W_4$ if and only if it contains an induced diamond on disks belonging to at least three distinct sets from $\{\XX^+,\YY^+,\XX^-,\YY^-\}$. Thus, we prove our lemma only considering the fact that $G$ contains no induced $W_4$. By Claim~\ref{lem:largerC4smallerK4} and Claim~\ref{clm:NONDISJOINTinducedK4diamond}, if $G$ contains an induced $C_4$ which is on four disks belonging to distinct sets from $\{\XX^+,\YY^+,\XX^-,\YY^-\}$ by Lemma~\ref{lem:4cycle} and a $K_4$ on disks belonging to at least three distinct sets from $\{\XX^+,\YY^+,\XX^-,\YY^-\}$, it contains an induced $W_4$ which is on disks belonging to distinct sets from $\{\XX^+,\YY^+,\XX^-,\YY^-\}$ since an induced $W_4$ contains an induced $C_4$.
		Since $G$ is not chordal, it contains an induced $C_4$. Then, $G$ does not contain a $K_4$ on disks belonging to at least three distinct sets from $\{\XX^+,\YY^+,\XX^-,\YY^-\}$.\qed
	\end{proof}

	\subsection*{The recognition algorithm for $\boldsymbol{\APUD(1,1)}$}

	Given a connected graph $G$, we decide whether $G$ is an $\APUD(1,1)$ as follows: 
	
	\begin{enumerate}
		\item If $G$ is chordal, then use Algorithm~\ref{alg:chordalapud} and return its result. Here, we also emphasize that an $\APUD(1,1)$ which can be realized on $x$-axis and only one side of the $y$-axis, say $y+$, is already a chordal graph by Lemma~\ref{lem:4cycle}.

		\item Let $\mathcal{L}$ denote the set of all vertices appearing in induced \replace{$4$-cycles}{$C_4$s} of the input graph $G$, identified in polynomial time by Remark~\ref{rem:allInducedinPoly}.

		\item If $G - \mathcal{L}$ is not an $S_4$-graph, return that $G$ is not an $\APUD(1,1)$ by Lemma~\ref{lem:Sd}.

		\item If $G[\mathcal{L}]$ contains no induced $W_4$, and it is not a Helly circular-arc graph, return that $G \not\in \APUD(1,1)$ by Lemma~\ref{lem:inducedCyclesHellyCARC} and Lemma~\ref{lem:NOW4NOdiamond}.

		\item \replace{Identify all induced $W_4$s of $G$ each having all its $C_4$ vertices from $\mathcal{L}$. Remove all their universal vertices from $\mathcal{L}$ obtaining $\mathcal{L}'$. If $G[\mathcal{L}']$ is not a Helly circular-arc graph, return that $G \not\in \APUD(1,1)$ by Lemma~\ref{lem:inducedCyclesH-GRAPH}.}{Identify all induced $W_4$s of $G$ each having all its $C_4$ vertices from $\mathcal{L}$. Let $\mathcal{U}$ be the set of all their universal vertices and  $\mathcal{L}'= \mathcal{L} \setminus \mathcal{U}$. If $G[\mathcal{L}']$ is not a Helly circular-arc graph, return that $G \not\in \APUD(1,1)$ by Lemma~\ref{lem:inducedCyclesH-GRAPH}.}

		\item \replace{Let $\XX$ denote the set of connected components of $G - \mathcal{L}'$. If $\vert \XX \vert > 5$, return that $G$ is not an $\APUD(1,1)$. (Lemma~\ref{lem:middleClique}, Claim~\ref{clm:inducedW4sinCenterClique}, Claim~\ref{clm:inducedK4diamond}, Claim~\ref{clm:inducedC4diamond}).}{Let $\Delta$ denote the set of connected components of $G - \mathcal{L}$. By the characterization item \textbf{A2} in Corollary~\ref{corl:combine}, if $G \in \APUD(1,1)$, it must hold that $\Delta = (\XX^+ \cup \YY^+ \cup \XX^- \cup \YY^-) \setminus \mathcal{L}$ and each component in $\Delta$ must be a unit interval graph. Therefore, if $\vert \Delta \vert > 4$ or a component in $\Delta$ is not a unit interval graph, return that $G \not\in \APUD(1,1)$.}
		
		\add{\item By Theorem~\ref{theo:linearlyMany}, if $G \in \APUD(1,1)$, it must have $\mathcal{O}(n^3)$ maximal cliques. Using the algorithm of \cite{DBLP:conf/swat/MakinoU04}, start listing the maximal cliques of $G$, and if the algorithm returns an $n^4$th maximal clique, terminate the algorithm, and return that $G \not\in \APUD(1,1)$.} 
			
		\add{\item Let $\mathcal{Q}$ denote the maximal cliques of $G[\mathcal{L}']$. By Lemma~\ref{lem:inducedCyclesHellyCARC}, $G[\mathcal{L}']$ is a Helly circular-arc graph, thus $\mathcal{Q}$ can be computed in polynomial time \cite{DBLP:conf/swat/MakinoU04}. }
		
		\add{\item By Lemma~\ref{lem:connectedInduced}, two disks appearing in induced $C_4$s intersect if they belong to the same set from $\{\XX^+, \YY^+, \XX^-, \YY^-\}$. Therefore, $\mathcal{Q}$ must contain four maximal cliques $C_1,C_2,C_3,C_4$ such that $\bigcup_{i=1}^{4} C_i = \mathcal{L}'$. If such maximal cliques do not exist, return that $G \not\in \APUD(1,1)$.}

		\add{\item Let $\mathcal{R}$ denote the maximal cliques of $G[\mathcal{U}]$. Step 7 witnesses that $G$, therefore its induced subgraph $G[\mathcal{U}]$, has at most $\mathcal{O}(n^3)$ maximal cliques. Thus, $\mathcal{R}$ can be computed in polynomial time \cite{DBLP:conf/swat/MakinoU04}. }
		
		\add{\item By Claim~\ref{clm:inducedW4sinCenter}, universal disks of induced $W_4$s belonging to the same set from $\{\XX^+, \YY^+, \XX^-, \YY^-\}$ form a clique. By Claim~\ref{clm:inducedW4sinCenterClique}, all universal disks forming induced $W_4$s with the same induced $C_4$ form a clique. By Lemma~\ref{lem:atLeastThree}, the non-zero coordinate of each disk appearing in an induced $C_4$ of an $\APUD(1,1)$ is at most $2$, and by Lemma~\ref{lem:largerC4smallerK4}, this also holds for universal disks. Therefore, $\mathcal{R}$ must contain four maximal cliques $S_1, S_2, S_3, S_4$ such that $\bigcup_{i=1}^{4} S_i = \mathcal{R}$ and $G[C_i \cup S_i]$ is a clique for $i \in \{1,2,3,4\}$. If such maximal cliques do not exist, return that $G \not\in \APUD(1,1)$.} 
		 
		\add{\item Since there is a polynomial number of maximal cliques in both $\mathcal{L}'$ and $\mathcal{R}$, there is a polynomial number of ordered 4-tuples of maximal cliques $(C_1,C_2,C_3,C_4)$ and $(S_1, S_2, S_3, S_4)$. }
		
		\add{\item Let $\Delta_1,\Delta_2,\Delta_3,\Delta_4$ be the (possible empty) connected components in $\Delta$. There is a constant number of ordered 4-tuples $(\Delta_1,\Delta_2,\Delta_3,\Delta_4)$ among those sets. }
		
		\add{\item Looping over all three ordered 4-tuples $(C_1,C_2,C_3,C_4)$, $(S_1, S_2, S_3, S_4)$ and $(\Delta_1,\Delta_2,\Delta_3,\Delta_4)$, we determine if $G \in \APUD(1,1)$ as follows:}
		\begin{enumerate}
			\item If, for all $i \neq j \in \{1,2,3,4\}$, $G[\Delta_i \cup C_i \cup S_i \cup S_j \cup C_j \cup \Delta_j] \in \UIG$ with its maximal cliques appearing in that order, return that $G \in \APUD(1,1)$.
		\end{enumerate}
	
		\item Return that $G \not\in \APUD(1,1)$.
	\end{enumerate}

	\section{Conclusions and future work} \label{sec:conc}
		In this paper, we studied the base case of $\APUD(k,m)$ recognition 
		which is an NP-hard problem.
		By the properties of Helly cliques and unit interval graphs, we showed that given a simple graph $G$, we can tell in polynomial time whether $G \in \APUD(1,1)$, i.e., $G$ has an embedding $\Sigma(G)$ as disks onto two perpendicular lines. Note that our algorithm does not output an embedding despite recognizing an $\APUD(1,1)$ in polynomial time since it is unknown whether the center of every disk of an $\APUD(1,1)$ can have
		coordinates that are represented using polynomially many decimals. Therefore, we give the following:
		
		\begin{open}
			Given a graph $G \in \APUD(1,1)$, can we find an $\APUD(1,1)$ embedding of $G$ in polynomial time?
		\end{open}
		
		In \cite{CSRapud}, it was shown that $\APUD(k,0)$ recognition is NP-hard when $k \geq 3$. Therefore, we would like to consider $\APUD(2,0)$ recognition as future work. A graph $G \in \APUD(2,0)$ can be embedded on two horizontal lines. Let those horizontal lines be $y=i$ and $y=j$. Observe that if $\vert i-j \vert > 2$, then we deal with a disconnected unit interval graph which can be recognized in linear time~\cite{DBLP:journals/ipl/Keil85}. Therefore, we consider $\APUD(2,0)_{d \leq 2}$ recognition problem which asks whether a graph $G$ is an $\APUD(2,0)$ where the distance between two horizontal lines witnessing that $G \in \APUD(2,0)$ is $d \leq 2$. We give the following which may be crucial to recognize an $\APUD(2,0)_{d \leq 2}$.
		
		\begin{clm}\label{clm:onlyOneAdjacent}
			Let $G$ be an $\APUD(2,0)_{d=2}$. Then, in any $\APUD(2,0)_{d=2}$ embedding of $G$, a disk $A$ with its center on $y=i$ for $i \in \{1,3\}$ intersecting a disk $B$ with its center on $y=j$ for $j \neq i$ intersects no other disk $C$ on $y=j$, and the centers of $A$ and $B$ have the same $x$-coordinate.
		\end{clm}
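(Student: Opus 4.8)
The plan is to reduce the statement to the elementary observation that two unit disks whose centers lie on horizontal lines at distance exactly $2$ can meet only in a degenerate way. First I would fix coordinates: by the symmetry in the statement we may assume the two horizontal lines are $y=1$ and $y=3$, so that $d=2$, and write the center of $A$ as $(a,1)$ and the center of $B$ as $(b,3)$. Since every disk in an $\APUD(2,0)$ has radius $1$, two disks intersect if and only if the Euclidean distance between their centers is at most $2$.

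Applying this to $A$ and $B$: they intersect, so $\sqrt{(a-b)^2+4}\le 2$, hence $(a-b)^2\le 0$, hence $a=b$. Thus $A$ and $B$ share the same $x$-coordinate, and moreover they are tangent, meeting only at the single point $(a,2)$ on the midline between the two lines. This is precisely the second assertion of the claim.

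For the first assertion, suppose $A$ also intersects a disk $C$ centered at $(c,3)$ on $y=j$. Running the same distance computation with $C$ in place of $B$ forces $c=a$, so $C$ has the same center as $B$. Under the standing assumption that distinct vertices are realized by disks with distinct centers (twins being collapsed onto a single disk), this yields $C=B$. Hence $A$ intersects no disk on $y=j$ other than $B$, and the claim follows.

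I do not expect a genuine obstacle here: the whole argument is the inequality $(a-b)^2+4\le 4$. The only point deserving a word of care is the last step, where one should state explicitly that no two distinct disks are placed at a common center, so that having the same $x$-coordinate on the same line really does mean being the same disk; this is where the no-twins convention is invoked.
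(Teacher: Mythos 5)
Your proof is correct and follows essentially the same route as the paper's: the single inequality $(a-b)^2+4\le 4$ forcing equal $x$-coordinates is exactly the paper's argument. You are somewhat more careful than the paper in spelling out the first assertion (that $C$ would have to coincide with $B$) and in flagging the no-twins convention needed there, which the paper leaves implicit.
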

		\begin{proof}
			Since two unit disks intersect if and only if the distance between their centers is at most $2$-units, the centers of $A$ and $B$ must have the same $x$-coordinate since the distance between their $y$-coordinates is exactly $2$-units. \qed
		\end{proof}
	
		Claim~\ref{clm:onlyOneAdjacent} directly gives the following.
	
		\begin{corl}\label{corl:induced}
			Let $G$ be an $\APUD(2,0)_{d=2}$. Then, any induced cycle $L$ of length at least $4$ of $G$ contains at least two disks with their centers on $y=1$ and at least two disks with their centers on $y=3$ in every $\APUD(2,0)_{d=2}$ embedding of $G$.
		\end{corl}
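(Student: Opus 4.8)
The plan is to argue by contradiction, leaning entirely on Claim~\ref{clm:onlyOneAdjacent} together with the elementary fact recalled in the Preliminaries that every $\APUD(1,0)=\UIG$ is chordal and hence contains no chordless cycle of length more than three. Fix an $\APUD(2,0)_{d=2}$ embedding of $G$ in which the two horizontal lines are $y=1$ and $y=3$, and let $L$ be an induced cycle of length $k\geq 4$ of $G$. By the symmetry between the two lines it suffices to prove that at least two disks of $L$ have their centers on $y=1$; the statement for $y=3$ follows by swapping the roles of the lines.

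First I would dispose of the case where \emph{no} disk of $L$ has its center on $y=1$. Then every disk of $L$ lies on $y=3$, so $G[V(L)]$ is an induced subgraph of the unit interval graph formed by the disks on $y=3$ (the $y$-coordinate being irrelevant to intersections among collinear disks), hence $G[V(L)]$ is chordal; this contradicts $L$ being an induced cycle of length at least four. Next I would treat the case where \emph{exactly one} disk $A$ of $L$ has its center on $y=1$, all other disks of $L$ lying on $y=3$. Since $L$ is a cycle, $A$ has exactly two neighbours $B$ and $B'$ in $L$, and both must lie on $y=3$ because $A$ is the only disk of $L$ on $y=1$. But $A$ intersects $B$, so by Claim~\ref{clm:onlyOneAdjacent} the disk $A$ intersects no other disk whose center is on $y=3$; in particular $A$ does not intersect $B'$, contradicting that $B'$ is a neighbour of $A$ in $L$. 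Therefore $L$ contains at least two disks with centers on $y=1$, and symmetrically at least two with centers on $y=3$, which gives the corollary.

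I do not expect a genuine obstacle here: all the geometric work is already absorbed into Claim~\ref{clm:onlyOneAdjacent}, and the remainder is a short case analysis on how many disks of $L$ can lie on one of the two lines. The only point requiring a moment's care is the second case, namely observing that once a single disk of $L$ sits on $y=1$ its two cyclic neighbours are forced onto $y=3$, so that Claim~\ref{clm:onlyOneAdjacent} applies verbatim and immediately yields the contradiction.
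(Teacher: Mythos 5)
Your proof is correct and follows the same route as the paper, which simply asserts that the corollary "directly" follows from Claim~\ref{clm:onlyOneAdjacent}; you have merely spelled out the two-case analysis (zero disks on one line contradicts chordality of unit interval graphs, exactly one disk contradicts the claim via its two cycle-neighbours) that the paper leaves implicit. No gaps.
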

		
		By definition, induced cycles of length at least $4$ are chordless which means that they do not contain induced cycles other than themselves. By Claim~\ref{clm:onlyOneAdjacent}, in an $\APUD(2,0)_{d=2}$, each pair of disks one with its center on $y=1$ and the other with its center on $y=3$ having the same $x$-coordinate $x_i$ can belong to at most two induced cycles, i.e. one induced cycle on disks with $x$-coordinates less than $x_i$, and another induced cycle on disks with $x$-coordinates more than $x_i$. Thus, by Corollary~\ref{corl:induced}, there is a linear number of induced cycles of length at least $4$ on the number of disks. Therefore, we give the following.
		
		\begin{conj}
			Given a graph $G$, one can decide whether $G \in \APUD(2,0)_2$ in polynomial time. 
		\end{conj}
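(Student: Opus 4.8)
The plan is to follow the template of the $\APUD(1,1)$ algorithm developed above: peel off the trivial and chordal cases, isolate the set $\mathcal{L}$ of vertices lying on long induced cycles as a rigid ``skeleton'', understand $G[\mathcal{L}]$ and $G-\mathcal{L}$ separately, and then show there are only polynomially many ways to glue the two parts, each testable by a linear program in the centre coordinates.

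First I would dispose of the easy cases. We may assume $G$ is connected (components are handled independently) and twin-free (true twins share a centre; the cross-edge type of false twin is handled by the same bookkeeping as in the $\APUD(1,1)$ algorithm). If $G$ is chordal it has no induced $C_k$ for $k\geq 4$, so by Corollary~\ref{corl:induced} every $\APUD(2,0)_{d\leq2}$ embedding is ``degenerate'': by Claim~\ref{clm:onlyOneAdjacent} the cross-edges form a matching whose matched pairs share an $x$-coordinate, and no vertex is forced onto a particular line. I would then give a dedicated argument, parallel to Lemma~\ref{lem:Sd} and Algorithm~\ref{alg:chordalapud}, showing that recognition of a chordal candidate reduces to finding a maximal clique whose removal leaves at most one unit interval graph per line, joined by a matching with aligned rungs --- which is polynomial-time checkable using the machinery of Section~\ref{sec:chordalpoly}.

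For non-chordal $G$, let $\mathcal{L}$ be the set of all vertices on an induced cycle of length at least $4$; then $G-\mathcal{L}$ is chordal by construction, and the discussion preceding the conjecture gives that there are only linearly many such cycles, each of which, in every embedding, is a two-rung ladder (a monotone run on one line, a matched cross pair, a monotone run on the other line, a second matched cross pair) by Claim~\ref{clm:onlyOneAdjacent}. I would first prove, by an analogue of Lemma~\ref{lem:components} and Corollary~\ref{corl:uigcomponents}, that $G-\mathcal{L}$ contributes in every embedding only unit interval graphs --- at most one on $y=1$ and one on $y=3$ outside the $x$-range spanned by $\mathcal{L}$, plus unit interval graphs nested strictly between consecutive rungs. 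The structural heart is then: two ladders that share a vertex or are linked by a cross edge must agree on which side sits on $y=1$, so the long induced cycles force a $2$-colouring of $\mathcal{L}$ up to one global flip per ``ladder block'', yielding only polynomially many bipartitions $V(G)=V_1\cup V_3$ that are simultaneously consistent with every long cycle and with the global requirement that the $V_1$--$V_3$ edges form a matching. For each candidate bipartition, $G[V_1]$ and $G[V_3]$ are unit interval graphs with a canonical vertex order per component (up to reversal), the matching rungs link these components into a bounded number of independent blocks, and realizability reduces --- after guessing the finitely many reversal bits and block-offset patterns --- to a linear feasibility system: $|x_u-x_v|\leq 2$ for same-line edges, $|x_u-x_v|\geq 2$ for same-line non-edges, $x_u=x_v$ for matched cross pairs, $(x_u-x_v)^2+d^2>4$ for cross non-edges, and $0<d\leq 2$. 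This is polynomially many linear programs, so the whole test runs in polynomial time.

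The main obstacle I expect is the structural claim that the number of bipartitions $V_1\cup V_3$ consistent with all long induced cycles and with the matching condition is polynomial: partitioning an arbitrary graph into two unit interval graphs is not known to be tractable, so everything rests on the long cycles plus the cross-matching being rigid enough to pin this partition down almost completely --- the analogue here of Theorem~\ref{theo:linearlyMany} in the $\APUD(1,1)$ development, and I expect it to require a careful case analysis of how two ladders can overlap or be joined. A secondary technical point is keeping the number of reversal and offset choices polynomial rather than exponential; I would handle this by first proving that the cross-matching rungs connect the unit interval pieces of each colour class into $O(1)$ blocks per ladder block, so that only boundedly many reversal/offset patterns survive before the linear-programming step.
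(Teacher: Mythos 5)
The paper does not prove this statement: it is stated as a \emph{conjecture}, supported only by Claim~\ref{clm:onlyOneAdjacent} and Corollary~\ref{corl:induced}, and is explicitly left as future work. So there is no proof in the paper to compare yours against, and the relevant question is whether your proposal closes the gap on its own. It does not. The decisive step in your plan is the claim that the long induced cycles, together with the requirement that cross-edges form a matching with aligned rungs, pin down the bipartition $V(G)=V_1\cup V_3$ up to polynomially many candidates. You yourself flag this as ``the main obstacle I expect,'' and nothing in your sketch establishes it. The problematic vertices are exactly those \emph{not} on any induced cycle of length at least $4$: after removing the skeleton $\mathcal{L}$, the chordal remainder can a priori be distributed between the two lines in exponentially many ways, and the analogue of Theorem~\ref{theo:linearlyMany} that would control this is precisely what is missing. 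Asserting that the partition is forced ``up to one global flip per ladder block'' is the theorem to be proved, not a reduction of it. Until that structural lemma is established, the proposal is a programme, not a proof.

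Two secondary issues. First, the feasibility system you describe is not linear as written: $(x_u-x_v)^2+d^2>4$ is quadratic when $d$ is a variable, and $|x_u-x_v|\geq 2$ for non-edges is a disjunction. If the conjecture is read as $d=2$ exactly (which is how Claim~\ref{clm:onlyOneAdjacent} is set up), the cross non-edge constraint degenerates to $x_u\neq x_v$ and the same-line constraints become linear once a vertex order is fixed, but you should commit to that reading and handle the strict inequalities and disequalities explicitly rather than calling the whole thing a linear program. Second, your treatment of the chordal case asserts that a reduction parallel to Algorithm~\ref{alg:chordalapud} works, but the geometry is different: on two parallel lines at distance $2$ the cross-adjacencies are a rigid matching rather than the four-ray star structure of Lemma~\ref{lem:Sd}, so the $S_4$-graph machinery does not transfer without a new argument.
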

	
		We conclude with the following problem which we would like to consider in the future.
	
		\begin{open}
			Given a graph $G$, can we decide whether $G \in \APUD(2,0)_{<2}$ in polynomial time?
		\end{open}

	
	\small
	\bibliographystyle{abbrv}
	\bibliography{bibliography}

\end{document}